\documentclass[authoryear,preprint,3p,review,12pt]{elsarticle}

\usepackage{amsthm}
\usepackage{amsmath}
\usepackage{adjustbox}
\usepackage{amssymb}
\usepackage{amsfonts}
\usepackage{multirow}
\usepackage{amsthm}
\usepackage{bbm}
\usepackage{booktabs}
\usepackage{setspace}
\usepackage{xcolor,float}
\usepackage{caption}
\usepackage{placeins}
\usepackage{subcaption}
\usepackage{bm}
\usepackage{longtable}
\usepackage{array}

\usepackage{graphicx}
\usepackage{natbib}
\usepackage{todonotes}
\usepackage{comment}
\newtheorem{theorem}{Theorem}
\newtheorem{lemma}{Lemma}
\newtheorem{proposition}{Proposition}
\newtheorem{corollary}{Corollary}

\theoremstyle{definition}

\newcommand{\R}{\mathbb{R}}

\newcommand{\E}{\mathbb{E}}
\newcommand{\F}{\mathbb{F}}

\newcommand{\1}{\mathbbm{1}}

\newtheorem{property}{Property}
\newcommand{\PropertyLabel}{} 
\newenvironment{PropertyNamed}[1]
{%
	\renewcommand{\PropertyLabel}{#1}%
	\begin{property}\label{prop:#1}\itshape
	}
	{%
	\end{property}%
}

\newif\ifshowupdates
\showupdatestrue 

\begin{document}
\makeatletter
\def\ps@pprintTitle{%
  \let\@oddhead\@empty
  \let\@evenhead\@empty
  \let\@oddfoot\@empty
  \let\@evenfoot\@oddfoot
}
\makeatother

\begin{frontmatter}



\author[label1]{Tommaso Lando\corref{c1}} 
\fntext[label1]{Department of Economics, University of Bergamo, via dei Caniana 2, 24127, Bergamo, Italy}
\ead{tommaso.lando@unibg.it}


\author[label1]{Lorenzo Tedesco} 
\ead{lorenzo.tedesco@unibg.it}


\title{Nonparametric Estimation via Expected Order Statistics}




\begin{abstract}
The empirical distribution function assigns mass $1/n$ to each of the $n$ observations in a sample. As these are highly variable, estimation error may be reduced by replacing them with estimated observations that are asymptotically less variable. Motivated by this idea, we introduce a nonparametric estimator obtained by assigning mass $1/m$ to $m$ estimated expected order statistics, with $m$ chosen arbitrarily. The estimator enjoys several finite-sample properties and yields a rich asymptotic theory. Its estimation error relative to its population counterpart is controlled by the $L^1$ error of the empirical distribution. Moreover, every $L$-functional of the new estimator corresponds to an $L$-functional of the empirical distribution with updated weights. We establish almost sure convergence in $L^p$ norm and Wasserstein distance as $n \to \infty$, and derive weak convergence of the associated empirical quantile process in $L^p(0,1)$, for $p\in[1,\infty)$ and $m$ fixed, and for $p=1,2$ as $n,m \to \infty$. These results yield asymptotic distributions for distance-based functionals, including $L^p$ and Wasserstein metrics. Bootstrap validity is also established. Simulations show that the estimator often improves on the empirical distribution and remains competitive with kernel methods, with more stable performance across different distributional settings.
\end{abstract}


\begin{keyword} { convergence \sep cumulative distribution function \sep empirical process \sep L-statistics \sep quantile process, Wasserstein}  




\end{keyword}

\end{frontmatter}

\section{Introduction}
Consider the classical setting in which $X$ is a random variable with cumulative distribution function (CDF) $F$, and we observe an \textit{iid} sample of size $n$ from $F$, namely $X_1,\dots,X_n$. A simple yet central tool in statistics is the empirical CDF (ECDF), defined by
$
\mathbb{F}_n(x)=\frac{1}{n}\sum_{i=1}^n \1\{X_i\le x\},
$
where $\1\{\cdot\}$ denotes the indicator function. This estimator enjoys many fundamental properties and is a cornerstone of nonparametric inference.

Denoting by $X_{1:n}\le...\le X_{n:n}$ the order statistics of the sample, one can equivalently write the ECDF as $ \mathbb{F}_n(x) = \frac{1}{n}\sum_{i=1}^n {\1}\{X_{i:n}\le x\}.$ However, each order statistic $X_{i:n}$ is itself a random variable, with finite expectation $\mu_{i:n} = \mathbb{E}[X_{i:n}]$ whenever $\mathbb{E}|X|<\infty$. This suggests a natural question: what happens to the ECDF if we replace each random order statistic $X_{i:n}$ by an estimator $\hat{\mu}_{i:n}$ of its expectation? In other words, instead of putting mass $1/n$ at the observed order statistics $X_{i:n}$, we may consider an averaged version of the ECDF, that puts mass $1/n$ at the estimated mean order statistics $\hat{\mu}_{1:n},\dots,\hat{\mu}_{n:n}$.

More generally, given a positive integer $m$, possibly different from the actual sample size $n$, we can construct a CDF that puts mass $1/m$ at estimators of the expectations of the order statistics that would occupy the ranks $j=1,\dots,m$ in a hypothetical sample of size $m$. This leads to a family of random distribution functions, indexed by $m$, of the form
\[
\mathbb{F}_{n,m}(x)
=
\frac{1}{m}\sum_{j=1}^m {\1}\{\hat{\mu}_{j:m}\le x\},
\qquad x\in\mathbb{R},
\]
where $\hat{\mu}_{j:m}$ is a suitable estimator, based on the sample $X_1,\dots,X_n$, of the expected $j$-th order statistic in a sample of size $m$. We propose $\mathbb{F}_{n,m}$ as a family of alternative nonparametric estimators of $F$.

To the best of our knowledge, $\F_{n,m}$ is a new estimator. The only related construction we are aware of is due to \citet{hoeffding1953}, who considered a deterministic version of $\mathbb{F}_{n,m}$, namely the family of CDFs that assigns mass $1/m$ to the true expected order statistics $\mu_{j:m}$. For this reason, we call this deterministic counterpart the Hoeffding CDF (HCDF), and we call $\F_{n,m}$ the empirical Hoeffding CDF (EHCDF). \cite{hoeffding1953} proved that the HCDF converges to the baseline CDF $F$, and that any moment of that distribution converges to the corresponding moment of $F$, whenever it exists. The work of \cite{hoeffding1953} gave rise to some related literature, mainly focused on mathematical characterisations via expected order statistics \citep{arnold1975,okolewski2024}, or on extensions of the convergence result, but still in a deterministic framework \citep{borisov2014,borisov2016}.

The objective of this paper is to introduce and study the EHCDF family of estimators, and to show that it provides a new basis for nonparametric inference. In Section~\ref{sec2}, we develop the mathematical structure underlying the method by introducing the Hoeffding CDF operator and its quantile counterpart, and deriving their main properties. In Section~\ref{sec3}, we define the proposed estimators by applying these operators to the ECDF and establish several finite-sample properties. In particular, the Hoeffding CDF operator is non-expansive in $L^1$, so that the distance between the EHCDF and its deterministic counterpart is controlled by the $L^1$ error of the ECDF; moreover, any $L$-functional of the EHCDF can be expressed as an $L$-functional of the ECDF with modified weights. As a consequence, the EHCDF preserves location, while measures of dispersion and skewness are deterministically reduced according to the choice of $m$. In Section~\ref{sec:asymptotic}, we establish a.s. consistency in $L^p$ norm and in $W_p$ distance, both for fixed $m$ and in the joint regime $n,m\to\infty$. In Section~\ref{sec5}, we study the associated quantile process, which provides a convenient route to asymptotic analysis. We establish weak convergence in $L^p(0,1)$, for every $p\in[1,\infty)$, when $m$ is fixed, and in $L^1(0,1)$ and $L^2(0,1)$, when $n,m\to\infty$, relying on the distributional conditions developed by \citet{beare2025necessary} and \citet{delbarrio}, respectively. Bootstrap validity is also established in $L^p(0,1)$, for fixed $m$, and in $L^1(0,1)$ in the joint regime. As a byproduct, asymptotic distributions for functionals of the quantile process, including distance-based functionals, are obtained in Section~\ref{sec6}. Finally, Section~\ref{sec7} contains a numerical study showing that the EHCDF often improves on the ECDF in estimation error, while remaining competitive with kernel-based alternatives and exhibiting a more stable behaviour across different distributional settings.

Overall, the EHCDF combines a simple construction, a tractable asymptotic theory through its quantile representation, and encouraging numerical performance. In contrast to kernel methods, whose asymptotic analysis depends on the specific setting and can be technically involved (see, for example, \cite{mason2011,mason2013} for kernel CDF estimators, and \cite{einmahl2005uniform,stupfler} for kernel density estimators), the EHCDF admits a more direct treatment while retaining properties that are sufficiently strong for statistical applications.

\section{The Hoeffding operators: definitions and main properties}\label{sec2}
\subsection{Notations}

We start with some basic notations. Given some CDF $G$, we define the quantile function as the left-continuous generalized inverse $G^{-1}(p)=\inf\{x:G(x)\geq p\}$. We will denote by $\|\cdot\|_p$ the $L^p$ norm, with $p \in[1,\infty]$. Given CDFs $F$ and $G$, the Wasserstein distance of order $p\geq1$ between $F$ and $G$ is defined as $W_p(F,G)=\big(\int_0^1|F^{-1}(t)-G^{-1}(t)|^pdt\big)^{1/p}$. The symbols $\to_{a.s.},$ $\to_p$ denote convergence almost surely and in probability, respectively. We use $\to_d$ to denote convergence in distribution for random variables and $\rightsquigarrow$ to denote weak convergence in function spaces. 
\subsection{The Hoeffding CDF operator}

We shall be concerned with some random variable $X$ with CDF $F$ and quantile function $Q=F^{-1}$. If $X$ has a density, it will be denoted by $f$. We will assume throughout the paper that $\E|X|<\infty$. We denote with $X_{j:m}$ the $j$-th order statistic, that is, the $j$-th smallest value in an  \textit{iid} sample of size $m$ drawn from $X$. It is well known that the CDF of $X_{j:m}$ is \( F_{B_{j:m}} \circ F\), where $F_{B_{j:m}}$ and $f_{B_{j:m}}$ denote the CDF and density function, respectively, of a beta distribution with parameters $j$ and $m-j+1$, $j=1,...,m$. Equivalently, an order statistic $X_{j:m}$ from $F$ has the same law of $F^{-1}(U_{j:m}),$ where $U_{j:m}$ denotes an order statistic from the uniform distribution on $[0,1].$

Given that the CDF of $X_{j:m}$ is $F_{B_{j:m}}\circ F$, we can express the expected order statistic $\E X_{j:m}$ by the following functional
\[
\mu_{j:m}(F)=\int_\R xdF_{B_{j:m}}\circ F(x)=\int_0^1 F^{-1}(p)dF_{B_{j:m}}(p).
\]
From the latter representation it is readily seen that the expected order statistics $\mu_{j:m}(F)$ belong to the family of L-functionals \citep{serfling}.

Denote by $\mathcal D$ the family of CDFs with finite mean, and by $\mathcal D_m$ the family of discrete CDFs with $m$ jumps. Define the \emph{Hoeffding CDF operator} $\mathcal H_m : \mathcal D \to \mathcal D_m$ as
\begin{align}\label{def:hoeffding_operator}
	\mathcal H_m(G)
	=
	\frac{1}{m}\sum_{j=1}^m \1\{\mu_{j:m}(G) \le \cdot\}
	=: G_m,
	\qquad G \in \mathcal D.
\end{align}
This operator enjoys a fundamental mathematical property that will be used in the sequel. We prove that $\mathcal H_m$ is an $L^1$-non-expansive operator, that is, the $L^1$ distance between a pair of $\mathcal H_m$-transformations is always bounded by the $L^1$ distance between the original CDFs. More generally, we establish the following bound, which holds uniformly for all $p\in[1,\infty)$.
\begin{theorem}\label{thm:Hm_L1_contraction}
	Let $F,G\in\mathcal D$. Then,
$\|\mathcal H_m (F)-\mathcal H_m (G)\|_p^p \le \|F-G\|_1$, for every $p\in[1,\infty)$.
\end{theorem}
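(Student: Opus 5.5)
The plan is to reduce to $p=1$ and then work entirely through quantile representations, exploiting the fact that the $L^1$ distance between two CDFs equals the Wasserstein-1 distance between them.

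\emph{Reduction to $p=1$.} Both $\mathcal H_m(F)$ and $\mathcal H_m(G)$ are CDFs, so they take values in $[0,1]$ and their difference satisfies $|\mathcal H_m(F)-\mathcal H_m(G)|\le 1$ pointwise. Hence $|\mathcal H_m(F)-\mathcal H_m(G)|^p\le |\mathcal H_m(F)-\mathcal H_m(G)|$ for every $p\ge1$. It follows that $\|\mathcal H_m(F)-\mathcal H_m(G)\|_p^p\le \|\mathcal H_m(F)-\mathcal H_m(G)\|_1$, and it suffices to prove the claim for $p=1$.

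\emph{Case $p=1$ via quantiles.} I will use the classical identity $\|F-G\|_1=\int_\R|F-G|\,dx=\int_0^1|F^{-1}(t)-G^{-1}(t)|\,dt=W_1(F,G)$, valid for CDFs with finite mean (area between the graphs, computed horizontally or vertically).

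\begin{enumerate}
\item \emph{Ordering of the atoms.} The points $\mu_{j:m}(F)$ are nondecreasing in $j$. This holds because $B_{j:m}$ is stochastically increasing in $j$, equivalently $U_{j:m}\le U_{j+1:m}$, and $F^{-1}$ is nondecreasing. The same is true for $G$.
\item \emph{Quantile function of $\mathcal H_m(F)$.} By the ordering, the quantile function of $\mathcal H_m(F)$ equals $\mu_{j:m}(F)$ on $((j-1)/m,\,j/m]$, and similarly for $G$. Therefore
\[
\|\mathcal H_m(F)-\mathcal H_m(G)\|_1=\frac1m\sum_{j=1}^m|\mu_{j:m}(F)-\mu_{j:m}(G)|.
\]
\item \emph{Bounding each term.} Using the $L$-functional representation of $\mu_{j:m}$,
\[
|\mu_{j:m}(F)-\mu_{j:m}(G)|\le\int_0^1|F^{-1}(t)-G^{-1}(t)|\,f_{B_{j:m}}(t)\,dt .
\]
\item \emph{Summing over $j$.} The key identity is
\[
\frac1m\sum_{j=1}^m f_{B_{j:m}}(t)=\sum_{j=1}^m\binom{m-1}{j-1}t^{j-1}(1-t)^{m-j}=1,
\]
which follows from $f_{B_{j:m}}(t)=m\binom{m-1}{j-1}t^{j-1}(1-t)^{m-j}$ and the binomial theorem. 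Exchanging sum and integral (Tonelli) then gives $\|\mathcal H_m(F)-\mathcal H_m(G)\|_1\le\int_0^1|F^{-1}-G^{-1}|\,dt=\|F-G\|_1$.
\end{enumerate}

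\emph{Main obstacle.} No step is deep. The points requiring care are the justification of the area identity $\|F-G\|_1=W_1(F,G)$, for which I would invoke the standard Fubini argument on the region between the two graphs, and the monotonicity of the atoms. The latter is what allows the $L^1$ distance between the two step CDFs to be written as the matched sum in step 2, rather than merely bounded by it. Finiteness of all quantities follows from $\E|X|<\infty$ for members of $\mathcal D$.
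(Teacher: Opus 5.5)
Your proof is correct and follows essentially the paper's route: bound by $\frac1m\sum_{j=1}^m|\mu_{j:m}(F)-\mu_{j:m}(G)|$, control each term by $\int_0^1|F^{-1}-G^{-1}|\,dF_{B_{j:m}}$, and sum using the partition of unity $\frac1m\sum_j f_{B_{j:m}}\equiv 1$. The only difference is the first step. The paper applies Jensen's inequality directly to the average of the $m$ indicator differences; each takes values in $\{-1,0,1\}$, so its $p$-th power equals its absolute value. This handles all $p$ at once and needs neither the ordering of the atoms nor the quantile form of $\|\mathcal H_m(F)-\mathcal H_m(G)\|_1$. You instead drop to $p=1$ via $|\mathcal H_m(F)-\mathcal H_m(G)|\le 1$ and then compute the $L^1$ distance exactly through the sorted atoms.
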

\noindent Many results derived in the sequel are based on the quantile-Hoeffding operator, defined in the next subsection.

\subsection{The quantile-Hoeffding operator}

Given a CDF $G\in\mathcal D$, its quantile function $G^{-1}$ belongs to $L^1(0,1)$.  
To study a quantile version of the Hoeffding operator, it is convenient to introduce the following class of step functions:
\[
\mathcal S_m
:=
\big\{
h:(0,1]\to\mathbb R:
h(u)=\sum_{j=1}^m a_j\,\1\{u\in(\tfrac{j-1}m,\tfrac jm]\},\ a_j\in\mathbb R
\big\}.
\]
We define the \emph{quantile-Hoeffding operator} as the map $
\mathcal I_m:L^1(0,1)\to \mathcal S_m\subset L^1(0,1),$
given by
\begin{align}\label{def:Im_operator}
	\mathcal I_m(h)(u)
	&:=
	\sum_{j=1}^m
	\left(
	\int_0^1 h(t)\, df_{B_{j:m}}(t)
	\right)
	\1\{u\in\left(\tfrac{j-1}{m},\, \tfrac{j}{m}\right]\},
	\qquad u \in (0,1].
\end{align}
That is, $\mathcal I_m(h)$ is the step function which takes the constant value $
\int_0^1 h(t)\,df_{B_{j:m}}(t)$ on each interval
$
\left(\frac{j-1}{m},\frac{j}{m}\right].$ If $G\in\mathcal D$ and $G_m=\mathcal H_m(G)$, then
$
\mathcal I_m(G^{-1})=(\mathcal H_m(G))^{-1}=G_m^{-1}.$ Thus, $\mathcal I_m$ is exactly the quantile counterpart of the CDF operator $\mathcal H_m$.

The quantile-Hoeffding operator enjoys some important properties that will be used to derive the main convergence results in the following sections. It is immediate that $\mathcal{I}_m(h)$ is linear in $h$, as this follows directly from the linearity of the integral. Moreover, it is a bounded operator, as shown in the following result, which in particular implies its continuity. 

\begin{lemma}\label{lemma:Im_contraction}
	For every fixed \(m\), the operator \( \mathcal I_m : L^p(0,1)\to L^p(0,1) \) is linear and bounded. Moreover, for all $ h\in L^p(0,1),$ $\|\mathcal I_m(h)\|_{p} \le \|h\|_{p}$, uniformly in $m$.
\end{lemma}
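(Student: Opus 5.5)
The plan is a direct Jensen-type argument. Throughout I read the integral in the definition of $\mathcal I_m$ as $\int_0^1 h(t)\, dF_{B_{j:m}}(t)=\int_0^1 h(t) f_{B_{j:m}}(t)\,dt$, i.e.\ $h$ is integrated against the beta$(j,m-j+1)$ probability density. This is the reading consistent with $\mathcal I_m(G^{-1})=G_m^{-1}$ and with the definition of $\mu_{j:m}(G)$.

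\textbf{Well-definedness and linearity.} For fixed $m$, each $f_{B_{j:m}}$ is a polynomial on $[0,1]$, hence bounded. So for $h\in L^p(0,1)\subset L^1(0,1)$ the coefficient
\[
a_j(h):=\int_0^1 h(t) f_{B_{j:m}}(t)\,dt
\]
is finite. Therefore $\mathcal I_m(h)$ is a bounded step function and lies in $L^p(0,1)$. Linearity follows at once from linearity of each $a_j(\cdot)$.

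\textbf{Norm bound.} Since $\mathcal I_m(h)$ is constant equal to $a_j(h)$ on an interval of length $1/m$,
\[
\|\mathcal I_m(h)\|_p^p=\frac1m\sum_{j=1}^m |a_j(h)|^p .
\]
Because $f_{B_{j:m}}$ is a probability density on $(0,1)$ and $x\mapsto|x|^p$ is convex for $p\ge1$, Jensen's inequality gives
\[
|a_j(h)|^p\le \int_0^1 |h(t)|^p f_{B_{j:m}}(t)\,dt .
\]
Summing over $j$ and exchanging the finite sum with the integral,
\[
\|\mathcal I_m(h)\|_p^p\le \int_0^1 |h(t)|^p\,\frac1m\sum_{j=1}^m f_{B_{j:m}}(t)\,dt .
\]

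\textbf{Key identity.} The crux is
\[
\frac1m\sum_{j=1}^m f_{B_{j:m}}(t)=1 \qquad \text{for all } t\in(0,1).
\]
To verify it, write $f_{B_{j:m}}(t)=m\binom{m-1}{j-1}t^{j-1}(1-t)^{m-j}$. The sum over $j$ is then $m$ times the binomial expansion of $(t+1-t)^{m-1}$, which equals $m$. Substituting this identity yields $\|\mathcal I_m(h)\|_p\le\|h\|_p$ with constant $1$, independent of $m$. This gives boundedness (operator norm at most $1$) and hence continuity.

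\textbf{Case $p=\infty$.} If needed, the same argument works with Jensen replaced by the trivial bound $|a_j(h)|\le\|h\|_\infty$.

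I do not expect a genuine obstacle. The only points needing care are interpreting the notation $df_{B_{j:m}}$ correctly and spotting the partition-of-unity identity for the Bernstein/beta densities, which is what makes the bound uniform in $m$.
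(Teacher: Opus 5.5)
Your proof is correct and follows essentially the same route as the paper: write $\|\mathcal I_m(h)\|_p^p=\frac1m\sum_{j}|a_j(h)|^p$, apply Jensen's inequality against each beta density, and conclude with the partition-of-unity identity $\frac1m\sum_{j=1}^m f_{B_{j:m}}\equiv 1$. Beyond the paper's argument, you also check that $\mathcal I_m(h)$ is well defined, verify the identity through the binomial expansion, and treat $p=\infty$; the paper leaves these points implicit.
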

\noindent Lastly, the following theorem plays a key role, as it allows one to transfer convergence in \(L^p\)-norm from an empirical process to its transformed version under \(\mathcal{I}_m\). 
\begin{theorem}\label{teo:Im_convergence}
	Let $p\in[1,\infty)$ and let $h_n,h\in L^p(0,1)$ be such that
	$
	\|h_n-h\|_p\to 0.$
	Then
	$
	\|\mathcal I_m(h_n)-h\|_p \to 0$, as $m,n\to\infty.$
\end{theorem}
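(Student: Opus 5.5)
Writing the error as the sum of a stochastic part and an approximation part, we get
\[
\|\mathcal I_m(h_n)-h\|_p \le \|\mathcal I_m(h_n-h)\|_p + \|\mathcal I_m(h)-h\|_p \le \|h_n-h\|_p + \|\mathcal I_m(h)-h\|_p .
\]
The first inequality uses linearity of $\mathcal I_m$, and the second uses Lemma~\ref{lemma:Im_contraction}, whose bound holds uniformly in $m$. The first term goes to zero as $n\to\infty$ regardless of how $m$ behaves. It therefore remains to prove the deterministic statement $\|\mathcal I_m(h)-h\|_p\to0$ as $m\to\infty$ for each fixed $h\in L^p(0,1)$. (In the definition of $\mathcal I_m$, the integral against $df_{B_{j:m}}$ is read as $\int_0^1 h(t) f_{B_{j:m}}(t)\,dt$, consistent with $\mu_{j:m}$.)

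For this we use a density argument, again relying on the uniform bound. Fix $\varepsilon>0$ and choose a continuous $g$ on $[0,1]$ with $\|h-g\|_p<\varepsilon$; such $g$ exist because continuous functions are dense in $L^p(0,1)$ for $p<\infty$. Then
\[
\|\mathcal I_m(h)-h\|_p\le \|\mathcal I_m(h-g)\|_p+\|\mathcal I_m(g)-g\|_p+\|g-h\|_p\le 2\varepsilon+\|\mathcal I_m(g)-g\|_p .
\]
So it suffices to show $\|\mathcal I_m(g)-g\|_\infty\to0$ for continuous $g$, since the sup-norm dominates the $L^p$ norm on $(0,1)$.

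This last step is the main obstacle, and it is a concentration property of the beta kernels. For $u\in((j-1)/m,j/m]$, write
\[
\mathcal I_m(g)(u)-g(u)=\E\big[g(B_{j:m})-g(u)\big],\qquad B_{j:m}\sim\mathrm{Beta}(j,m-j+1).
\]
The beta variable satisfies $\E B_{j:m}=j/(m+1)$ and $\mathrm{Var}(B_{j:m})=j(m-j+1)/((m+1)^2(m+2))\le 1/(4(m+2))$. Hence $|\E B_{j:m}-u|\le 1/m$, and Chebyshev gives $P(|B_{j:m}-u|>\delta)\le C/(m\delta^2)$ uniformly in $j$ once $m$ is large. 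Let $\omega$ be the modulus of continuity of $g$, which tends to zero because $g$ is uniformly continuous on $[0,1]$. Splitting the expectation according to whether $|B_{j:m}-u|\le\delta$ gives
\[
\sup_u|\mathcal I_m(g)(u)-g(u)|\le \omega(\delta)+2\|g\|_\infty\,C/(m\delta^2).
\]
Letting $m\to\infty$ and then $\delta\to0$ sends this to zero. This is essentially the Bernstein/Hoeffding-type uniform approximation argument, and it is the step where care is needed: the bound must be uniform in $j$, including the boundary ranks $j=1$ and $j=m$, and the variance bound above provides that uniformity.

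Combining the pieces, $\limsup\|\mathcal I_m(h_n)-h\|_p\le 2\varepsilon$ for every $\varepsilon>0$ along any joint limit $m,n\to\infty$, which proves the theorem.
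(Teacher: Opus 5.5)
Your proof is correct and follows essentially the paper's route. Like the paper, you use the triangle inequality with the $m$-uniform contraction of Lemma~\ref{lemma:Im_contraction} for the $h_n-h$ term, a density argument with continuous functions, and Chebyshev concentration of $U_{j:m}$ around $j/(m+1)$, made uniform in $j$ by $\mathrm{Var}(U_{j:m})\le 1/(4(m+2))$. The only difference is bookkeeping: you absorb the approximation error at the operator level by reusing the contraction lemma and then prove sup-norm convergence for continuous $g$, whereas the paper Jensen-bounds $\|\mathcal I_m(h)-h\|_p^p$ by a double integral against the doubly stochastic kernel $K_m(s,u)$ and uses its two marginal identities to control the approximation terms.
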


\section{The estimator: definitions and finite sample properties}\label{sec3}
In this paper we deal with the classic inference problem, and we assume that $F$ is unknown. By the plugin approach, we replace $F$ with the ECDF. Applying the Hoeffding operator of order $m$ to $\F_n$, we obtain a random CDF with $m$ jump points (excluding the case of a degenerate sample), represented as
\begin{align}\label{def:EHCDF}
	\mathcal H_m(\F_n)=\frac1m\sum_{j=1}^m \1(\mu_{j:m}(\F_n)\le \cdot)=:\F_{n,m}.    
\end{align}
We refer to $\F_{n,m}$ as the Empirical Hoeffding CDF (EHCDF) of order $m$.
Basically, $\F_{n,m}$ puts mass $1/m$ at the (random) points $\mu_{j:m}(\F_n)$, hereafter abbreviated as $\hat{\mu}_{j:m}$, where $j=1,...,m$. In particular, each $\hat{\mu}_{j:m}$ is an $L$-estimator, namely, a linear combination of order statistics, given by
\begin{align*}
	\hat{\mu}_{j:m}=\int_\R xdF_{B_{j:m}}\circ \F_n(x)=\sum_{i=1}^n X_{i:n}(F_{B_{j:m}}(\tfrac in)-F_{B_{j:m}}(\tfrac{i-1}n)).
\end{align*}
These estimators are consistent, as we will discuss in the sequel. \cite{lando2025new} recently developed a class of nonparametric tests based on these estimators. We can also express $\hat{\mu}_{j:m}$ as a weighted average, in which order statistics $X_{i:n}$ from $F$ are scaled by the probability that an order statistic $U_{j:m}$ from the uniform distribution on $[0,1]$ belongs to $(\tfrac{i-1}n,\tfrac in]$,
$\hat{\mu}_{j:m} =\sum_{i=1}^n X_{i:n}P(U_{j:m}\in(\tfrac{i-1}n,\tfrac{i}n]).$

Since $\F_{n,m}=\mathcal H_m(\F_n)$ and $F_m=\mathcal H_m(F)$, the following is an immediate consequence of Theorem~\ref{thm:Hm_L1_contraction}.
\begin{corollary} For every $p\in[1,\infty)$, it holds \( \|\F_{n,m}-F_m\|_p^p \le \|\F_n-F\|_1.\)
\end{corollary}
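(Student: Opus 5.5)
This corollary follows at once from Theorem~\ref{thm:Hm_L1_contraction}, so the plan is to check that its hypotheses hold for the pair $(\F_n,F)$ and then specialise. The first step is to verify that both arguments lie in $\mathcal D$. $F\in\mathcal D$ by the standing assumption $\E|X|<\infty$. $\F_n$ is a discrete CDF supported on the finitely many points $X_1,\dots,X_n$, so it has finite mean for every realisation of the sample; hence $\F_n\in\mathcal D$ pathwise.

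Next I apply the operator. By definition \eqref{def:EHCDF}, $\mathcal H_m(\F_n)=\F_{n,m}$, and by \eqref{def:hoeffding_operator}, $\mathcal H_m(F)=F_m$. Here $\mu_{j:m}(\F_n)=\hat\mu_{j:m}$ is the $L$-estimator written above as a finite weighted sum of order statistics, so it is well defined.

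I then invoke Theorem~\ref{thm:Hm_L1_contraction} with $G=\F_n$, for each fixed sample realisation $\omega$. This gives
\[
\|\F_{n,m}-F_m\|_p^p=\|\mathcal H_m(\F_n)-\mathcal H_m(F)\|_p^p\le\|\F_n-F\|_1
\]
for every $p\in[1,\infty)$. Since the inequality holds for every $\omega$, it holds surely, and not merely almost surely.

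There is no substantive obstacle; the only point needing care is the membership $\F_n\in\mathcal D$. I would also remark that the right-hand side $\|\F_n-F\|_1$ is finite: it equals $W_1(\F_n,F)$, and both $\F_n$ and $F$ have finite first moments. This makes the bound informative, and it is the form used later for the consistency results.
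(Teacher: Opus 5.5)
Your proposal is correct and matches the paper, which likewise obtains the corollary immediately from Theorem~\ref{thm:Hm_L1_contraction} via the identities $\F_{n,m}=\mathcal H_m(\F_n)$ and $F_m=\mathcal H_m(F)$. Your explicit check that $\F_n\in\mathcal D$ for every realisation, so that the bound holds surely, is a detail the paper leaves implicit.
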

To give an initial intuition about the roles of $n$ and $m$, we provide a graphical illustration in Figure~\ref{fig:matrix_nm}, for three samples of size $n\in\{10, 20, 50\}$ from a standard normal distribution with $m\in\{10,20,50\}$. We plot the EHCDF against the ECDF and the true CDF. As $m$ increases, the EHCDF gets deterministically closer to the ECDF (as samples are fixed), which, in turn, stochastically approaches the true CDF as $n$ increases. This behaviour will be addressed more precisely in the sequel.

\begin{figure}[h]
    \centering
    \includegraphics[width=\textwidth]{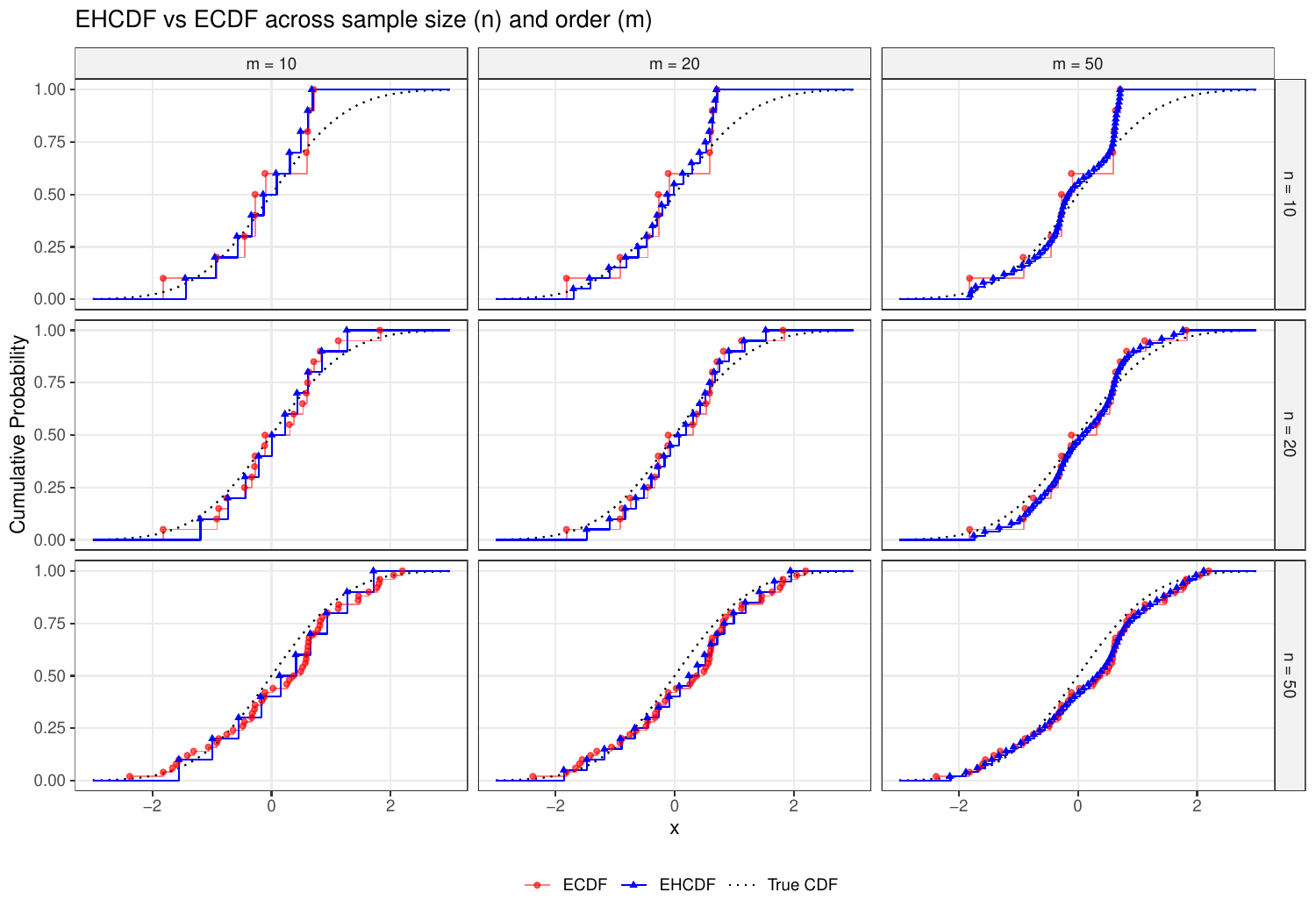}
    \caption{Graphical comparison of $\mathbb{F}_{n,m}$ (blue), $\mathbb{F}_n$ (red) and $F$ (dotted black).}
    \label{fig:matrix_nm}
\end{figure}

\subsection{Closure properties for $L$-functionals}

In this section we investigate the behaviour of $L$-functionals under the Hoeffding CDF operator. In particular, we show that the Hoeffding operator enjoys a closure property on the class of $L$-functionals: composing an $L$-functional with $\mathcal H_m$ yields another $L$-functional with an updated weight, which depends explicitly on the original weight and on $m$.

Let $w$ be a weight function and define a general $L$-functional $T_w:\mathcal D\to\R$ with
\[
T_w(G):=\int_0^1 G^{-1}(u)\,w(u)\,du,\qquad G\in\mathcal D, \quad G^{-1}w\in L^1(0,1).
\]
Moreover, let \( c_j(w):=\int_{(j-1)/m}^{j/m} w(t)\,dt\) for \( j=1,\dots,m, \) and define the linear operator $P_m:L^1(0,1)\to L^1(0,1)$ by
\[
P_{m}[w](p):=\sum_{j=1}^m f_{B_{j:m}}(p)c_j(w)=\sum_{j=1}^m f_{B_{j:m}}(p)\int_{(j-1)/m}^{j/m} w(t)\,dt,\qquad p\in[0,1].
\]
An important sub-class of $L$-functionals are the so-called \textit{$L$-moments} \citep{Hosking1990}, defined as \[
\lambda_r(G):=\int_0^1 G^{-1}(p)\,w_{r-1}(p)\,dp,\qquad G\in\mathcal D,
\]
where $r$ is a positive integer and \( w_r(u)=\sum_{k=0}^{r} (-1)^{\,r-k}\binom{r}{k}\binom{r+k}{k}\,u^{k}. \) The simplest example of $L$-moment is the mean functional, that is, $\lambda_1(G)=\mu(G),$ obtained when $w(p)=w_0(p)=1.$ Other important examples are the $L$-moments of orders $2,3,4$, corresponding to the weighting functions $w_1(p)=2p-1,$ $w_2(p)=6p^2-6p+1$ and $w_3(p)=20p^3-30p^2+12p-1$, respectively.
In particular, $\lambda_2$ is a variability functional, generally referred to as $L$-scale \citep{Hosking1990}. It corresponds to half the \textit{mean absolute difference} (MAD), defined as $\operatorname{MAD}(G)=2\lambda_2(G)=\iint |x-y|dG(x)dG(y)=\E|Y_1-Y_2|$, where $Y_1$ and $Y_2$ are \textit{iid}  with CDF $G$. Differently, $\lambda_3,$ and $\lambda_4$ are related to skewness and kurtosis, respectively. In fact,
\[
\operatorname{Sk}(G)=\frac{\lambda_3(G)}{\lambda_2(G)} \quad\text{and}\quad
\operatorname{Kurt}(G)=\frac{\lambda_4(G)}{\lambda_2(G)}
\]
are known as $L$-skewness and $L$-kurtosis \citep{Hosking1990}.

The following theorem establishes the closure of the family of $L$-functionals under the Hoeffding CDF operator $\mathcal H_m$. 
\begin{theorem}\label{thm:Hoeffding_Lfunctionals}
	For every $G\in \mathcal D$, $ T_w\bigl(\mathcal H_m(G)\bigr)=T_{P_m[w]}(G).$
\end{theorem}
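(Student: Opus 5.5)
The plan is to use the quantile representation of $\mathcal H_m(G)$ together with Fubini.

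First, recall from the discussion of the quantile-Hoeffding operator that $(\mathcal H_m(G))^{-1}=\mathcal I_m(G^{-1})$. This is a step function equal to $\mu_{j:m}(G)=\int_0^1 G^{-1}(p)\,dF_{B_{j:m}}(p)=\int_0^1 G^{-1}(p) f_{B_{j:m}}(p)\,dp$ on each interval $((j-1)/m,j/m]$. The $\mu_{j:m}(G)$ are nondecreasing in $j$, since the beta laws are stochastically ordered in $j$, so this step function is indeed the left-continuous quantile function of $G_m$. Substituting it into the definition of $T_w$ gives
\[
T_w(\mathcal H_m(G))=\int_0^1 \mathcal I_m(G^{-1})(u)\,w(u)\,du=\sum_{j=1}^m \mu_{j:m}(G)\int_{(j-1)/m}^{j/m} w(u)\,du=\sum_{j=1}^m c_j(w)\,\mu_{j:m}(G).
\]

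Second, expand each $\mu_{j:m}(G)$ as an integral against the beta density. Since the sum is finite, it can be exchanged with the integral:
\[
\sum_{j=1}^m c_j(w)\int_0^1 G^{-1}(p) f_{B_{j:m}}(p)\,dp=\int_0^1 G^{-1}(p)\sum_{j=1}^m c_j(w) f_{B_{j:m}}(p)\,dp=\int_0^1 G^{-1}(p)\,P_m[w](p)\,dp=T_{P_m[w]}(G).
\]

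The only genuine issue is integrability, i.e.\ that both sides are well defined. I would assume $w\in L^1(0,1)$, which is needed for $c_j(w)$ to be finite. Each $f_{B_{j:m}}$ is a polynomial and hence bounded on $[0,1]$, so $P_m[w]$ is bounded. Since $G\in\mathcal D$ gives $G^{-1}\in L^1(0,1)$, the product $G^{-1}P_m[w]$ is integrable, and each $\mu_{j:m}(G)$ is finite. The left-hand side only involves a bounded step function times $w\in L^1$, so it is finite as well. With every term finite, the interchange of the finite sum and the integral is justified. This bookkeeping is the main (mild) obstacle; the identity itself is a direct computation.
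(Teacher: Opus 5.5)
Your proposal is correct and follows the paper's proof essentially step for step. You write $(\mathcal H_m(G))^{-1}$ as the step function with values $\mu_{j:m}(G)$ to obtain $\sum_j c_j(w)\mu_{j:m}(G)$, then expand each $\mu_{j:m}(G)$ against $f_{B_{j:m}}$ and exchange the finite sum with the integral; your remarks on the monotonicity of $\mu_{j:m}(G)$ in $j$ and on integrability (with $w\in L^1(0,1)$, so that $P_m[w]$ is bounded) fill in details the paper leaves implicit.
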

\noindent Since $\F_{n,m} = \mathcal{H}_m(\F_n)$, the following corollary holds. 

\begin{corollary}\label{Lfunctionals}The following identity holds, $T_w\bigl(\F_{n,m}\bigr)=T_{P_m[w]}(\F_n).$ In particular:\begin{enumerate}
		\item $\mu(\F_{n,m})=\mu(\F_n)$;
		\item $\operatorname{MAD}(\F_{n,m})=\tfrac{m-1}m\operatorname{MAD}(\F_{n})$;
		\item $\operatorname{Sk}(\F_{n,m})=\tfrac{m-2}m\operatorname{Sk}(\F_{n})$;
		\item $\operatorname{Kurt}(\F_{n,m})=\tfrac{(m-2)(m-3)}{m^2}\operatorname{Kurt}(\F_{n})-\tfrac1{m^2}.$
	\end{enumerate}
\end{corollary}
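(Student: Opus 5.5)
The identity $T_w(\F_{n,m})=T_{P_m[w]}(\F_n)$ is Theorem~\ref{thm:Hoeffding_Lfunctionals} with $G=\F_n$, since $\F_{n,m}=\mathcal H_m(\F_n)$ and $\F_n\in\mathcal D$. So the work is in the four special cases. For each one we compute $P_m[w_{r-1}]$ for the $L$-moment weights $w_0,w_1,w_2,w_3$ and try to write it as a linear combination of $w_0,\dots,w_3$. Then $\lambda_r(\F_{n,m})$ becomes an explicit combination of the $\lambda_k(\F_n)$.

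\textbf{Strategy for computing $P_m$ on polynomials.} Use $\sum_{j=1}^m f_{B_{j:m}}(p)\,g(j)$. If we write $j-1=k$, then $f_{B_{j:m}}(p)=m\binom{m-1}{k}p^k(1-p)^{m-1-k}$, which is $m$ times a Binomial$(m-1,p)$ mass function. Hence
\[
P_m[w](p)=m\,\E\bigl[c_{K+1}(w)\bigr],\qquad K\sim\mathrm{Bin}(m-1,p).
\]
For a polynomial $w$ of degree $d$, $m\,c_{k+1}(w)$ is a polynomial of degree $d$ in $k$. Binomial factorial moments $\E[K(K-1)\cdots(K-s+1)]=(m-1)_s p^s$ then show that $P_m$ maps polynomials of degree $d$ to polynomials of degree at most $d$. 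So $P_m$ is upper triangular in the basis $\{w_r\}$.

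\textbf{Case by case.}
\begin{enumerate}
\item For $w_0=1$: $c_j=1/m$, and $\sum_j f_{B_{j:m}}=m$, so $P_m[w_0]=1$. This gives item 1 (the mean is preserved).
\item For $w_1$: $m c_{k+1}(w_1)=(2k+1)/m-1$. Taking the expectation gives $(2(m-1)p+1)/m-1=\tfrac{m-1}{m}(2p-1)$. Hence $\lambda_2(\F_{n,m})=\tfrac{m-1}{m}\lambda_2(\F_n)$, which is item 2.
\item For $w_2,w_3$: by the symmetry $w_r(1-u)=(-1)^r w_r(u)$, which $P_m$ preserves under $j\mapsto m+1-j$, $P_m[w_2]$ contains only $w_2$ and $w_0$, and $P_m[w_3]$ contains only $w_3$ and $w_1$.
\end{enumerate}
Orthogonality of the shifted Legendre polynomials $w_r$ on $[0,1]$ will then kill the $w_0$ component of $P_m[w_2]$. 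Indeed $\int_0^1 P_m[w](p)\,dp=\sum_j c_j(w)=\int_0^1 w=0$, because each beta density integrates to 1. So we expect $P_m[w_2]=a_m w_2$ and $P_m[w_3]=b_m w_3+e_m w_1$. Matching leading coefficients through the factorial moments should give $a_m=\tfrac{(m-1)(m-2)}{m^2}$ and $b_m=\tfrac{(m-1)(m-2)(m-3)}{m^3}$. The constant $e_m$ comes from the lower-order terms.

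\textbf{Ratios.} Item 3 follows from $\tfrac{a_m}{(m-1)/m}=\tfrac{m-2}{m}$. Item 4 follows from
\[
\frac{b_m\lambda_4+e_m\lambda_2}{\tfrac{m-1}{m}\lambda_2}=\frac{(m-2)(m-3)}{m^2}\operatorname{Kurt}(\F_n)+\frac{m\,e_m}{m-1}.
\]
For this to match the claim we need $e_m=-\tfrac{m-1}{m^3}$. The main obstacle is checking that value.

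To find $e_m$, the cleanest route is the $w_1$ coefficient of $P_m[w_3]$, namely $3\int_0^1 P_m[w_3]\,w_1\,dp$, using $\int_0^1 w_1^2=1/3$. Write this as $3\sum_j c_j(w_3)\int_0^1 f_{B_{j:m}}w_1$. Here $\int_0^1 f_{B_{j:m}}w_1=\tfrac{2j}{m+1}-1$, so the coefficient becomes a single finite sum of cubic polynomials in $j$. We would evaluate it with power sums, or verify it for $m=1,2,3,4$ and argue by polynomial degree in $m$. The checks $m=1$ (the kurtosis becomes $-1$, since $P_1[w_3]=\int_0^1 w_3=0$) and $m=2,3$ are useful sanity tests of the sign and normalization.
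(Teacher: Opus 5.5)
Your route is the paper's: the identity is Theorem~\ref{thm:Hoeffding_Lfunctionals} with $G=\F_n$, and items 1--4 come from expanding $P_m[w_{r-1}]$ in the weights $w_0,\dots,w_3$. The paper leaves that expansion implicit; your computations carry it out correctly, namely $P_m[w_0]=w_0$ and $P_m[w_1]=\tfrac{m-1}{m}w_1$ directly, and then, via symmetry, orthogonality and leading coefficients, $P_m[w_2]=\tfrac{(m-1)(m-2)}{m^2}w_2$ and $P_m[w_3]=b_m w_3+e_m w_1$ with $b_m=\tfrac{(m-1)(m-2)(m-3)}{m^3}$.

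The value you flag is right, $e_m=-\tfrac{m-1}{m^3}$. It follows in one line by evaluating at $p=0$, where only $f_{B_{1:m}}(0)=m$ survives, so $-b_m-e_m=m\,c_1(w_3)=\tfrac{5}{m^3}-\tfrac{10}{m^2}+\tfrac{6}{m}-1$. This avoids interpolating from $m=1,\dots,4$, which would need a degree bound you have not established.

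Note also that items 3--4 require $m\ge 2$, since $\lambda_2(\F_{n,1})=0$. So your $m=1$ check says nothing about the kurtosis.
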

Corollary~\ref{Lfunctionals} shows that $\F_{n,m}$ inherits a number of finite-sample identities from $\F_n$. In particular, $\F_{n,m}$ is mean-preserving. At the same time, measures of dispersion and shape are deterministically shrunk by factors depending only on $m$: the MAD is scaled by $(m-1)/m$, the $L$-skewness by $(m-2)/m$, while the $L$-kurtosis undergoes an affine transformation. Hence, for fixed $n$, smaller values of $m$ yield an estimator that is more concentrated (in MAD) and exhibits reduced skewness and tail heaviness (in $L$-moment ratios), whereas these effects vanish as $m$ grows.


\section{Convergence}\label{sec:asymptotic}

We now investigate the asymptotic properties of the estimator.  The structure of the section is related to the following simple observation. The error between the EHCDF and the true distribution may be decomposed into the stochastic error between $\F_{n,m}$ and $F_m$ and the deterministic error between $F_m$ and $F$, i.e.,
$$\F_{n,m}-F=(\F_{n,m}-F_m)+(F_m-F). $$
We separate the deterministic convergence $F_m\to F$ from the stochastic convergence $\F_{n,m}\to F_m$, and then combine the two to obtain consistency of $\F_{n,m}$.

\subsection{Deterministic convergence of $F_m$ to $F$}

Recall the definition of the Hoeffding CDF operator in \eqref{def:hoeffding_operator} and the (deterministic) CDF $\mathcal H_m(F)=F_m$. The following fundamental result holds.

\begin{theorem}[\cite{hoeffding1953}]\label{thm:hoeffding1953}
	Let $g$ be a continuous real-valued function such that $|g|\le h$ where $h$ is convex and $\E h(X)<\infty$. Then
	$
	\lim_{m\to\infty}\frac1m\sum_{j=1}^m g(\mu_{j:m}) = \E g(X).
	$
\end{theorem}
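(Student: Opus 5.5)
We would prove Hoeffding's theorem by working in the quantile domain. Let $Q=F^{-1}$ and let $Q_m=\mathcal I_m(Q)$ be the quantile function of $F_m$. If $U$ is uniform on $(0,1)$, then $\frac1m\sum_{j=1}^m g(\mu_{j:m})=\E g(Q_m(U))$, while $\E g(X)=\E g(Q(U))$. So the claim amounts to $\E g(Q_m(U))\to \E g(Q(U))$. We establish this in two steps: convergence of $Q_m(U)$ to $Q(U)$ in probability, and uniform integrability of $g(Q_m(U))$.

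\textbf{Convergence in probability.} Since $\E|X|<\infty$, we have $Q\in L^1(0,1)$. Apply Theorem~\ref{teo:Im_convergence} with $p=1$ and the constant sequence $h_n=h=Q$; this gives $\|Q_m-Q\|_1\to0$ as $m\to\infty$. Hence $Q_m(U)\to Q(U)$ in $L^1$, and therefore in probability. By continuity of $g$, also $g(Q_m(U))\to g(Q(U))$ in probability.

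\textbf{Uniform integrability.} We dominate $|g(Q_m(U))|$ by $h(Q_m(U))$ and control the latter with Jensen's inequality. On $\bigl(\tfrac{j-1}m,\tfrac jm\bigr]$ we have $Q_m=\mu_{j:m}=\int_0^1 Q(t)f_{B_{j:m}}(t)\,dt$, where $f_{B_{j:m}}$ is a probability density. Since $h$ is convex, Jensen gives
\[
h(\mu_{j:m})\le \int_0^1 h(Q(t))\,f_{B_{j:m}}(t)\,dt .
\]
Averaging over $j$ and using $\frac1m\sum_{j=1}^m f_{B_{j:m}}\equiv1$, we get
\[
\E h(Q_m(U))\le \int_0^1 h(Q(t))\,dt=\E h(X).
\]
More precisely, this says $h(Q_m(U))\le \mathcal I_m(h\circ Q)(U)$ pointwise, where $h\circ Q\in L^1(0,1)$. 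By Theorem~\ref{teo:Im_convergence} again, $\mathcal I_m(h\circ Q)\to h\circ Q$ in $L^1$.

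\textbf{Conclusion.} The nonnegative functions $|g(Q_m)|$ are dominated by $\mathcal I_m(h\circ Q)$, which converge in $L^1$. The generalized dominated convergence theorem (Pratt's lemma), combined with convergence in probability, then yields $\E g(Q_m(U))\to \E g(Q(U))$. The identity $\E\, \mathcal I_m(h\circ Q)(U)=\E h(X)$ holds exactly, because $\int\mathcal I_m(\phi)=\int\phi$.

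The main obstacle is the uniform integrability step. The key idea there is to use Jensen's inequality in the form of the pointwise bound $h(Q_m(U))\le \mathcal I_m(h\circ Q)(U)$, rather than only the bound on expectations, so that Pratt's lemma applies. Convexity of $h$ is used only in this step. Continuity of $g$ is what transfers convergence in probability from $Q_m(U)$ to $g(Q_m(U))$.
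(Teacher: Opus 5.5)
Your proof is correct, and it takes a genuinely different route from the paper. The paper does not prove this statement. It quotes \cite{hoeffding1953}, then applies the result with $g=\cos(tx),\sin(tx)$ and $h\equiv 1$ to obtain \eqref{eq:HCDF_pointwise} via characteristic functions.

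You instead derive the theorem from the paper's own operator machinery. Theorem~\ref{teo:Im_convergence}, applied with the constant sequence $h_n=Q$, gives $\|\mathcal I_m(Q)-Q\|_1=W_1(F_m,F)\to0$. The pointwise Jensen bound $h\circ\mathcal I_m(Q)\le \mathcal I_m(h\circ Q)$ holds because each $f_{B_{j:m}}$ is a probability density and $h\ge0$, which follows from $|g|\le h$. Together with $\mathcal I_m(h\circ Q)\to h\circ Q$ in $L^1(0,1)$, this gives a dominating sequence that converges in $L^1$ and is therefore uniformly integrable. Vitali's theorem, or Pratt's lemma along almost surely convergent subsequences, then finishes the argument.

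There is no circularity. The appendix proof of Theorem~\ref{teo:Im_convergence} uses only the partition of unity \eqref{eq:partition_of_unity}, density of $C[0,1]$ in $L^p(0,1)$, and Chebyshev's inequality for $U_{j:m}$. Your use of $\E|X|<\infty$ to place $Q$ in $L^1(0,1)$ is the paper's standing assumption, and it is needed anyway for the $\mu_{j:m}$ to exist.

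Your route makes Section~\ref{sec:asymptotic} self-contained and reverses its logical order: the stronger $W_1$ convergence comes first. As a result, \eqref{eq:HCDF_pointwise} follows without the characteristic-function detour. Lemma~\ref{lemma:deterministic_L1_convergence}, which the paper states without proof, also follows directly via $\|F_m-F\|_p^p\le\|F_m-F\|_1=\|Q_m-Q\|_1$. You also get $g\circ Q_m\to g\circ Q$ in $L^1(0,1)$, not just convergence of the averages. The citation is shorter and rests on a classical result that does not need the operator $\mathcal I_m$ at all.
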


As a consequence, taking $g(x)=\cos(tx)$ and $\sin(tx)$, the characteristic function of $F_m$ converges to that of the $F$, and hence
\begin{align}\label{eq:HCDF_pointwise}
	\lim_{m\to\infty} F_m = F,\quad \text{at every continuity point of $F$. }
\end{align}
By Polya's theorem, the convergence is uniform if $F$ is continuous. Differently, if $X$ is integrable, which is a necessary condition in our setting, $F_m$ converges not only pointwise to $F$ where the latter is continuous, as shown in \eqref{eq:HCDF_pointwise}, but also in $L^p$. 
\begin{lemma}\label{lemma:deterministic_L1_convergence}
	If $\E|X|<\infty$, then \( \lim_{m\to\infty} \|F_m - F\|_p = 0. \)
\end{lemma}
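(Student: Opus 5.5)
The plan is to reduce the statement to the case $p=1$, and then prove that case through the quantile representation. Since $F_m$ and $F$ are both CDFs, $0\le F_m,F\le 1$ and so $|F_m-F|\le 1$. Hence $\|F_m-F\|_p^p\le \|F_m-F\|_1$ for every $p\in[1,\infty)$, and it suffices to show $\|F_m-F\|_1\to 0$. (For $p=\infty$ the claim would require continuity of $F$ via Polya's theorem, so we read the statement for $p\in[1,\infty)$.)

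For $p=1$ we use the classical identity $\|G-H\|_1=\int_{\R}|G(x)-H(x)|\,dx=\int_0^1|G^{-1}(u)-H^{-1}(u)|\,du=W_1(G,H)$, valid for any two CDFs with finite mean. It follows by writing $|G-H|$ as the Lebesgue measure of the vertical section of the region between the two graphs and applying Fubini. Applied to $G=F_m$, $H=F$, and using $F_m^{-1}=\mathcal I_m(F^{-1})$ from the previous subsection, we obtain
\[
\|F_m-F\|_1=\|\mathcal I_m(F^{-1})-F^{-1}\|_1 .
\]
Since $\E|X|<\infty$, we have $F^{-1}\in L^1(0,1)$. We now apply Theorem~\ref{teo:Im_convergence} with $p=1$ and the constant sequence $h_n=h=F^{-1}$, for which $\|h_n-h\|_1=0$ trivially. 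It yields $\|\mathcal I_m(F^{-1})-F^{-1}\|_1\to 0$ as $m\to\infty$, which gives the claim.

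The only step needing care is the identity $\|G-H\|_1=W_1(G,H)$. Finiteness of the means guarantees that both sides are finite, and the left-continuity conventions affect only null sets. Should one prefer not to rely on Theorem~\ref{teo:Im_convergence}, an alternative route is a direct argument with Hoeffding's Theorem~\ref{thm:hoeffding1953}. Taking $g(x)=|x|$ (convex, and integrable under $F$) gives $\int|x|\,dF_m\to\int|x|\,dF$. Combined with the weak convergence \eqref{eq:HCDF_pointwise}, this convergence of first absolute moments is equivalent to $W_1(F_m,F)\to0$, by the standard characterisation of $W_1$ convergence as weak convergence plus convergence of first absolute moments. Either route yields the $L^1$ statement, and the bound $|F_m-F|\le1$ then transfers it to every $L^p$.
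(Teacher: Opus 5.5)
Your proof is correct. The paper gives no separate proof of this lemma. It states the result right after Hoeffding's theorem, as a consequence of the pointwise convergence \eqref{eq:HCDF_pointwise} plus integrability. Your second route makes exactly that remark precise: Hoeffding's theorem with $g(x)=|x|$ gives convergence of first absolute moments, which together with weak convergence is equivalent to $W_1(F_m,F)\to0$.

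Your first route instead assembles the result from ingredients the paper does prove and uses elsewhere:
\begin{enumerate}
\item the identity $\|F_m-F\|_1=\|\mathcal I_m(Q)-Q\|_1$, from the opening line of the proof of Lemma~\ref{lemma:HCDF_CDF_convergence_rate};
\item the step $\|\mathcal I_m(h)-h\|_p\to0$ as $m\to\infty$, inside the proof of Theorem~\ref{teo:Im_convergence};
\item the bound $\|F_m-F\|_p^p\le\|F_m-F\|_1$, from the proof of Theorem~\ref{theo:double_limit}.
\end{enumerate}
There is no circularity. Theorem~\ref{teo:Im_convergence} is proved by density of continuous functions and a Chebyshev bound on $U_{j:m}$, without this lemma.

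As for what each route buys: the first is self-contained within the paper and is the qualitative counterpart of the rate argument in Lemma~\ref{lemma:HCDF_CDF_convergence_rate}. The second leans on Hoeffding's classical theorem and an external characterisation of $W_1$-convergence.

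Restricting to $p\in[1,\infty)$ is the right reading. The paper only invokes the lemma for such $p$ (proof of Lemma~\ref{lemma:p_convergence}). For discontinuous $F$ the sup-norm version genuinely fails: for a Bernoulli$(1/2)$ law, every $\mu_{j:m}$ lies strictly in $(0,1)$, so $|F_m(0)-F(0)|=1/2$ for all $m$.
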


With regard to the $L^1$ norm, we can establish the rate of convergence of $F_m$. This will be useful later on when addressing weak convergence. The result is obtained under a sufficient condition which will appear again in the next sections.
\begin{lemma}\label{lemma:HCDF_CDF_convergence_rate}
	If $\int_0^1 \sqrt{u(1-u)}dQ(u)<\infty$, then \( \| F_m - F \|_1 = O(m^{-1/2}) \).
\end{lemma}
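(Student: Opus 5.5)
The plan is to move to quantiles, where $F_m$ is explicit. For any two CDFs with finite mean, $\|F_m-F\|_1=\int_0^1|F_m^{-1}(u)-Q(u)|\,du$, and $F_m^{-1}=\mathcal I_m(Q)$. On the cell $((j-1)/m,j/m]$ this quantile equals $\mu_{j:m}=\E Q(U_{j:m})$. So the task is to bound $\|\mathcal I_m(Q)-Q\|_1$.

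For $u$ in the $j$-th cell,
\[
|\mu_{j:m}-Q(u)|\le \E|Q(U_{j:m})-Q(u)|=\E\int \1\{s \text{ lies between } u \text{ and } U_{j:m}\}\,dQ(s).
\]
Integrating over $u$ and applying Tonelli gives
\[
\|F_m-F\|_1\le\int_0^1 K_m(s)\,dQ(s),\qquad K_m(s):=\sum_{j=1}^m\int_{(j-1)/m}^{j/m}P\bigl(s \text{ between } u \text{ and } U_{j:m}\bigr)\,du .
\]
It then suffices to show $K_m(s)\le C\sqrt{s(1-s)/m}$ for all $s\in(0,1)$, with $C$ an absolute constant. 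The hypothesis $\int_0^1\sqrt{u(1-u)}\,dQ(u)<\infty$ then yields the rate $O(m^{-1/2})$.

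To bound $K_m$, I will use the beta--binomial duality. Let $B\sim\mathrm{Bin}(m,s)$ and $k=\lfloor ms\rfloor$. For cells lying entirely below $s$ (that is, $j\le k$), one has $u<s$, so the probability is $P(U_{j:m}>s)=P(B<j)$. For cells lying entirely above $s$ (that is, $j\ge k+2$), the probability is $P(U_{j:m}\le s)=P(B\ge j)$. The single cell containing $s$ contributes at most $1/m$. Summing gives
\[
\sum_{j\le k}P(B<j)=\E(k-B)^+,\qquad \sum_{j\ge k+2}P(B\ge j)\le \E(B-k)^+ .
\]
Hence $K_m(s)\le \frac1m\bigl(\E|B-k|+1\bigr)\le \frac1m\bigl(\sqrt{ms(1-s)}+2\bigr)$, using Cauchy--Schwarz and $|ms-k|\le1$.

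The main obstacle is the additive $O(1/m)$ remainder. Since $\int dQ$ may be infinite for unbounded support, this term cannot simply be integrated against $dQ$; it must be absorbed into $\sqrt{s(1-s)/m}$. I will split into two cases.

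\textbf{Middle range.} When $\min(s,1-s)\ge 1/m$, we have $1/m\le\sqrt{s(1-s)/m}\cdot\sqrt2$, so the remainder is absorbed with a constant.

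\textbf{Tails.} When $s<1/m$, we have $k=0$ and I will bound $K_m$ directly. The cells $j\ge2$ give at most $\frac1m\E B=s$. The first cell splits into $u>s$, contributing $\le\frac1m P(B\ge1)\le s$, and $u<s$, an interval of length $s$, contributing $\le s$. Thus $K_m(s)\le 3s\le 3\sqrt{s/m}\le 3\sqrt2\sqrt{s(1-s)/m}$. The case $1-s<1/m$ is symmetric.

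Combining the cases gives $K_m(s)\le C\sqrt{s(1-s)/m}$ uniformly in $s$. This concludes $\|F_m-F\|_1\le C m^{-1/2}\int_0^1\sqrt{u(1-u)}\,dQ(u)=O(m^{-1/2})$.
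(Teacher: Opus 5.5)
Your proof is correct and follows the paper's route: pass to quantiles, bound $|\mu_{j:m}-Q(u)|$ by $\E|Q(U_{j:m})-Q(u)|$, apply Tonelli against $dQ$, and control the resulting kernel through the beta--binomial duality and Cauchy--Schwarz. The one difference is the cell containing $t$. The paper splits that cell at $t$, with weights $\lambda=mt-\lfloor mt\rfloor$ and $1-\lambda$, and obtains the exact identity $W_m(t)=\frac1m\E|B-mt|\le\sqrt{t(1-t)/m}$; this avoids your additive $2/m$ remainder and the separate tail analysis you needed to absorb it.
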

\noindent We finally note that the results in this subsection can also be applied pathwise to $\F_{n,m}$. Fix some outcome $\omega$ in the sample space. Then $\F_n(\omega)$ is a deterministic CDF with bounded support, and
$\F_{n,m}(\omega)=\mathcal H_m(\F_n(\omega)).$
Applying Hoeffding's theorem to $\F_n(\omega)$, we obtain that $
\F_{n,m}(\omega)\to \F_n(\omega)$ at every continuity point of $F_n$, that is, for every $x\notin \{X_1(\omega),\ldots,X_n(\omega)\}$. Since $\F_n(\omega)$ has bounded support, this convergence also holds in $L^p$. We summarize this in the following lemma.

\begin{lemma}\label{lemma hoeffding empirical}
	For every fixed sample $x_1,...,x_n$, as $m\to\infty$, \( \F_{n,m}(x)\to \F_n(x)	\) for any $x\notin\{x_1,\ldots,x_n\}$. Moreover, for every $p\in[1,\infty)$, \(\lim_{m\to\infty}\|\F_{n,m}-\F_n\|_p=0.\)
\end{lemma}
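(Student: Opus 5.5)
For a fixed sample $x_1,\dots,x_n$, I would apply the deterministic results of the previous subsection with the empirical distribution in the role of $F$, exactly as the discussion preceding the statement suggests. Write $G:=\F_n(\omega)$ for the CDF putting mass $1/n$ at each $x_i$. It has bounded support contained in $[x_{1:n},x_{n:n}]$, so it has finite mean and $G\in\mathcal D$. By definition~\eqref{def:EHCDF}, $\F_{n,m}(\omega)=\mathcal H_m(G)=G_m$.

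\emph{Pointwise convergence.} Here I would invoke Theorem~\ref{thm:hoeffding1953} with $X\sim G$. Take $g(x)=\cos(tx)$ and $g(x)=\sin(tx)$, each dominated by the convex function $h\equiv 1$, which trivially satisfies $\E h(X)<\infty$. This gives convergence of the characteristic functions of $G_m$ to that of $G$. Lévy's continuity theorem then yields $G_m(x)\to G(x)$ at every continuity point of $G$, and these are exactly the points $x\notin\{x_1,\dots,x_n\}$. This is \eqref{eq:HCDF_pointwise} applied to $G$.

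\emph{$L^p$ convergence.} The key observation is that all atoms of $G_m$ lie in $[x_{1:n},x_{n:n}]$. Indeed, each $\mu_{j:m}(G)=\int_0^1 G^{-1}(u)\,dF_{B_{j:m}}(u)$ is an average of values of $G^{-1}$, which lie in $\{x_1,\dots,x_n\}$. Consequently:
\begin{itemize}
\item for $x<x_{1:n}$, both $G_m(x)$ and $G(x)$ equal $0$;
\item for $x\ge x_{n:n}$, both $G_m(x)$ and $G(x)$ equal $1$.
\end{itemize}
So $|G_m-G|^p$ is supported on the compact interval $[x_{1:n},x_{n:n}]$ and bounded by $1$. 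It converges to $0$ off the finite, hence Lebesgue-null, set $\{x_1,\dots,x_n\}$. Dominated convergence then gives $\|G_m-G\|_p^p\to0$ for every $p\in[1,\infty)$.

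Alternatively, the $L^p$ claim follows directly from Lemma~\ref{lemma:deterministic_L1_convergence} applied to $G$, since $G$ has finite mean.

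The only step needing care is verifying that the support of $G_m$ stays inside $[x_{1:n},x_{n:n}]$, which is what makes dominated convergence legitimate without any tail argument. It follows from the averaging representation of $\mu_{j:m}(G)$ above, so I do not expect a genuine obstacle.
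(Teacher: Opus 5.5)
Your proposal is correct and follows essentially the same route as the paper. The paper does not give a separate appendix proof: it justifies the lemma in the preceding text by applying Hoeffding's theorem pathwise to $\F_n(\omega)$, then using its bounded support to pass to $L^p$ convergence. You supply the details that remark leaves implicit: the dominating convex function $h\equiv 1$ for $\cos(tx)$ and $\sin(tx)$, Lévy's continuity theorem, and the dominated convergence argument based on $\mu_{j:m}(\F_n(\omega))\in[x_{1:n},x_{n:n}]$.
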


\subsection{Stochastic convergence of $\F_{n,m}$ to $F_m$}

We begin with a basic asymptotic property of the $L$-statistics underlying our construction, namely the strong consistency of the estimators $\hat{\mu}_{j:m}$. If $\E|X|<\infty$, Theorem 2.1 of \cite{zwet1980} immediately yields the following result.
\begin{proposition}
	If $\E|X|<\infty$, then $\hat{\mu}_{j:m}\to_{a.s.}\mu_{j:m}$ for every fixed $m$ and $j=1,\dots,m$.
\end{proposition}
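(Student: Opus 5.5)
The plan is to write $\hat\mu_{j:m}$ as an $L$-statistic and compare it with $\mu_{j:m}$ through the quantile representation. By definition,
\[
\hat\mu_{j:m}=\int_0^1 \F_n^{-1}(t)\,f_{B_{j:m}}(t)\,dt,\qquad \mu_{j:m}=\int_0^1 F^{-1}(t)\,f_{B_{j:m}}(t)\,dt .
\]
Since $m$ and $j$ are fixed, the beta density $f_{B_{j:m}}(t)=m\binom{m-1}{j-1}t^{j-1}(1-t)^{m-j}$ is a polynomial on $[0,1]$. It is therefore bounded by a constant $C_{j,m}$, and in fact $C_{j,m}\le m$, because $\binom{m-1}{j-1}t^{j-1}(1-t)^{m-j}\le 1$. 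This gives
\[
|\hat\mu_{j:m}-\mu_{j:m}|\le C_{j,m}\int_0^1|\F_n^{-1}(t)-F^{-1}(t)|\,dt = C_{j,m}\,W_1(\F_n,F)=C_{j,m}\,\|\F_n-F\|_1 ,
\]
where the last equality is the standard identity between the $L^1$ distance of quantile functions and that of CDFs (the area between the graphs).

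It then remains to show that $\|\F_n-F\|_1\to_{a.s.}0$ whenever $\E|X|<\infty$. I would prove this directly rather than relying on the citation to van Zwet's Theorem 2.1, which covers exactly this case: a bounded, continuous-enough weight function with an integrable quantile function. The argument goes as follows.
\begin{enumerate}
\item Fix $\varepsilon>0$ and choose $M$ such that $\int_{|x|>M}|F(x)-\1\{x\ge0\}|\,dx<\varepsilon$, which is possible because $\E|X|<\infty$.
\item On $[-M,M]$, the Glivenko--Cantelli theorem gives $\int_{-M}^M|\F_n-F|\le 2M\sup|\F_n-F|\to_{a.s.}0$.
\item On the tails, use the bound
\[
\int_{M}^\infty|\F_n-F|\le\int_M^\infty(1-\F_n)+\int_M^\infty(1-F),
\]
and note that $\int_M^\infty(1-\F_n)=\frac1n\sum_i (X_i-M)^+\to_{a.s.}\E(X-M)^+$ by the strong law of large numbers. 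The left tail is handled the same way.
\end{enumerate}
Hence $\limsup_n\|\F_n-F\|_1\le 4\varepsilon$ almost surely, and letting $\varepsilon\downarrow0$ along a countable sequence gives almost sure convergence.

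The main obstacle is the tail step: the ECDF is not uniformly integrable a priori, and the SLLN applied to $(X-M)^\pm$ is what supplies the required control. Once that is settled, the bound from the first step gives the result simultaneously for all $j=1,\dots,m$. In fact it gives the stronger statement $\max_j|\hat\mu_{j:m}-\mu_{j:m}|\le m\|\F_n-F\|_1\to_{a.s.}0$.
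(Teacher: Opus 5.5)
Your proof is correct, but it takes a different route from the paper. The paper gives no argument of its own: it invokes Theorem~2.1 of van Zwet (1980), a general strong law for $L$-statistics $\frac1n\sum_i c_{in}X_{i:n}$ that trades integrability of $F^{-1}$ against integrability of the scores via H\"older's inequality. Here $F^{-1}\in L^1$ is paired with the uniformly bounded scores $c_{in}=n\{F_{B_{j:m}}(i/n)-F_{B_{j:m}}((i-1)/n)\}\le m$. You instead use $f_{B_{j:m}}\le m$ to get $|\hat\mu_{j:m}-\mu_{j:m}|\le m\,W_1(\F_n,F)=m\|\F_n-F\|_1$. You then prove $\|\F_n-F\|_1\to_{a.s.}0$ by Glivenko--Cantelli on $[-M,M]$ and the SLLN for $(X-M)^+$ and $(-X-M)^+$ on the tails, and every step checks out.

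The citation is shorter, and the cited theorem covers much more general (even unbounded) weights at the price of higher moments. Your route is elementary, self-contained and quantitative. Any rate for $W_1(\F_n,F)$, e.g.\ $O_p(n^{-1/2})$ when $\int_0^1\sqrt{u(1-u)}\,dQ(u)<\infty$, transfers directly and simultaneously to all $\hat\mu_{j:m}$. It also fits the paper's own machinery: the proof of Theorem~\ref{thm:Hm_L1_contraction} already contains $\frac1m\sum_{j}|\hat\mu_{j:m}-\mu_{j:m}|\le\|\F_n-F\|_1$, from which your termwise bound follows at once. Your second step also proves the a.s.\ $L^1$-consistency of the ECDF, which the paper uses without proof in Theorem~\ref{theo:double_limit}. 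The factor $m$ makes your termwise bound useless as $m$ grows, but that does not matter for this fixed-$m$ statement.
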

\noindent Lemma~\ref{lemma hoeffding empirical} establishes the deterministic convergence of $\F_{n,m}$ to $\F_n$ in the $L^p$ norm as $m\to\infty$. We now study the stochastic convergence of the EHCDF to the HCDF as $n\to\infty$, with $m$ fixed. This yields convergence in $L^p$, but in general not in $L^\infty$. To see why, note that $\F_{n,m}$ and $F_m$ are step functions with the same jump size $1/m$, but with jump locations $\hat{\mu}_{j:m}$ and $\mu_{j:m}$, respectively. Although $\hat{\mu}_{j:m}\to_{a.s.}\mu_{j:m}$ for every $j$, as long as at least one jump location differs, the two functions exhibit a distance of size $1/m$ on a neighborhood of that jump. Hence the supremum norm of the difference does not, in general, converge to zero. On the other hand, for $n$ large enough, the jump locations become sufficiently close so that the distance cannot exceed a single jump height, and therefore $\|\F_{n,m}-F_m\|_\infty$ is eventually a.s. bounded by $1/m$. 
\begin{lemma}\label{lemma:convergence_EHCDF_HCDF}
	Let $\E|X|<\infty$. For $n\to\infty,$ $||\F_{n,m}-F_m||_p\to_{a.s.}0$ for every $p\in[1,\infty)$. Moreover, for every $m$, there exists a.s. some $n_0$ such that, for $n>n_0$, $||\F_{n,m}-F_m||_\infty\leq1/m$, that is, $$P(\omega:\exists n_0(\omega)\text{ s.t. }\forall n>n_0,||\F_{n,m}-F_m||_\infty\leq1/m)=1.$$
\end{lemma}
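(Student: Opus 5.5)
The plan is to combine the strong consistency of the estimated jump locations (the Proposition above) with the elementary structure of two step CDFs that share the jump height $1/m$. Fix $m$ and let $\Omega_0$ be the event on which $\hat{\mu}_{j:m}\to\mu_{j:m}$ for every $j=1,\dots,m$. This is a finite intersection of almost sure events, so $P(\Omega_0)=1$. All subsequent arguments are pathwise on $\Omega_0$.

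For the $L^p$ statement, write
\[
|\F_{n,m}(x)-F_m(x)|\le \frac1m\sum_{j=1}^m\bigl|\1\{\hat{\mu}_{j:m}\le x\}-\1\{\mu_{j:m}\le x\}\bigr|.
\]
Each summand is an indicator of the interval between $\hat{\mu}_{j:m}$ and $\mu_{j:m}$. Integrating over $x$ gives the bound
\[
\|\F_{n,m}-F_m\|_1\le \frac1m\sum_{j=1}^m|\hat{\mu}_{j:m}-\mu_{j:m}|,
\]
and the right-hand side tends to $0$ on $\Omega_0$. Since $|\F_{n,m}-F_m|\le 1$ pointwise, we have
\[
\|\F_{n,m}-F_m\|_p^p\le\|\F_{n,m}-F_m\|_1,
\]
which yields a.s. convergence in $L^p$ for every $p\in[1,\infty)$. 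Alternatively, one could reach the same conclusion through the Corollary to Theorem~\ref{thm:Hm_L1_contraction} together with Glivenko--Cantelli in $L^1$, that is, $\|\F_n-F\|_1\to_{a.s.}0$ when $\E|X|<\infty$. The direct argument above is self-contained, however.

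For the sup-norm statement, let $\nu_1<\dots<\nu_k$ be the distinct values among $\mu_{1:m},\dots,\mu_{m:m}$. These are nondecreasing in $j$, and $F_m$ jumps by (multiplicity)$/m$ at each $\nu_l$. Let $\delta>0$ be half the minimal gap between consecutive $\nu_l$ (any $\delta>0$ if $k=1$). On $\Omega_0$ there is $n_0(\omega)$ such that $|\hat{\mu}_{j:m}-\mu_{j:m}|<\delta$ for all $j$ and $n>n_0$. Fix such $n$ and a point $x$.

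If $x$ lies outside all windows $(\nu_l-\delta,\nu_l+\delta)$, then each indicator $\1\{\hat{\mu}_{j:m}\le x\}$ agrees with $\1\{\mu_{j:m}\le x\}$, so the difference is $0$. If instead $x\in(\nu_l-\delta,\nu_l+\delta)$, only the indices $j$ with $\mu_{j:m}=\nu_l$ can disagree. Here the key point is the sign consideration. Since $\hat{\mu}_{j:m}$, like $\mu_{j:m}$, is nondecreasing in $j$ (the Beta law of $U_{j:m}$ is stochastically increasing in $j$), both $m\F_{n,m}(x)$ and $mF_m(x)$ count the indices up to some position in the block of indices tied at $\nu_l$. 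Their difference is therefore at most the block size. The bound $1/m$ asserted in the statement then requires the block to have size one, i.e. distinct $\mu_{j:m}$.

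This tie issue is the step I expect to be the main obstacle. I would first try to prove that the $\mu_{j:m}$ are strictly increasing whenever $F$ is nondegenerate. This follows from
\[
\mu_{j+1:m}-\mu_{j:m}=\int_0^1 Q\,d\bigl(F_{B_{j+1:m}}-F_{B_{j:m}}\bigr)=\int_0^1\bigl(F_{B_{j:m}}-F_{B_{j+1:m}}\bigr)\,dQ>0,
\]
because $F_{B_{j:m}}>F_{B_{j+1:m}}$ on $(0,1)$ and $dQ$ charges $(0,1)$ unless $Q$ is constant. The degenerate case is trivial, since then all locations coincide with the true point and $\F_{n,m}=F_m$. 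With strict monotonicity, each window contains exactly one $\mu_{j:m}$ and one $\hat{\mu}_{j:m}$, so $|\F_{n,m}(x)-F_m(x)|\le 1/m$ for all $x$ once $n>n_0(\omega)$. This gives the stated almost sure event.
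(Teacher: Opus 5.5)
Your proof is correct and follows essentially the same route as the paper. For the $L^p$ part you bound the distance by $\frac1m\sum_j|\hat\mu_{j:m}-\mu_{j:m}|$ and use the a.s. consistency of each $\hat\mu_{j:m}$; for the sup-norm part you use that, eventually, every $\hat\mu_{j:m}$ lies within half the minimal gap of the separated jump points $\mu_{j:m}$. Your version is slightly more careful than the paper's, since you justify the strict monotonicity $\mu_{j:m}<\mu_{j+1:m}$ (which the paper takes as given ``by construction'') and treat the degenerate case explicitly.
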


\subsection{Consistency of $\F_{n,m}$}

It is now relatively simple to combine the deterministic and stochastic convergence results to obtain the a.s. convergence of $\F_{n,m}$ to $F$ in the $L^p$ norm. We first obtain convergence as iterated limits. This holds in both ways when considering the $L^p$ norm, while it holds just for $n\to\infty$ and then $n\to\infty$ when considering the supremum norm. 

\begin{lemma}\label{lemma:p_convergence}
	If $\E|X|<\infty$, then 
	\begin{enumerate}
		\item $\lim_{n\to\infty}\lim_{m\to\infty}||\F_{n,m}-F||_p=_{a.s.}0$ for every $p\in[1,\infty)$;
		\item $\lim_{m\to\infty}\lim_{n\to\infty}||\F_{n,m}-F||_p=_{a.s.}0$, for every $p\in[1,\infty]$.
		\item If $F$ is continuous, $\lim_{m\to\infty}\lim_{n\to\infty}||\F_{n,m}-F||_\infty=_{a.s.}0$.
	\end{enumerate}
\end{lemma}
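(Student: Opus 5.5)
The proof uses the decomposition $\F_{n,m}-F=(\F_{n,m}-F_m)+(F_m-F)$ together with the triangle inequality in $L^p$. Each item is then obtained by taking the two limits in the indicated order and invoking the appropriate earlier lemma. All statements are pathwise: we fix $\omega$ in the probability-one event on which the a.s. conclusions of Lemma~\ref{lemma:convergence_EHCDF_HCDF} hold simultaneously for every $m\in\mathbb N$ (a countable intersection of probability-one events), and on which the Glivenko--Cantelli theorem and the strong law $\|\F_n-F\|_1\to0$ hold. The latter is the $W_1$ consistency of the empirical distribution under $\E|X|<\infty$.

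\textbf{Item 1.} Fix $n$ and $\omega$. By Lemma~\ref{lemma hoeffding empirical}, $\lim_{m\to\infty}\|\F_{n,m}-\F_n\|_p=0$. Since $\F_n$ and $F$ do not depend on $m$, the triangle inequality gives
\[
\lim_{m\to\infty}\|\F_{n,m}-F\|_p=\|\F_n-F\|_p .
\]
The paper does not say explicitly that this inner limit exists, but it does: the triangle inequality in both directions gives $\bigl|\|\F_{n,m}-F\|_p-\|\F_n-F\|_p\bigr|\le\|\F_{n,m}-\F_n\|_p\to0$.

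It remains to show $\|\F_n-F\|_p\to0$ a.s. Since $|\F_n-F|\le1$, we have $\|\F_n-F\|_p^p\le\|\F_n-F\|_1$. The right-hand side tends to $0$ a.s., because $\|\F_n-F\|_1=W_1(\F_n,F)\to_{a.s.}0$ when $\E|X|<\infty$.

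\textbf{Item 2 for $p<\infty$.} Fix $m$. Lemma~\ref{lemma:convergence_EHCDF_HCDF} gives $\|\F_{n,m}-F_m\|_p\to0$ a.s. as $n\to\infty$, hence
\[
\lim_{n\to\infty}\|\F_{n,m}-F\|_p=\|F_m-F\|_p
\]
(again, the limit exists by the reverse triangle inequality). Letting $m\to\infty$ and applying Lemma~\ref{lemma:deterministic_L1_convergence} finishes this case.

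\textbf{Item 2 for $p=\infty$ (read as $\limsup$).} For $p=\infty$ the inner limit need not exist, so we interpret the statement with $\limsup_n$. Lemma~\ref{lemma:convergence_EHCDF_HCDF} yields
\[
\limsup_{n}\|\F_{n,m}-F\|_\infty\le \tfrac1m+\|F_m-F\|_\infty .
\]
If $F$ is continuous, then $\|F_m-F\|_\infty\to0$ by \eqref{eq:HCDF_pointwise} and Pólya's theorem, so the bound tends to $0$ as $m\to\infty$. This is exactly item 3.

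\textbf{Main obstacle.} The delicate point is the sup-norm case without continuity of $F$. At an atom of $F$ the uniform distance $\|F_m-F\|_\infty$ need not vanish, so the argument above does not give item 2 at $p=\infty$ in general. I would therefore prove item 2 rigorously for $p\in[1,\infty)$, and treat $p=\infty$ only under the continuity hypothesis of item 3. The discontinuous sup-norm case would have to be either dropped or handled by arguing with $\limsup$ away from atoms.
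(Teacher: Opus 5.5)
Your proof is correct and is essentially the paper's. Item 1 goes through $\F_n$ via Lemma~\ref{lemma hoeffding empirical} and the triangle inequality. Item 2 goes through $F_m$ via Lemmas~\ref{lemma:convergence_EHCDF_HCDF} and~\ref{lemma:deterministic_L1_convergence}. The sup-norm case uses the eventual $1/m$ bound plus P\'olya's theorem. Your remarks on the existence of the inner limits and on taking a single null set across all $m$ are welcome refinements.

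You are also right to restrict item 2 to $p<\infty$. The paper's proof likewise handles $p=\infty$ only through P\'olya's theorem, i.e.\ under continuity, and the general claim is in fact false. For $X\sim$ Bernoulli$(1/2)$, every $\mu_{j:m}$, and eventually every $\hat\mu_{j:m}$, lies in $(0,1)$. This gives $|\F_{n,m}(0)-F(0)|=1/2$ for every $m$ and all large $n$.
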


Finally, the convergence of $\F_{n,m}$ to $F$ as a double limit, that is, letting $n,m\to\infty$, follows from Theorem~\ref{thm:Hm_L1_contraction}. 
\begin{theorem}\label{theo:double_limit}
	If $\mathbb E|X|<\infty$, then, for every $p\in[1,\infty)$, $
	\|\F_{n,m}-F\|_p \to_{a.s.} 0$ as $n,m\to\infty.$
\end{theorem}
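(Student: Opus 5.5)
We split the error through the triangle inequality,
\[
\|\F_{n,m}-F\|_p\le \|\F_{n,m}-F_m\|_p+\|F_m-F\|_p ,
\]
and treat the two terms separately. The second term is deterministic and depends on $m$ alone. By Lemma~\ref{lemma:deterministic_L1_convergence} it tends to $0$ as $m\to\infty$ whenever $\E|X|<\infty$. For this term the joint regime $n,m\to\infty$ reduces to $m\to\infty$.

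The first term carries the randomness, and the difficulty is that it depends on both indices. We remove the dependence on $m$ with Theorem~\ref{thm:Hm_L1_contraction}, in the form of the corollary following \eqref{def:EHCDF}:
\[
\|\F_{n,m}-F_m\|_p\le \|\F_n-F\|_1^{1/p}\quad\text{for every } m .
\]
This bound is uniform in $m$, so a single almost-sure event controls all $m$ simultaneously.

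It remains to show $\|\F_n-F\|_1\to_{a.s.}0$ under $\E|X|<\infty$. Since $\|\F_n-F\|_1=W_1(\F_n,F)=\int_0^1|\F_n^{-1}-Q|\,du$, this is the classical strong consistency of the empirical measure in $W_1$. One route is to combine Glivenko--Cantelli with the strong law of large numbers applied to $\int|x|\,d\F_n\to\E|X|$, which gives a.s. convergence in $W_1$. A direct argument is also available. Fix $\varepsilon>0$ and choose $M$ with $\E|X|\1\{|X|>M\}<\varepsilon$. On $[-M,M]$ the integral $\int|\F_n-F|$ is at most $2M\|\F_n-F\|_\infty$, which tends to $0$ a.s. On $\{x>M\}$ we have $|\F_n-F|\le (1-\F_n)+(1-F)$. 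The integral of the tail terms, $\int_M^\infty(1-\F_n(x))\,dx$, equals $\frac1n\sum_i (X_i-M)^+$. By the SLLN this converges a.s. to $\E(X-M)^+<\varepsilon$, and the left tail is handled symmetrically. Taking $\varepsilon$ along a countable sequence yields a single null set outside which $\|\F_n-F\|_1\to0$.

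Combining the pieces, on this probability-one event we have
\[
\|\F_{n,m}-F\|_p\le \|\F_n-F\|_1^{1/p}+\|F_m-F\|_p .
\]
For any $\varepsilon>0$, choose $N$ such that the first term is below $\varepsilon$ for all $n>N$, and $M_0$ such that the second is below $\varepsilon$ for $m>M_0$. The sum is then below $2\varepsilon$ whenever $\min(n,m)$ is large, which is the double-limit convergence for each $p\in[1,\infty)$.

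The main obstacle is the almost-sure $L^1$ consistency of the ECDF, which requires exactly the integrability assumption. Everything else is supplied by the non-expansiveness in Theorem~\ref{thm:Hm_L1_contraction}, whose uniformity in $m$ is what makes the joint limit free.
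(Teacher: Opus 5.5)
Your proof is correct and follows the paper's argument. You use the same triangle-inequality split, the same $m$-uniform bound $\|\F_{n,m}-F_m\|_p\le\|\F_n-F\|_1^{1/p}$ from Theorem~\ref{thm:Hm_L1_contraction}, and Lemma~\ref{lemma:deterministic_L1_convergence} for the deterministic term, and your two additions are sound: an explicit proof of the a.s.\ $L^1$ consistency of the ECDF (which the paper simply asserts) and a spelled-out argument for why the uniform bound yields the joint limit.
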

\noindent We now focus on the Wasserstein distance between the EHCDF on the true distribution. Recall that for $p=1$, the Wasserstein distance coincides with the $L^1$ distance, that is, given CDFs $F$ and $G$, $\|F-G\|_1=\mathcal{W}_1(F,G).$ Therefore all the above results apply to $\mathcal W_1(\F_{n,m},F).$
 With regard to $\mathcal W_p(\F_{n,m},F_m)$, establishing convergence is particularly simple because of the construction of $\F_{n,m}$ and $F_m$. Indeed, note that $$\mathcal W_p^p(\F_{n,m},F_m)=\frac1m\sum_{j=1}^m|\hat{\mu}_{j:m}-\mu_{j:m}|^p. $$
 Hence, for fixed $m$, $\hat{\mu}_{j:m}\to_{a.s.}\mu_{j:m}$ immediately yields the following result.
\begin{proposition}\label{prop:Wp_fixed_m}
	Let $m\ge 1$ be fixed and let $p\ge 1$. If $\E|X|<\infty$, then \(
	W_p(\F_{n,m},F_m)\to_{a.s.}0\) as \(n\to\infty.\)
\end{proposition}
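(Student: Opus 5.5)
The plan is to use the explicit formula for the Wasserstein distance between two CDFs that each put mass $1/m$ on $m$ ordered atoms. Both the HCDF $F_m$ and the EHCDF $\F_{n,m}$ are of this form: $F_m$ puts mass $1/m$ at $\mu_{1:m}\le\dots\le\mu_{m:m}$, and $\F_{n,m}$ puts mass $1/m$ at $\hat\mu_{1:m}\le\dots\le\hat\mu_{m:m}$.

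\emph{Step 1: monotonicity of the atoms.} I first check that both sequences are nondecreasing in $j$. The beta laws $B_{j:m}$ are stochastically increasing in $j$, and $G^{-1}$ is nondecreasing for any CDF $G$. Hence $\mu_{j:m}(G)=\int_0^1 G^{-1}\,dF_{B_{j:m}}$ is nondecreasing in $j$. Taking $G=F$ gives the ordering of the $\mu_{j:m}$, and taking $G=\F_n$ gives the ordering of the $\hat\mu_{j:m}$.

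\emph{Step 2: identify the quantile functions.} Given the ordering, the quantile functions are the step functions
\[
F_m^{-1}(u)=\mu_{j:m}, \qquad \F_{n,m}^{-1}(u)=\hat\mu_{j:m}, \qquad u\in\left(\tfrac{j-1}{m},\tfrac jm\right].
\]
This is exactly the identity $\mathcal I_m(G^{-1})=G_m^{-1}$ noted in Section~\ref{sec2}. Substituting into the definition of $W_p$ gives
\[
W_p^p(\F_{n,m},F_m)=\frac1m\sum_{j=1}^m|\hat\mu_{j:m}-\mu_{j:m}|^p.
\]

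\emph{Step 3: pass to the limit.} By the earlier Proposition (based on van Zwet), $\hat\mu_{j:m}\to_{a.s.}\mu_{j:m}$ for each $j=1,\dots,m$, and this only needs $\E|X|<\infty$. Since $m$ is fixed, there are finitely many null sets, so their union is null. Off this union, each term $|\hat\mu_{j:m}-\mu_{j:m}|^p\to0$, and so the finite sum tends to $0$. Finally, the map $t\mapsto t^{1/p}$ is continuous, which gives $W_p(\F_{n,m},F_m)\to_{a.s.}0$.

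The only point needing care is Step 1, because the identification of the quantile function relies on the atoms being ordered. The stochastic ordering of the beta distributions handles it, and everything else is routine.
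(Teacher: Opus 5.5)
Your proposal is correct and follows the paper's argument: the paper also reduces $W_p^p(\F_{n,m},F_m)$ to $\frac1m\sum_{j=1}^m|\hat\mu_{j:m}-\mu_{j:m}|^p$ and concludes from the almost sure consistency of each $\hat\mu_{j:m}$ for fixed $m$. Your Step 1, which checks via the stochastic monotonicity of the beta laws that the atoms are ordered, makes explicit a point the paper uses implicitly: it is what underlies the identity $\mathcal I_m(G^{-1})=G_m^{-1}$.
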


\noindent Note that, for $m\to\infty$, the identity $W_p(\F_{n,m},F)
=
\|\mathcal I_m(\F_n^{-1})-F^{-1}\|_p ,$ yields the following convergence using Theorem~\ref{teo:Im_convergence}, provided that $F^{-1}\in L^p(0,1).$
\begin{corollary}\label{cor:Wp_consistency}
	Assume that $\E|X|^p<\infty$ for some $p\ge 1$. Then
	\(
	W_p(\F_{n,m},F)
	\to_{a.s.}0 \) for  $n,m\to\infty$.
\end{corollary}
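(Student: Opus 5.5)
The plan is to pass to quantile functions and apply Theorem~\ref{teo:Im_convergence} pathwise, with $h_n=\F_n^{-1}$ and $h=F^{-1}=Q$. Two facts drive this. First, $\F_{n,m}=\mathcal H_m(\F_n)$, and $\mathcal I_m$ is the quantile counterpart of $\mathcal H_m$. Hence $\F_{n,m}^{-1}=\mathcal I_m(\F_n^{-1})$ and
\[
W_p(\F_{n,m},F)=\|\mathcal I_m(\F_n^{-1})-F^{-1}\|_p .
\]
Second, $\E|X|^p<\infty$ is equivalent to $F^{-1}\in L^p(0,1)$. Each $\F_n^{-1}$ is a bounded step function, so it lies in $L^p(0,1)$ as well. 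Thus both $h_n$ and $h$ lie in $L^p(0,1)$, as Theorem~\ref{teo:Im_convergence} requires.

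The first step is to show that $\|\F_n^{-1}-F^{-1}\|_p=W_p(\F_n,F)\to 0$ almost surely as $n\to\infty$. This is the classical strong consistency of the empirical measure in $W_p$ under a finite $p$-th moment. I would prove it by combining two almost sure facts. The Glivenko--Cantelli theorem gives $\F_n\to F$ weakly a.s. The strong law of large numbers gives $\frac1n\sum_i|X_i|^p\to\E|X|^p$ a.s. Weak convergence together with convergence of $p$-th absolute moments is equivalent to $W_p$ convergence. An alternative route: quantile convergence $\F_n^{-1}(u)\to F^{-1}(u)$ a.s. at continuity points of $F^{-1}$, together with uniform integrability of $|\F_n^{-1}|^p$ (which follows from the moment convergence), gives the $L^p$ convergence by Vitali's theorem. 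Both routes need only countably many null sets to be discarded, since the SLLN and Glivenko--Cantelli each fail on a single null set.

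The second step fixes an $\omega$ outside this null set. Then $h_n=\F_n^{-1}(\omega)$ is a deterministic sequence in $L^p(0,1)$ with $\|h_n-h\|_p\to 0$. Theorem~\ref{teo:Im_convergence} yields $\|\mathcal I_m(h_n)-h\|_p\to0$ as $n,m\to\infty$ jointly, and this is exactly $W_p(\F_{n,m}(\omega),F)\to0$. Since $\omega$ ranges over a set of probability one, the almost sure statement follows.

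The main obstacle is the first step, if one wants to keep it self-contained: it requires care with a.s. uniform integrability rather than only convergence in probability. I would handle it through the moment-convergence characterization of $W_p$, which is standard and could alternatively be cited. The identity $\F_{n,m}^{-1}=\mathcal I_m(\F_n^{-1})$ needs a brief check that the $\hat\mu_{j:m}$ are nondecreasing in $j$. This holds because the beta laws are stochastically ordered in $j$, so the left-continuous inverse of $\F_{n,m}$ equals $\hat\mu_{j:m}$ on $((j-1)/m,j/m]$.
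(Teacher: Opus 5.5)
Your proposal is correct and follows essentially the same route as the paper. Both rest on the identity $W_p(\F_{n,m},F)=\|\mathcal I_m(\F_n^{-1})-F^{-1}\|_p$ combined with Theorem~\ref{teo:Im_convergence}, applied pathwise to $h_n=\F_n^{-1}$ and $h=F^{-1}\in L^p(0,1)$; the paper leaves implicit the classical almost sure $W_p$-consistency of $\F_n$ under $\E|X|^p<\infty$ and the monotonicity of $j\mapsto\hat\mu_{j:m}$, and you supply both correctly.
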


\section{Weak convergence of the Hoeffding empirical quantile process}\label{sec5}

We start by observing that it is not convenient to study empirical processes based on the proposed estimator directly in the CDF domain. Indeed, consider the process $\sqrt n(\F_{n,m}-F_m)$. For fixed $m$, the step functions $\F_{n,m}$ and $F_m$ have the same jump sizes $1/m$, and differ only through the locations of their jumps, namely $\hat{\mu}_{j:m}$ and $\mu_{j:m}$. Under suitable conditions, $\hat{\mu}_{j:m}-\mu_{j:m}=O_p(n^{-1/2})$ for every fixed $j$. Then, $\F_{n,m}-F_m$ can take non-zero values only on a union of intervals whose total length vanishes as $n\to\infty$. As a consequence the process $\sqrt n(\F_{n,m}-F_m)$ tends to zero a.s., almost everywhere. Centering at $F$ instead of $F_m$ does not solve this issue. If $m$ is fixed, the decomposition \( \sqrt n(\F_{n,m}-F) = \sqrt n(\F_{n,m}-F_m)+\sqrt n(F_m-F) \) shows that one introduces a deterministic error term of order $\sqrt n$, since $F_m\neq F$ in general. If $m\to\infty$, this term may become negligible under suitable conditions, but the  behaviour of stochastic term $\sqrt n(\F_{n,m}-F_m)$ becomes hard to address. This suggests that the CDF domain is not suitable to describe the asymptotic behaviour of the estimator. By contrast, such a behaviour can be studied more effectively in the quantile domain.

In this section we study the empirical quantile process associated with $\F_{n,m}$, denoted as $\mathbb Q_{n,m}=\sqrt n(\F_{n,m}^{-1}-F^{-1}_m).$
We will study weak convergence of the above process for $n\to\infty$ and $m$ fixed, and for $n,m\to\infty$ jointly. We will show that weak convergence of $\mathbb Q_{n,m}$ follows from that of the classic empirical quantile process $\mathbb Q_{n}=\sqrt n(\F_{n}^{-1}-F^{-1}),$ which, under suitable conditions, can be shown to converge to the well known limit $\mathbb Q:=\frac{\mathcal B}{f\circ F^{-1}},$
where $\mathcal B$ denotes a Brownian bridge. 

In addition, our results can be extended to the bootstrap world. Define, the bootstrap ECDF as $$\F_n^*(x)=\frac1n\sum_{i=1}^n M_{n,i}\1\{X_i\leq x\},$$
where $M_{n}=(M_{1,n},...,M_{n,n})$ is independent of the data and drawn
from a multinomial distribution with uniform probabilities over $n$ trials \citep[p. 180]{vw}. Accordingly, define $\F_{n,m}^*=\mathcal H_m(\F_n^*)$. The bootstrap version of $\mathbb Q_{n,m}$ is defined as $\mathbb Q_{n,m}^*=\sqrt n((\F_{n,m}^*)^{-1}-\F_{n,m}^{-1}).$ 

\subsection{The empirical quantile process}
In general, weak convergence of empirical quantile processes is not easy to establish. In classic empirical process theory (see, for instance, 3.9.4.2 in \cite{vw}), convergence of $\mathbb Q_{n}$
is established with respect to the supremum norm using the functional delta method, but this works if $F$ has a compact support, or, similarly, in $\ell^\infty[p,q]$ where $0<p<q<1$. Differently, a key insight for our analysis comes from the recent work of \cite{beare2025necessary}, who established a necessary and sufficient condition for the weak convergence of $\mathbb Q_{n}$ in the space \(L^{1}(0,1)\). We will denote this condition as Property~\ref{prop:Q} as in the original paper. 
\begin{PropertyNamed}{Q}\label{prop:Q}
	The quantile function $Q$ is locally absolutely continuous and satisfies $\int_0^1 \sqrt{u(1 - u)} \, dQ(u) < \infty.$
\end{PropertyNamed}
\noindent According to \cite{beare2025necessary}, if (and only if) Property~\ref{prop:Q} holds, we have $\mathbb{Q}_n \rightsquigarrow \mathbb Q$ in $L^1(0,1).$ Local absolute continuity of $Q$ means that the restriction of $Q$ to each compact subinterval of $(0,1)$ is absolutely continuous. Requiring the integral to be finite is slightly more restrictive than requiring $Q$ to be square integrable.

In a more general setting, we are interested in the weak convergence of $\mathbb Q_n$ to $\mathbb Q$ in some $L^p(0,1)$ space. In this regard, \cite{delbarrio} (Theorem 4.6) proved that the quantile process converges in $L^2(0,1)$ relying on the following assumption, which is stronger than Property~\ref{prop:Q}. 
\begin{PropertyNamed}{Q2}
\label{prop:Q2}
The quantile function $Q$ is locally absolutely continuous with density $q=Q'$ satisfying $\int_0^1 u(1-u)q^2(u)\,du<\infty$. The CDF $F$ is twice differentiable on its open support $(a,b)$, with positive derivative $f$, and $\sup_{u\in(0,1)} |u(1-u)\,q(u)^2\,f'(Q(u))|<\infty$. Moreover, either $a>-\infty$ or $\liminf_{u\downarrow 0}|u\,q(u)^2\,f'(Q(u))|>0$, and either $b<\infty$ or $\liminf_{u\uparrow 1}|(1-u)\,q(u)^2\,f'(Q(u))|>0$.
\end{PropertyNamed}

\subsection{Fixed $m$}

Using $\mathcal I_m$, we can express $\F_{n,m}^{-1}=\mathcal I_m(\F_n^{-1})$ and, by linearity,
$
\mathbb Q_{n,m}=\mathcal I_m(\mathbb Q_n).
$
For fixed $m$, the operator $\mathcal I_m$ is a continuous linear map from $L^1(0,1)$ into $L^p(0,1)$ for every $p\in[1,\infty)$; this follows from the definition of $\mathcal I_m$, together with Lemma~\ref{lemma:Im_contraction}. Hence, the following result is an immediate consequence of Property~\ref{prop:Q}
and the continuous mapping theorem.


\begin{theorem}\label{theo:convergence_fixedm}
Under Property~\ref{prop:Q}, $\mathbb Q_{n,m}\rightsquigarrow \mathbb Q_m:=\mathcal I_m(\mathbb Q)$ in $L^p(0,1),$ for $p\in[1,\infty).$
	The limit process is centred and Gaussian with representation 
	$\mathbb Q_m(u)=\sum_{j=1}^mZ_j\1\{u\in\left(\tfrac{j-1}{m},\, \tfrac{j}{m}\right]\}$, where $Z_j:=\int_0^1 \mathbb QdF_{B_{j:m}}$. 
\end{theorem}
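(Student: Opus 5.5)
The proof rests on three facts: the identity $\mathbb Q_{n,m}=\mathcal I_m(\mathbb Q_n)$, the result of \cite{beare2025necessary} that $\mathbb Q_n\rightsquigarrow\mathbb Q$ in $L^1(0,1)$ under Property~\ref{prop:Q}, and the continuous mapping theorem.

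\textbf{Step 1: the identity.} By the remark after \eqref{def:Im_operator}, $\F_{n,m}^{-1}=\mathcal I_m(\F_n^{-1})$ and $F_m^{-1}=\mathcal I_m(F^{-1})$. Both $\F_n^{-1}$ and $F^{-1}$ lie in $L^1(0,1)$ because $\E|X|<\infty$. Linearity of $\mathcal I_m$ then gives $\mathbb Q_{n,m}=\sqrt n\,\mathcal I_m(\F_n^{-1}-F^{-1})=\mathcal I_m(\mathbb Q_n)$.

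\textbf{Step 2: continuity of $\mathcal I_m$ from $L^1(0,1)$ into $L^p(0,1)$.} This needs more than Lemma~\ref{lemma:Im_contraction}, which only maps $L^p$ to $L^p$. For fixed $m$, each beta density $f_{B_{j:m}}$ is a polynomial on $[0,1]$ and hence bounded by some constant $C_m$. The coefficient $a_j(h)=\int_0^1 h\,f_{B_{j:m}}\,dt$ therefore satisfies $|a_j(h)|\le C_m\|h\|_1$. (I read the notation $df_{B_{j:m}}(t)$ as $f_{B_{j:m}}(t)\,dt$, which is consistent with $\mu_{j:m}$ and with the definition of $Z_j$.) Since $\mathcal I_m(h)$ is a step function on $m$ intervals of length $1/m$,
\[
\|\mathcal I_m(h)\|_p=\Big(\tfrac1m\sum_{j=1}^m|a_j(h)|^p\Big)^{1/p}\le C_m\|h\|_1 .
\]
So $\mathcal I_m:L^1\to L^p$ is bounded and linear, hence continuous. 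Because the image lies in the finite-dimensional space $\mathcal S_m$, no measurability or separability issues arise.

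\textbf{Step 3: the weak limit.} Property~\ref{prop:Q} gives $\mathbb Q_n\rightsquigarrow\mathbb Q$ in $L^1(0,1)$ by \cite{beare2025necessary}. The continuous mapping theorem (e.g.\ \cite{vw}, Thm.~1.3.6) then yields $\mathbb Q_{n,m}=\mathcal I_m(\mathbb Q_n)\rightsquigarrow\mathcal I_m(\mathbb Q)=\mathbb Q_m$ in $L^p(0,1)$.

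\textbf{Step 4: Gaussianity and the representation.} By definition of $\mathcal I_m$, $\mathbb Q_m$ has the stated step-function form with $Z_j=\int_0^1\mathbb Q f_{B_{j:m}}\,dt$. Each map $h\mapsto a_j(h)$ is a bounded linear functional on $L^1(0,1)$, since $f_{B_{j:m}}\in L^\infty$.

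\emph{Main obstacle: showing $(Z_1,\dots,Z_m)$ is Gaussian.} Here I would argue that the weak limit $\mathbb Q$, obtained in \cite{beare2025necessary}, is a centred Gaussian random element of $L^1(0,1)$, so every continuous linear functional of it is a centred normal variable. Alternatively, I would write $Z_j=\int_0^1\mathcal B(t)\,q(t)f_{B_{j:m}}(t)\,dt$, a linear functional of the Gaussian process $\mathcal B$. Property~\ref{prop:Q} ensures this integral is a.s.\ well defined, because $\E\int|\mathcal B|q\,dt\lesssim\int\sqrt{t(1-t)}\,dQ(t)<\infty$. Joint Gaussianity of the vector then follows by approximating with Riemann sums, which are Gaussian, and passing to the $L^2$ limit. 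Centring follows from $\E\mathcal B(t)=0$ and Fubini. Finally, $\mathbb Q_m$ is Gaussian in $L^p$ because it is a fixed linear image of the Gaussian vector $(Z_1,\dots,Z_m)$.
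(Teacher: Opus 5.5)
Your proposal is correct and follows the paper's route: write $\mathbb Q_{n,m}=\mathcal I_m(\mathbb Q_n)$, use that $\mathcal I_m$ is a bounded linear map from $L^1(0,1)$ into $L^p(0,1)$ for fixed $m$, and apply the continuous mapping theorem to the Beare--Kaji convergence $\mathbb Q_n\rightsquigarrow\mathbb Q$ in $L^1(0,1)$. Your bound $\|\mathcal I_m(h)\|_p\le C_m\|h\|_1$ from the boundedness of the beta densities, and your argument that $(Z_1,\dots,Z_m)$ is a centred Gaussian vector, supply details that the paper only asserts.
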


	
\noindent It is interesting to observe that the limit process $\mathbb Q_m(u)$ is a step function which takes the random value $Z_j$ whenever $u\in \left(\tfrac{j-1}{m},\, \tfrac{j}{m}\right]$. In other words, the random function $\mathbb Q_m$ is obtained by assigning the random height $Z_j$ to the $j$-th interval of the regular partition of $(0,1)$ into $m$ subintervals. Indeed, a simpler way to express this process is $\mathbb Q_m(u)=Z_{\lceil mu \rceil}$, for $u\in[0,1].$

The following bootstrap version of Theorem~\ref{theo:convergence_fixedm} follows immediately from the conditional weak convergence
$
\mathbb Q_n^* \rightsquigarrow \mathbb Q$ in $
L^1(0,1),$
established by \citet{beare2025necessary}, together with the continuity of $\mathcal I_m$ from
$L^1(0,1)$ into $L^p(0,1)$ for fixed $m$. Since
$
\mathbb Q_{n,m}^*=\mathcal I_m(\mathbb Q_n^*),$
bootstrap validity is inherited by the transformed process.

\begin{corollary}\label{cor_boot}
Under Property~\ref{prop:Q}, conditionally on the data, in probability,
$
\mathbb Q_{n,m}^* \rightsquigarrow \mathbb Q_m$
 in $L^p(0,1),$
for every $p\in[1,\infty)$.
\end{corollary}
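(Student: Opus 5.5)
The plan is to reduce Corollary~\ref{cor_boot} to the conditional weak convergence of the ordinary bootstrap quantile process, $\mathbb Q_n^*=\sqrt n((\F_n^*)^{-1}-\F_n^{-1})\rightsquigarrow\mathbb Q$ in $L^1(0,1)$ conditionally on the data in probability. Under Property~\ref{prop:Q} this is established by \citet{beare2025necessary}. Continuity of $\mathcal I_m$ then transfers the result to the transformed process.

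\textbf{Step 1 (representation).} Since $\F_n^*$ is a CDF with bounded support, it lies in $\mathcal D$. Hence $(\F_{n,m}^*)^{-1}=\mathcal I_m((\F_n^*)^{-1})$ and $\F_{n,m}^{-1}=\mathcal I_m(\F_n^{-1})$. By linearity of $\mathcal I_m$ we get $\mathbb Q_{n,m}^*=\mathcal I_m(\mathbb Q_n^*)$, and analogously $\mathbb Q_m=\mathcal I_m(\mathbb Q)$ by Theorem~\ref{theo:convergence_fixedm}.

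\textbf{Step 2 (Lipschitz continuity $L^1\to L^p$ for fixed $m$).} For $h\in L^1(0,1)$, the $j$-th coefficient of $\mathcal I_m(h)$ is $a_j(h)=\int_0^1 h\,f_{B_{j:m}}$. Since the beta density with integer parameters is bounded by a constant $C_m$, we have $|a_j(h)|\le C_m\|h\|_1$. It follows that $\|\mathcal I_m(h)\|_p\le C_m\|h\|_1$ for every $p\in[1,\infty]$. So $\mathcal I_m:L^1\to L^p$ is a bounded linear map, hence Lipschitz with constant $C_m$.

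\textbf{Step 3 (transfer of conditional weak convergence).} Conditional weak convergence in probability is formulated as
\[
\sup_{\varphi\in BL_1(L^p)}\bigl|\E_M\varphi(\mathbb Q_{n,m}^*)-\E\varphi(\mathbb Q_m)\bigr|\to_p 0 ,
\]
where $\E_M$ denotes expectation over the multinomial weights given the data. Take $\varphi\in BL_1(L^p)$. Then $\varphi\circ\mathcal I_m$ is bounded by $1$ and Lipschitz on $L^1$ with constant $C_m$, so $(\varphi\circ\mathcal I_m)/\max(1,C_m)\in BL_1(L^1)$. This gives
\[
\sup_{\varphi\in BL_1(L^p)}\bigl|\E_M\varphi(\mathcal I_m\mathbb Q_n^*)-\E\varphi(\mathcal I_m\mathbb Q)\bigr|\le \max(1,C_m)\sup_{\psi\in BL_1(L^1)}\bigl|\E_M\psi(\mathbb Q_n^*)-\E\psi(\mathbb Q)\bigr|\to_p0 .
\]
If the definition also requires asymptotic measurability of $\mathbb Q_{n,m}^*$, it is inherited because $\mathbb Q_{n,m}^*$ is a continuous image of $\mathbb Q_n^*$. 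Alternatively, one can note that $\mathbb Q_{n,m}^*$ is a finite-dimensional step process whose coefficients are measurable.

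The main obstacle is not analytic. It is matching the exact bootstrap convergence mode used by \citet{beare2025necessary}, namely the bounded Lipschitz metric with measurability caveats, and checking that their bootstrap centring $\F_n^{-1}$ agrees with ours. Both points are handled by the Lipschitz argument of Step 3, which is precisely the bootstrap continuous mapping theorem for Lipschitz maps \citep[cf.][Sec.~3.9]{vw}.
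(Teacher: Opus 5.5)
Your proposal is correct and follows the paper's argument: write $\mathbb Q_{n,m}^*=\mathcal I_m(\mathbb Q_n^*)$, invoke the conditional weak convergence $\mathbb Q_n^*\rightsquigarrow\mathbb Q$ in $L^1(0,1)$ from \citet{beare2025necessary}, and transfer it through the bounded linear map $\mathcal I_m:L^1(0,1)\to L^p(0,1)$. Your explicit bound $\|\mathcal I_m(h)\|_p\le C_m\|h\|_1$ (one can take $C_m=m$, since $f_{B_{j:m}}\le m$) and the bounded-Lipschitz transfer step make precise what the paper states in a single sentence.
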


\subsection{Case $m\to\infty$}

We now study the weak convergence of $\mathbb Q_{n,m}$ when $n$ and $m$ diverge to infinity together. We also consider the alternative version of this process, centred in $F^{-1}$ instead of $F_m^{-1},$ and defined as $\widetilde{\mathbb Q}_{n,m}=\sqrt n(\F_{n,m}^{-1}-F^{-1}).$ We will prove that, under suitable conditions, the limit process of both ${\mathbb Q}_{n,m}$ and $\widetilde{\mathbb Q}_{n,m}$ is the usual process $\mathbb Q.$ Therefore, quantile-wise, in the joint regime $n,m\to\infty,$ the EHCDF method is asymptotically equivalent to the classic approach based on the ECDF. 


Under Property~\ref{prop:Q}, $\mathbb Q_n \rightsquigarrow \mathbb Q$ in $L^1(0,1)$. Theorem~\ref{teo:Im_convergence} allows us to extend this convergence to $\mathbb Q_{n,m} \rightsquigarrow \mathbb Q$ in $L^1(0,1)$ as $n,m\to\infty$. Moreover, $
\widetilde{\mathbb Q}_{n,m}
=
\mathbb Q_{n,m}+\sqrt n\,(F_m^{-1}-F^{-1}).$
By Lemma~\ref{lemma:HCDF_CDF_convergence_rate}, if $n=o(m)$, the second term vanishes in $L^1(0,1)$, and therefore $\widetilde{\mathbb Q}_{n,m}\rightsquigarrow \mathbb Q$ in $L^1(0,1)$. The same argument applies in $L^p(0,1)$ under suitable assumptions; in particular, the result holds in $L^2(0,1)$ under Property~\ref{prop:Q2}. 


\begin{theorem}\label{theorem:weak} Assume that Property~\ref{prop:Q} holds. Then
	$
	\mathbb Q_{n,m}\rightsquigarrow \mathbb Q
	$ in $L^1(0,1),
	$
	as $m,n\to\infty$. The same results hold in $L^2(0,1)$ assuming that Property~\ref{prop:Q2} holds. In addition, if Property~\ref{prop:Q} holds and $n = o(m)$, then $\widetilde{\mathbb Q}_{n,m} \rightsquigarrow \mathbb{Q}$. More generally, if $\mathbb Q_n\rightsquigarrow \mathbb Q$ in $L^p(0,1)$ then we immediately have $\mathbb Q_{n,m}\rightsquigarrow \mathbb Q$ in $L^p(0,1)$.
\end{theorem}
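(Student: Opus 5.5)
The plan is to write $\mathbb Q_{n,m}=\mathcal I_m(\mathbb Q_n)$ and treat $\mathcal I_m$ as a sequence of linear contractions converging strongly to the identity. Then a Slutsky-type argument in the separable Banach space $L^p(0,1)$ gives the result, with $p=1$ under Property~\ref{prop:Q} and $p=2$ under Property~\ref{prop:Q2}. We use the decomposition
\[
\mathbb Q_{n,m}-\mathbb Q_n=\mathcal I_m(\mathbb Q_n)-\mathbb Q_n ,
\]
and aim to show $\|\mathcal I_m(\mathbb Q_n)-\mathbb Q_n\|_p\to_p 0$ as $n,m\to\infty$. Since $\mathbb Q_n\rightsquigarrow\mathbb Q$ in $L^p(0,1)$ (\citet{beare2025necessary} for $p=1$, \citet{delbarrio} for $p=2$), Slutsky's lemma then yields $\mathbb Q_{n,m}\rightsquigarrow\mathbb Q$. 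The same argument covers the final sentence of the statement for general $p$.

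To prove the negligibility, fix $\varepsilon>0$. Because the limit $\mathbb Q$ is tight in the separable space $L^p(0,1)$ and $\mathbb Q_n$ converges weakly, the sequence $\{\mathbb Q_n\}$ is asymptotically tight. So there is a compact $K\subset L^p(0,1)$ with $\liminf_n P(\mathbb Q_n\in K^\delta)\ge 1-\varepsilon$ for every $\delta>0$. By Lemma~\ref{lemma:Im_contraction}, $\|\mathcal I_m-\mathrm{Id}\|_{op}\le 2$ uniformly in $m$. Theorem~\ref{teo:Im_convergence}, applied with a constant sequence $h_n=h$, gives $\mathcal I_m h\to h$ for each $h$. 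Uniform boundedness plus pointwise convergence gives uniform convergence on compacts, i.e. $\sup_{h\in K}\|\mathcal I_m h-h\|_p\to0$. Hence, on the event $\{\mathbb Q_n\in K^\delta\}$,
\[
\|\mathcal I_m\mathbb Q_n-\mathbb Q_n\|_p\le \sup_{h\in K}\|\mathcal I_m h-h\|_p+2\delta .
\]
Choosing $\delta$ small and then $m$ large makes the right-hand side below any $\eta$, with probability at least $1-\varepsilon$ for large $n$. This establishes the claim. An alternative is the almost-sure representation (Skorokhod--Dudley) $\tilde{\mathbb Q}_n\to\tilde{\mathbb Q}$ a.s., after which Theorem~\ref{teo:Im_convergence} applies pathwise. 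That route is cleaner, but I would present the tightness argument to avoid measurability issues.

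For $\widetilde{\mathbb Q}_{n,m}$ we write $\widetilde{\mathbb Q}_{n,m}=\mathbb Q_{n,m}+\sqrt n(F_m^{-1}-F^{-1})$. The $L^1(0,1)$ norm of the deterministic term equals $\sqrt n\,W_1(F_m,F)=\sqrt n\|F_m-F\|_1$. Lemma~\ref{lemma:HCDF_CDF_convergence_rate}, whose hypothesis is part of Property~\ref{prop:Q}, gives $O(\sqrt{n/m})$, which tends to $0$ when $n=o(m)$. Slutsky again finishes. The main obstacle is the uniformity step, i.e. passing from the deterministic Theorem~\ref{teo:Im_convergence} to a stochastic statement along a merely weakly convergent sequence. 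The compactness/equicontinuity argument above is what I would rely on, using that the bound in Lemma~\ref{lemma:Im_contraction} does not depend on $m$.
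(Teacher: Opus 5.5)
Your proof is correct. It uses the same two analytic facts as the paper: the $m$-uniform bound $\|\mathcal I_m\|\le 1$ from Lemma~\ref{lemma:Im_contraction}, and the strong convergence $\mathcal I_m h\to h$. The difference is only in the probabilistic wrapper.

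The paper's proof is the route you list as an alternative. It applies Skorokhod's representation in the separable space $L^1(0,1)$ (and likewise in $L^2$ or $L^p$) to get versions $\mathbb Q_n'\to\mathbb Q'$ almost surely. It then applies Theorem~\ref{teo:Im_convergence} pathwise, giving $\mathcal I_m(\mathbb Q_n')\to\mathbb Q'$ a.s.\ as $n,m\to\infty$, and transfers the result back by equality in law. The $\widetilde{\mathbb Q}_{n,m}$ part is handled in the text exactly as you do, via the $O(\sqrt{n/m})$ bound of Lemma~\ref{lemma:HCDF_CDF_convergence_rate}.

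Your uniform-convergence-on-compacts step is the compact-set form of what Theorem~\ref{teo:Im_convergence} already states sequentially ($h_n\to h$ implies $\mathcal I_m h_n\to h$ jointly in $n,m$). That is why the paper can skip the asymptotic-tightness bookkeeping and be shorter.

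What your version gives, stated explicitly, is the asymptotic equivalence $\|\mathbb Q_{n,m}-\mathbb Q_n\|_p\to_p 0$, which is convenient for later Slutsky-type arguments with functionals. The same fact can also be read off the Skorokhod construction.

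The measurability concern that led you to prefer the tightness argument is not a real obstacle. $\mathbb Q_n$ is a continuous function of the order statistics into a separable $L^p(0,1)$, so it is Borel measurable, and the Skorokhod--Dudley representation applies directly.
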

\noindent Finally, the results of \cite{beare2025necessary} allow us to extend Theorem~\ref{theorem:weak} to the bootstrap version of $\mathbb Q_{n,m}$, i.e., $\mathbb Q_{n,m}^*$, and the corresponding bootstrap version of $\widetilde{ \mathbb Q}_{n,m}$, defined as $\widetilde {\mathbb Q}_{n,m}^*=\sqrt n((\F_{n,m}^*)^{-1}-\F_{n}^{-1})$. If $n=o(m)$, Lemma~\ref{lemma:HCDF_CDF_convergence_rate} still applies, and the difference between $\widetilde{\mathbb Q}_{n,m}^*$ and $\mathbb Q_{n,m}^*$ is asymptotically negligible in $L^1(0,1)$.

\begin{corollary}\label{cor_boot_inf}
Under Property~\ref{prop:Q}, conditionally on the data, in probability,
$
\mathbb Q_{n,m}^* \rightsquigarrow \mathbb Q$
 in $L^1(0,1)$ as $n,m\to\infty$.
If $n=o(m)$, the same result holds for $\widetilde{\mathbb Q}_{n,m}^*$.
\end{corollary}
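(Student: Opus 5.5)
The bootstrap process decomposes as $\mathbb Q_{n,m}^*=\mathcal I_m(\mathbb Q_n^*)$, where $\mathbb Q_n^*=\sqrt n((\F_n^*)^{-1}-\F_n^{-1})$. This uses $(\F^*_{n,m})^{-1}=\mathcal I_m((\F_n^*)^{-1})$, $\F_{n,m}^{-1}=\mathcal I_m(\F_n^{-1})$ and the linearity of $\mathcal I_m$. By \citet{beare2025necessary}, under Property~\ref{prop:Q}, $\mathbb Q_n^*\rightsquigarrow\mathbb Q$ in $L^1(0,1)$ conditionally on the data, in probability. I formalize this with the bounded Lipschitz metric: $\sup_{h\in BL_1(L^1)}|\E_M h(\mathbb Q_n^*)-\E h(\mathbb Q)|\to_p 0$.

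I then split
\[
\mathbb Q_{n,m}^*-\mathbb Q=\mathcal I_m(\mathbb Q_n^*-\mathbb Q)+\big(\mathcal I_m(\mathbb Q)-\mathbb Q\big).
\]
\begin{itemize}
\item By Lemma~\ref{lemma:Im_contraction} with $p=1$, $\mathcal I_m$ is $1$-Lipschitz in $L^1$ uniformly in $m$. Hence $h\circ\mathcal I_m\in BL_1$ whenever $h\in BL_1$.
\item For the deterministic-operator error, Theorem~\ref{teo:Im_convergence} applied with $h_n\equiv \mathbb Q(\omega)$ gives $\|\mathcal I_m(\mathbb Q)-\mathbb Q\|_1\to0$ pathwise, because $\mathbb Q\in L^1(0,1)$ a.s. under Property~\ref{prop:Q}. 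Dominated convergence then gives $\E\min(1,\|\mathcal I_m(\mathbb Q)-\mathbb Q\|_1)\to0$.
\end{itemize}
For $h\in BL_1$ this yields
\[
|\E_M h(\mathbb Q_{n,m}^*)-\E h(\mathbb Q)|\le |\E_M (h\circ\mathcal I_m)(\mathbb Q_n^*)-\E (h\circ \mathcal I_m)(\mathbb Q)|+\E\min(2,\|\mathcal I_m\mathbb Q-\mathbb Q\|_1).
\]
The first term is bounded by the $BL$ distance between $\mathbb Q_n^*$ and $\mathbb Q$, which is $o_p(1)$ and does not depend on $m$. The second term is deterministic and tends to $0$. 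So the supremum over $h\in BL_1$ tends to $0$ in probability as $n,m\to\infty$ jointly.

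Two further points are needed.
\begin{itemize}
\item \textbf{Uniformity in $m$.} The first term does not depend on $m$ after taking the supremum over $BL_1$, because $h\circ\mathcal I_m$ stays in $BL_1$. This is why the joint limit involves no constraint linking $m$ and $n$.
\item \textbf{Measurability.} Since $\mathcal I_m$ is continuous, the asymptotic measurability conditions carry over from $\mathbb Q_n^*$ to $\mathbb Q_{n,m}^*$.
\end{itemize}

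For $\widetilde{\mathbb Q}_{n,m}^*$, I write
\[
\widetilde{\mathbb Q}_{n,m}^*-\mathbb Q_{n,m}^*=\sqrt n(\F_{n,m}^{-1}-\F_n^{-1}).
\]
This remainder is data-dependent but not bootstrap-random, so it suffices to show $\|\sqrt n(\F_{n,m}^{-1}-\F_n^{-1})\|_1\to_p0$. I expand it as
\[
\sqrt n(\F_{n,m}^{-1}-\F_n^{-1})=\mathcal I_m(\mathbb Q_n)-\mathbb Q_n+\sqrt n(F_m^{-1}-F^{-1}).
\]
\begin{itemize}
\item The last term has $L^1$ norm $\sqrt n\|F_m-F\|_1=O(\sqrt{n/m})\to0$ by Lemma~\ref{lemma:HCDF_CDF_convergence_rate}, since $n=o(m)$.
\item The term $\mathcal I_m(\mathbb Q_n)-\mathbb Q_n$ is the main obstacle. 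I write it as $(\mathcal I_m-\mathrm{Id})(\mathbb Q_n-\mathbb Q)+(\mathcal I_m-\mathrm{Id})\mathbb Q$. The second piece vanishes as above. The first is bounded in norm by $2\|\mathbb Q_n-\mathbb Q\|_1$, which is not $o_p(1)$ in general. So I would use a Skorokhod representation for $\mathbb Q_n\rightsquigarrow\mathbb Q$ in the separable space $L^1$, under which $\|\mathbb Q_n-\mathbb Q\|_1\to0$ a.s. Theorem~\ref{teo:Im_convergence} then gives $\|\mathcal I_m\mathbb Q_n-\mathbb Q\|_1\to0$, and the term vanishes.
\end{itemize}
Hence the remainder is $o_p(1)$ in $L^1$. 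A shift by an $o_p(1)$ non-bootstrap term preserves conditional weak convergence in probability, because each $h\in BL_1$ is $1$-Lipschitz. This gives the claim for $\widetilde{\mathbb Q}_{n,m}^*$.
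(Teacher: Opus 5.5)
Your proof is correct, and it is more explicit than the paper's. The paper gives no separate argument: it says only that the Beare--Kaji bootstrap result lets one repeat the proof of Theorem~\ref{theorem:weak} (a Skorokhod representation followed by Theorem~\ref{teo:Im_convergence}), and that for $\widetilde{\mathbb Q}_{n,m}^*$ Lemma~\ref{lemma:HCDF_CDF_convergence_rate} ``still applies''.

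For the first claim you replace the Skorokhod coupling by a bounded-Lipschitz bound. By Lemma~\ref{lemma:Im_contraction}, $\mathcal I_m$ is a linear $L^1$-contraction uniformly in $m$, so $h\circ\mathcal I_m\in BL_1$. The error is then at most the $m$-free quantity $d_{BL}(\mathbb Q_n^*,\mathbb Q)=o_p(1)$ plus the deterministic term $\E\min(2,\|\mathcal I_m\mathbb Q-\mathbb Q\|_1)\to0$. This is the better tool here. An almost-sure representation is not directly available for conditional laws that converge only in probability; one would need a subsequence argument. Your bound works in that mode as it stands, and it shows why no relation between $m$ and $n$ is required.

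For $\widetilde{\mathbb Q}_{n,m}^*$ you expand the data-dependent shift around $F$. You use Lemma~\ref{lemma:HCDF_CDF_convergence_rate} for the bias, and Skorokhod plus Theorem~\ref{teo:Im_convergence} for $\mathcal I_m(\mathbb Q_n)-\mathbb Q_n$. The paper's remark reads more naturally as applying the bound proved in Lemma~\ref{lemma:HCDF_CDF_convergence_rate} with $\F_n$ in the role of $F$, which gives
\[
\sqrt n\,\|\F_{n,m}^{-1}-\F_n^{-1}\|_1\le\sqrt{n/m}\,\int_\R\sqrt{\F_n(x)(1-\F_n(x))}\,dx .
\]
By Jensen, the random factor has expectation at most $\int_\R\sqrt{F(1-F)}\,dx=\int_0^1\sqrt{t(1-t)}\,dQ(t)$. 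That route handles the shift in one line, gives convergence in $L^1(P)$, and does not use weak convergence of $\mathbb Q_n$. It does, however, need this tightness step for the random constant, which the paper leaves implicit. Your version reuses machinery already established in the paper and is equally valid.
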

 

\section{Asymptotic distribution of distance-based functionals}\label{sec6}

\subsection{Convergence in distribution of $L^p$ distances}

We start by studying convergence in distribution of $L^p$ norms of the difference between $\F_{n,m}$ and $F_m$, or between $\F_{n,m}$ and $F$. Such functionals can be used, for instance, in goodness-of-fit procedures. Recall that, for $p=1$, the two distances coincide, that is, given CDFs $F$ and $G$, $\|F-G\|_1=\mathcal{W}_1(F,G).$

We first focus on the case $p=1$. For a fixed value of $m$, it is relatively simple to establish that the limit of $\sqrt n\| \F_{n,m}-F_m\|_1$ is the sum of absolute values of zero-mean Gaussian variables. Moreover, we already know by Theorem~\ref{thm:Hm_L1_contraction} that, for the general $L^p$ norm, $\sqrt{n} \|\F_{n,m}-F_m\|_p^p\leq \sqrt{n} \|\F_{n}-F\|_1=\sqrt n \|\F_n^{-1}-F^{-1}\|_1$, where it is easily seen that the latter converges in distribution to $\int_0^1|\mathbb Q|$. For $p=1$ and for $n,m\to \infty,$ we show that the limit coincides with the limit of the upper bound. This follows directly from the identity $\|\F_{n,m}-F_m\|_1=\mathcal{W}_1(\F_{n,m},F_m)$ combined with Theorem~\ref{theorem:weak}.
\begin{theorem}\label{theo:asymptotic} Under Property~\ref{prop:Q}, as \(n \to \infty\) with \(m\) fixed,
	\[
	\sqrt{n}\,\|\mathbb{F}_{n,m}-F_m\|_1 
	\to_d
	\frac{1}{m}\sum_{j=1}^m\Bigg| \int_0^1 \mathbb QdF_{B_{j:m}}\Big|.
	\]
	Moreover, as $n,m\to\infty$,
	$
	\sqrt{n}\,\|\mathbb{F}_{n,m}-F_m\|_1 
	\to_d
	\int_0^1 |\mathbb{Q}|.$
\end{theorem}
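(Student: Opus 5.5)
The plan is to move everything to the quantile domain and use the identity $\|G-H\|_1=\mathcal W_1(G,H)=\|G^{-1}-H^{-1}\|_1$. Since $\F_{n,m}^{-1}=\mathcal I_m(\F_n^{-1})$ and $F_m^{-1}=\mathcal I_m(F^{-1})$, linearity of $\mathcal I_m$ gives
\[
\sqrt n\,\|\F_{n,m}-F_m\|_1=\sqrt n\,\|\F_{n,m}^{-1}-F_m^{-1}\|_1=\|\mathbb Q_{n,m}\|_1=\|\mathcal I_m(\mathbb Q_n)\|_1 .
\]
The map $h\mapsto\|h\|_1$ is continuous (indeed $1$-Lipschitz) on $L^1(0,1)$. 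It therefore suffices to know the weak limit of $\mathbb Q_{n,m}$ in $L^1(0,1)$ and apply the continuous mapping theorem.

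For fixed $m$, Theorem~\ref{theo:convergence_fixedm} (with $p=1$) gives $\mathbb Q_{n,m}\rightsquigarrow\mathbb Q_m=\mathcal I_m(\mathbb Q)$ in $L^1(0,1)$ under Property~\ref{prop:Q}. The continuous mapping theorem then yields $\sqrt n\|\F_{n,m}-F_m\|_1\to_d\|\mathbb Q_m\|_1$. Because $\mathbb Q_m$ is the step function equal to $Z_j=\int_0^1\mathbb Q\,dF_{B_{j:m}}$ on $((j-1)/m,j/m]$, we have $\|\mathbb Q_m\|_1=\frac1m\sum_{j=1}^m|Z_j|$, which is the stated limit. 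One small point needs a check: $Z_j$ is well defined, since $\mathbb Q\in L^1$ a.s. under Property~\ref{prop:Q}. I read $\int_0^1\mathbb Q\,dF_{B_{j:m}}$ as $\int_0^1\mathbb Q\,f_{B_{j:m}}$, and $f_{B_{j:m}}$ is bounded, so the integral is finite.

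For the joint regime, Theorem~\ref{theorem:weak} gives $\mathbb Q_{n,m}\rightsquigarrow\mathbb Q$ in $L^1(0,1)$ as $n,m\to\infty$ under Property~\ref{prop:Q}. Applying the continuous mapping theorem to $\|\cdot\|_1$ again gives $\sqrt n\|\F_{n,m}-F_m\|_1\to_d\int_0^1|\mathbb Q|$.

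The only step I regard as delicate is interpreting the double limit $n,m\to\infty$ within the continuous mapping theorem. I would handle it along arbitrary sequences $m=m_n\to\infty$, which is how Theorem~\ref{theorem:weak} is to be read: weak convergence along any such sequence, followed by the continuous mapping theorem along that sequence. If needed, I would justify Theorem~\ref{theorem:weak} along sequences with the bound
\[
\|\mathcal I_m(\mathbb Q_n)-\mathcal I_m(\mathbb Q)\|_1\le\|\mathbb Q_n-\mathbb Q\|_1
\]
from Lemma~\ref{lemma:Im_contraction}, via a Skorokhod representation. Combined with $\|\mathcal I_m(\mathbb Q)-\mathbb Q\|_1\to0$ from Theorem~\ref{teo:Im_convergence}, this bound controls the remainder.
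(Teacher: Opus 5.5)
Your proposal is correct and follows essentially the same route as the paper. The paper also rewrites $\sqrt n\,\|\F_{n,m}-F_m\|_1$ as $\|\mathbb Q_{n,m}\|_1$ via the identity $\|\F_{n,m}-F_m\|_1=\mathcal W_1(\F_{n,m},F_m)$ and applies the continuous mapping theorem to the limits in Theorems~\ref{theo:convergence_fixedm} and~\ref{theorem:weak}, and your treatment of the double limit along sequences $m=m_n\to\infty$ with a Skorokhod representation matches how the paper proves Theorem~\ref{theorem:weak}.
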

\noindent Now, by the reverse triangle inequality,
\[
\Bigl|\sqrt n\,\|\F_{n,m}-F\|_1-\sqrt n\,\|\F_{n,m}-F_m\|_1\Bigr|
\le \sqrt n\,\|F-F_m\|_1.
\]
Since $\sqrt n\|F-F_m\|_1=O(n^{1/2}m^{-1/2})$ by Lemma~\ref{lemma:HCDF_CDF_convergence_rate}, we immediately derive the following result.
\begin{corollary}\label{cor:F}
	Under Property~\ref{prop:Q}, if $n,m\to\infty$ and $n = o(m)$, then $\sqrt n \|\F_{n,m}-F\|_1\to_d \int_0^1|\mathbb Q|$.
\end{corollary}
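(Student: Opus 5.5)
The statement to prove is Corollary~\ref{cor:F}. The plan is to deduce it from the second part of Theorem~\ref{theo:asymptotic} together with a deterministic bias bound, via Slutsky's lemma.

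First, write
\[
\sqrt n\,\|\F_{n,m}-F\|_1=\sqrt n\,\|\F_{n,m}-F_m\|_1+R_{n,m},
\qquad R_{n,m}:=\sqrt n\,\|\F_{n,m}-F\|_1-\sqrt n\,\|\F_{n,m}-F_m\|_1 .
\]
By the reverse triangle inequality in $L^1$, displayed just before the statement,
\[
|R_{n,m}|\le \sqrt n\,\|F_m-F\|_1 .
\]
This bound holds deterministically, for every sample path.

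Second, control the bias term. Property~\ref{prop:Q} contains the integrability condition $\int_0^1\sqrt{u(1-u)}\,dQ(u)<\infty$, so Lemma~\ref{lemma:HCDF_CDF_convergence_rate} applies. It gives a constant $C$, depending only on $F$, with $\|F_m-F\|_1\le C m^{-1/2}$ for all $m$ large. Hence
\[
|R_{n,m}|\le C\sqrt{n/m}\to 0
\]
whenever $n=o(m)$. So $R_{n,m}\to 0$ surely, and in particular in probability.

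Third, by the second part of Theorem~\ref{theo:asymptotic}, which needs only Property~\ref{prop:Q}, we have $\sqrt n\,\|\F_{n,m}-F_m\|_1\to_d\int_0^1|\mathbb Q|$ as $n,m\to\infty$. That theorem imposes no relative rate on $n$ and $m$, so it holds along any joint sequence with $n=o(m)$. Slutsky's lemma then gives $\sqrt n\,\|\F_{n,m}-F\|_1\to_d\int_0^1|\mathbb Q|$.

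The only point needing care is the interpretation of the joint limit. I would state it along an arbitrary sequence $m=m_n\to\infty$ with $n/m_n\to0$. Along such a sequence both Theorem~\ref{theo:asymptotic} and the bias bound apply, so no real obstacle is expected.
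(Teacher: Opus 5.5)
Your proposal is correct and follows the paper's argument: you bound the difference by $\sqrt n\,\|F_m-F\|_1=O(\sqrt{n/m})\to 0$ using the reverse triangle inequality and Lemma~\ref{lemma:HCDF_CDF_convergence_rate}, then combine the second part of Theorem~\ref{theo:asymptotic} with Slutsky's lemma. Reading the joint limit along sequences $m_n\to\infty$ with $n/m_n\to 0$ only makes explicit what the paper leaves implicit.
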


\noindent To establish the convergence in distribution of $\|\F_{n,m}-F_m\|_p,$ scaled by a suitable sequence of constants, we distinguish the cases when $m$ is fixed to the case when $m\to\infty$. In the first case, for $p \in [1,\infty)$, we establish the convergence of $n^{1/(2p)}\|\F_{n,m}-F_m\|_p$ to a limit that generalizes the one appearing in the first assertion of Theorem~\ref{theo:asymptotic}.

\begin{theorem}\label{theo:asymptotic_fixed_m}
	Under Property~\ref{prop:Q}, for $n\to\infty$, with $m$ fixed, 
	$$\sqrt n\| \F_{n,m}-F_m\|_p^p\to_d\frac{1}{m^p}\sum_{j=1}^m \Bigg|\int_0^1 \mathbb Q dF_{B_{j:m}}\Bigg|.$$
\end{theorem}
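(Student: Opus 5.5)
The idea is to reduce $\|\F_{n,m}-F_m\|_p^p$, for $n$ large, to an exact finite sum of the quantities $|\hat\mu_{j:m}-\mu_{j:m}|$, and then apply Theorem~\ref{theo:convergence_fixedm} together with the continuous mapping theorem.

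\emph{Step 1: separation of the jump points.} I first check that the jump points $\mu_{1:m},\dots,\mu_{m:m}$ of $F_m$ are strictly increasing. If $F$ is degenerate, then $Q$ is constant, $\mathbb Q=0$, and both sides of the statement vanish, so that case is trivial. If $F$ is nondegenerate, integration by parts gives
\[
\mu_{j+1:m}-\mu_{j:m}=\int_0^1 \bigl(F_{B_{j:m}}(u)-F_{B_{j+1:m}}(u)\bigr)\,dQ(u).
\]
The integrand is strictly positive on $(0,1)$ and $Q$ is nonconstant, so this difference is strictly positive. Let $\delta:=\min_j(\mu_{j+1:m}-\mu_{j:m})>0$. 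By the a.s. consistency proposition for $\hat\mu_{j:m}$ (with $m$ fixed), almost surely there is an $n_0$ such that $\max_j|\hat\mu_{j:m}-\mu_{j:m}|<\delta/2$ for all $n>n_0$.

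\emph{Step 2: exact identity on that event.} When $\max_j|\hat\mu_{j:m}-\mu_{j:m}|<\delta/2$, the closed intervals $I_j$ with endpoints $\hat\mu_{j:m}$ and $\mu_{j:m}$ are pairwise disjoint. Outside $\bigcup_j I_j$, the two step functions $\F_{n,m}$ and $F_m$ have passed the same number of jumps, so $\F_{n,m}=F_m$ there. Inside $I_j$ (up to endpoints), exactly one of the two functions has made its $j$-th jump, so $|\F_{n,m}-F_m|=1/m$. Hence, eventually a.s.,
\[
\sqrt n\,\|\F_{n,m}-F_m\|_p^p=\frac{1}{m^p}\sum_{j=1}^m \sqrt n\,|\hat\mu_{j:m}-\mu_{j:m}|.
\]
Since the event in Step 1 has probability tending to one, by Slutsky's theorem it suffices to find the limit in distribution of the right-hand side.

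\emph{Step 3: joint limit.} We have $\sqrt n(\hat\mu_{j:m}-\mu_{j:m})=\int_0^1 \mathbb Q_n\, dF_{B_{j:m}}=\int_0^1 \mathbb Q_n f_{B_{j:m}}$. The map
\[
h\mapsto \Bigl(\textstyle\int_0^1 h f_{B_{j:m}}\Bigr)_{j=1}^m
\]
is continuous from $L^1(0,1)$ into $\R^m$, because each beta density $f_{B_{j:m}}$ is bounded. Under Property~\ref{prop:Q}, $\mathbb Q_n\rightsquigarrow\mathbb Q$ in $L^1(0,1)$ (equivalently, one can use $\mathbb Q_{n,m}=\mathcal I_m(\mathbb Q_n)$ and Theorem~\ref{theo:convergence_fixedm}). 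The continuous mapping theorem then gives joint convergence of this vector to $(Z_1,\dots,Z_m)$. Applying the continuous map $(z_j)\mapsto m^{-p}\sum_j|z_j|$ yields the stated limit.

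The main obstacle is Step 2: the exact-identity argument needs strict separation of the $\mu_{j:m}$, which must be justified as in Step 1 (with the degenerate case handled separately). Everything after that is routine continuous mapping.
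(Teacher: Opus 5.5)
Your proposal is correct and takes the same route as the paper. Both restrict to the event $\max_j|\hat\mu_{j:m}-\mu_{j:m}|<\delta/2$, with $\delta=\min_j(\mu_{j+1:m}-\mu_{j:m})$. On that event $\|\F_{n,m}-F_m\|_p^p$ equals $m^{-p}\sum_j|\hat\mu_{j:m}-\mu_{j:m}|$ exactly, and the limit follows by continuous mapping from $\mathbb Q_n\rightsquigarrow\mathbb Q$ in $L^1(0,1)$ together with the event having probability tending to one. You also supply details the paper leaves implicit: the strict positivity of $\delta$ via integration by parts against $dQ$, the degenerate case, and the Slutsky-type step that turns ``under $E_n$'' into a genuine distributional limit.
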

\noindent We proved that $\sqrt n\| \F_{n,m}-F_m\|_p^p=O_p(1)$, or, equivalently, $ n^{\frac1{2p}}\| \F_{n,m}-F_m\|_p=O_p(1)$. This means that $ \sqrt n\| \F_{n,m}-F_m\|_p=O_p(n^\frac{p-1}{2p})$, that is, if scaled by $\sqrt n,$ the $L^p$ norm of the difference diverges to infinity whenever $p>1.$ 

For $n,m\to\infty,$ the number of parameters to estimate, $\mu_{j:m}$, diverge, as well as the sample size. The following result establishes convergence under some rather restrictive conditions on $F$.
\begin{proposition}\label{theo:asymptotic_bounded_support}
	Assume that $F$ has a bounded support and density function $f$ such that $0<a<f<b<\infty.$ Then, if $m=o(\sqrt n)$, $m^{p-1}\sqrt n \| \F_{n,m}-F_m\|_p^p\to_d \int_0^1|\mathbb Q|$.
\end{proposition}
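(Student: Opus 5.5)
The plan is to reduce the statement to the $p=1$ case (Theorem~\ref{theo:asymptotic}) by computing $\|\F_{n,m}-F_m\|_p^p$ exactly in terms of the jump locations. Both $\F_{n,m}$ and $F_m$ are step functions with jumps of size $1/m$ at $\hat\mu_{1:m}\le\dots\le\hat\mu_{m:m}$ and $\mu_{1:m}\le\dots\le\mu_{m:m}$ respectively. Under the density bounds $0<a<f<b$ on a bounded support, consecutive expected order statistics are well separated: $\mu_{j+1:m}-\mu_{j:m}=\int_0^1 Q\,(f_{B_{j+1:m}}-f_{B_{j:m}})$, and since $Q'=1/f\circ Q\ge 1/b$, this gap is bounded below by a constant times $1/m$. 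I would make this precise by writing $\mu_{j+1:m}-\mu_{j:m}=\int_0^1 q(u)\,[F_{B_{j:m}}(u)-F_{B_{j+1:m}}(u)]\,du\ge b^{-1}\int_0^1 [F_{B_{j:m}}-F_{B_{j+1:m}}]=b^{-1}/(m+1)$.

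The stochastic side comes next. By Theorem~\ref{theo:convergence_fixedm}'s argument and $\mathcal I_m$ being an $L^\infty$ contraction, I need $\max_j|\hat\mu_{j:m}-\mu_{j:m}|\le\|\F_n^{-1}-F^{-1}\|_\infty$. Since $q\le 1/a$ is bounded and the support is compact, the classical uniform quantile process gives $\|\F_n^{-1}-F^{-1}\|_\infty=O_p(n^{-1/2})$. Hence, when $m=o(\sqrt n)$, with probability tending to one every displacement $|\hat\mu_{j:m}-\mu_{j:m}|$ is smaller than half the minimal gap $c/m$. On this event the intervals between $\hat\mu_{j:m}$ and $\mu_{j:m}$ are disjoint, and on each of them $|\F_{n,m}-F_m|=1/m$ exactly, with zero difference elsewhere. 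This is the key step. The main obstacle is ensuring no two intervals overlap and no ordering crossing occurs, and the gap bound plus the rate condition handle exactly this.

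On that event,
\[
\|\F_{n,m}-F_m\|_p^p=\frac{1}{m^p}\sum_{j=1}^m|\hat\mu_{j:m}-\mu_{j:m}|=m^{1-p}\,\|\F_{n,m}-F_m\|_1 ,
\]
where the last identity is the same computation with $p=1$. Therefore
\[
m^{p-1}\sqrt n\,\|\F_{n,m}-F_m\|_p^p=\sqrt n\,\|\F_{n,m}-F_m\|_1
\]
with probability tending to one. The second assertion of Theorem~\ref{theo:asymptotic} then gives convergence in distribution to $\int_0^1|\mathbb Q|$ as $n,m\to\infty$. Property~\ref{prop:Q} holds trivially here, since $Q$ is Lipschitz on $(0,1)$ and bounded. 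A difference that vanishes in probability does not affect the limit in distribution, which completes the argument.
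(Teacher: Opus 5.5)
Your proof is correct and follows the paper's own argument: the gap bound $\mu_{j+1:m}-\mu_{j:m}\ge 1/(b(m+1))$, the sup-norm bound $\max_j|\hat\mu_{j:m}-\mu_{j:m}|\le\|\F_n^{-1}-F^{-1}\|_\infty=O_p(n^{-1/2})$, the event on which every displacement is below half the minimal gap (probability $\to 1$ when $m=o(\sqrt n)$), and then reduction to the joint-limit $L^1$ result of Theorem~\ref{theo:asymptotic}. Your explicit identity $m^{p-1}\|\F_{n,m}-F_m\|_p^p=\|\F_{n,m}-F_m\|_1$ on that event states precisely what the paper leaves as ``similar steps''; like the paper, you also implicitly need $m\to\infty$ to invoke that theorem, which $m=o(\sqrt n)$ alone does not force.
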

\noindent A simple particular case of the above result is obtained for the uniform distribution, which directly satisfies $\mu_{j+1:m}-\mu_{j:m}=\tfrac1{m+1}$ and $\| Q_n-Q\|_\infty=O_p(n^{-1/2})$. Other distributions which satisfy the assumptions are truncated distributions with well-behaved densities, for example, one can take normal or exponential distributions, restricted on supports of the form $[a,b]$.

Bootstrap analogues can be obtained by considering $\|\F_{n,m}^*-\F_{n,m}\|_p$, suitably rescaled. The fixed-$m$ limits corresponding to Theorems~\ref{theo:asymptotic} and \ref{theo:asymptotic_fixed_m} follow from Corollary~\ref{cor_boot}, conditionally on the data, in probability. For $n,m\to \infty,$ the $L^1$ limit follows from Corollary~\ref{cor_boot_inf}. 
\subsection{Convergence in distribution of Wasserstein distances}

For a fixed value of $m$, noticing that
$$
n^{p/2}W_p^p(F_{n,m},F_m)
=
\frac1m\sum_{j=1}^m \big|\sqrt n(\hat\mu_{j:m}-\mu_{j:m})\big|^p,$$
we easily obtain convergence in distribution.
\begin{proposition}
	Let $m\ge 1$ be fixed and let $p\ge 1$. Under Property~\ref{prop:Q},
	\[
	n^{p/2}W_p^p(\F_{n,m},F_m)
	\to_d
	\frac1m\sum_{j=1}^m
	\left|
	\int_0^1 \mathbb QdF_{B_{j:m}}
	\right|^p.
	\]
\end{proposition}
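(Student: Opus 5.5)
The plan is to reduce the claim to the finite-dimensional vector $\sqrt n(\hat\mu_{j:m}-\mu_{j:m})_{j=1}^m$ and then apply Theorem~\ref{theo:convergence_fixedm} together with the continuous mapping theorem. First we write the identity already noted before the statement. Since $\F_{n,m}^{-1}=\mathcal I_m(\F_n^{-1})$ and $F_m^{-1}=\mathcal I_m(F^{-1})$ are both step functions on the regular partition, we have
\[
n^{p/2}W_p^p(\F_{n,m},F_m)=\int_0^1 |\mathbb Q_{n,m}(u)|^p\,du=\|\mathbb Q_{n,m}\|_p^p,
\]
where $\mathbb Q_{n,m}=\mathcal I_m(\mathbb Q_n)$ takes the value $\sqrt n(\hat\mu_{j:m}-\mu_{j:m})$ on $((j-1)/m,j/m]$.

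Under Property~\ref{prop:Q}, Theorem~\ref{theo:convergence_fixedm} gives $\mathbb Q_{n,m}\rightsquigarrow\mathbb Q_m$ in $L^p(0,1)$. The map $h\mapsto\|h\|_p^p$ is continuous on $L^p(0,1)$, so the continuous mapping theorem yields $\|\mathbb Q_{n,m}\|_p^p\to_d\|\mathbb Q_m\|_p^p$.

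We then evaluate the limit. Because $\mathbb Q_m(u)=Z_{\lceil mu\rceil}$ with $Z_j=\int_0^1\mathbb Q\,dF_{B_{j:m}}$, we get $\|\mathbb Q_m\|_p^p=\frac1m\sum_{j=1}^m|Z_j|^p$, which is exactly the stated limit.

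An alternative route avoids function spaces altogether. One can note that the functional $\mathcal I_m$ composed with evaluation on each cell is a continuous linear map $L^1(0,1)\to\R^m$, namely $h\mapsto(\int_0^1 h\,dF_{B_{j:m}})_j$. Its continuity holds because each beta density is bounded for $m$ fixed, which is where Lemma~\ref{lemma:Im_contraction} enters. Applying it to $\mathbb Q_n\rightsquigarrow\mathbb Q$ in $L^1(0,1)$, as established by \citet{beare2025necessary}, gives joint convergence of the vector, and then $x\mapsto\frac1m\sum_j|x_j|^p$ finishes the argument.

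The only delicate point is identifying $\int_0^1\mathbb Q_n\,dF_{B_{j:m}}$ with $\sqrt n(\hat\mu_{j:m}-\mu_{j:m})$. This is immediate from the $L$-functional representation $\mu_{j:m}(G)=\int_0^1 G^{-1}\,dF_{B_{j:m}}$. The paper's definition of $\mathcal I_m$ writes $df_{B_{j:m}}$, which should be read as $dF_{B_{j:m}}$, that is, integration against the beta density. With that reading no real obstacle remains.
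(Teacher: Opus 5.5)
Your proposal is correct and follows the paper's argument: the paper also reduces $n^{p/2}W_p^p(\F_{n,m},F_m)$ to $\frac1m\sum_j|\sqrt n(\hat\mu_{j:m}-\mu_{j:m})|^p$ and concludes from the fixed-$m$ weak convergence $\mathbb Q_{n,m}\rightsquigarrow\mathbb Q_m=Z_{\lceil m\cdot\rceil}$ of Theorem~\ref{theo:convergence_fixedm} via the continuous mapping theorem, exactly as in either of your routes. Your reading of $df_{B_{j:m}}$ as $dF_{B_{j:m}}$ is right, and in your alternative route the continuity of $L^1(0,1)\to\mathbb{R}^m$ follows from the boundedness of the beta densities alone, so Lemma~\ref{lemma:Im_contraction} is not what is used there.
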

\noindent This result admits a bootstrap analogue. By Corollary~\ref{cor_boot},
$
n^{p/2}W_p^p(\F_{n,m}^*,\F_{n,m})
$
converges conditionally on the data, in probability, to the same limit.

If $m\to\infty$, Theorem~\ref{theo:asymptotic} already addresses convergence in distribution of $\mathcal{W}_1$, given the identity $\|\F_{n,m}-F_m\|_1=\mathcal{W}_1(\F_{n,m},F_m)$. To extend the result to $p>1$ we need weak convergence of the quantile process in $L^p(0,1)$, which could be obtained in general under stronger assumptions. Under Property~\ref{prop:Q2}, we are able to follow the same steps of the proof of Theorem~\ref{theo:asymptotic} and establish the following result.
\begin{theorem}Under Property~\ref{prop:Q2}, for $n,m\to\infty$,
	$$nW_2^2(\F_{n,m},F_m)\to_d \int_0^1\mathbb Q(t)^2dt.$$
\end{theorem}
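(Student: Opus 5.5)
We use the quantile representation of the Wasserstein distance. Since $\F_{n,m}^{-1}=\mathcal I_m(\F_n^{-1})$ and $F_m^{-1}=\mathcal I_m(F^{-1})$, we have
\[
nW_2^2(\F_{n,m},F_m)=\int_0^1 \bigl|\sqrt n(\F_{n,m}^{-1}-F_m^{-1})\bigr|^2dt=\|\mathbb Q_{n,m}\|_2^2 .
\]
The statement therefore reduces to weak convergence of $\mathbb Q_{n,m}$ in $L^2(0,1)$, followed by the continuous mapping theorem applied to the map $h\mapsto \|h\|_2^2$. This map is continuous on $L^2(0,1)$, so $\|\mathbb Q_{n,m}\|_2^2\to_d\|\mathbb Q\|_2^2=\int_0^1\mathbb Q(t)^2dt$ follows once $\mathbb Q_{n,m}\rightsquigarrow\mathbb Q$ in $L^2(0,1)$.

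The weak convergence itself is the $L^2$ assertion of Theorem~\ref{theorem:weak} under Property~\ref{prop:Q2}, so it suffices to check how that assertion is obtained. By Theorem 4.6 of \cite{delbarrio}, Property~\ref{prop:Q2} gives $\mathbb Q_n\rightsquigarrow\mathbb Q$ in $L^2(0,1)$. By linearity of $\mathcal I_m$,
\[
\mathbb Q_{n,m}=\mathcal I_m(\mathbb Q_n)=\mathcal I_m(\mathbb Q)+\mathcal I_m(\mathbb Q_n-\mathbb Q).
\]
We treat the two terms separately.

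For the first term, $\mathbb Q$ has paths in $L^2(0,1)$, so Theorem~\ref{teo:Im_convergence}, applied pathwise with the constant sequence $h_n=\mathbb Q$, gives $\|\mathcal I_m(\mathbb Q)-\mathbb Q\|_2\to0$ almost surely as $m\to\infty$.

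For the second term we cannot argue pathwise, because $\mathbb Q_n$ only converges weakly. The step to handle carefully is therefore showing that the second term is negligible, uniformly in the joint regime. We use the Skorokhod representation in the separable space $L^2(0,1)$: there exist versions $\tilde{\mathbb Q}_n\stackrel{d}{=}\mathbb Q_n$ and $\tilde{\mathbb Q}\stackrel{d}{=}\mathbb Q$ with $\|\tilde{\mathbb Q}_n-\tilde{\mathbb Q}\|_2\to0$ almost surely. Lemma~\ref{lemma:Im_contraction} then gives
\[
\|\mathcal I_m(\tilde{\mathbb Q}_n-\tilde{\mathbb Q})\|_2\le\|\tilde{\mathbb Q}_n-\tilde{\mathbb Q}\|_2\to0,
\]
and this bound holds uniformly in $m$. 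Combining it with the first term through Theorem~\ref{teo:Im_convergence} (pathwise, with $h_n=\tilde{\mathbb Q}_n$) yields $\|\mathcal I_m(\tilde{\mathbb Q}_n)-\tilde{\mathbb Q}\|_2\to0$ almost surely as $n,m\to\infty$ jointly, along any way in which both indices diverge.

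Since $\mathcal I_m(\tilde{\mathbb Q}_n)\stackrel{d}{=}\mathbb Q_{n,m}$, this proves $\mathbb Q_{n,m}\rightsquigarrow\mathbb Q$ in $L^2(0,1)$. Squaring the norm, which is continuous, gives $nW_2^2(\F_{n,m},F_m)\to_d\int_0^1\mathbb Q(t)^2dt$.
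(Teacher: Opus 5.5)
Your proposal is correct and follows essentially the same route as the paper, which proves this result the same way as Theorem~\ref{theo:asymptotic}: the identity $nW_2^2(\F_{n,m},F_m)=\|\mathbb Q_{n,m}\|_2^2$, then the $L^2(0,1)$ part of Theorem~\ref{theorem:weak} (a Skorokhod representation of $\mathbb Q_n\rightsquigarrow\mathbb Q$ under Property~\ref{prop:Q2}, combined pathwise with Theorem~\ref{teo:Im_convergence}), and finally the continuous mapping theorem for $h\mapsto\|h\|_2^2$. One cosmetic point: your preliminary decomposition $\mathcal I_m(\mathbb Q)+\mathcal I_m(\mathbb Q_n-\mathbb Q)$ only makes sense once $\mathbb Q_n$ and $\mathbb Q$ live on a common probability space, and the Skorokhod coupling you then introduce provides exactly that.
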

\noindent At present, we are not aware of any bootstrap convergence result of the quantile process in $L^2(0,1)$, hence, the above result does not have a bootstrap analogue.

\section{Simulations}\label{sec7}
We compare the performance of the EHCDF with two classic alternatives: the ECDF and the kernel estimator. For the construction of the EHCDF, we set $m = n^\gamma$, with $\gamma \in[1,1.5]$. Recall that the EHCDF is an averaged version of the ECDF where the parameter $m $ controls this effect. For larger values of $\gamma$ the two estimators get closer.

Kernel smoothing provides a natural benchmark in our context. However, there is no single kernel-based specification that is uniformly optimal across distributional settings, since performance depends on the choice of bandwidth and on how support restrictions and boundary effects are handled. The literature addresses these issues through bandwidth selection, transformations, and boundary-based kernel constructions \citep{SheatherJones1991,GeenensWang2018,Nagler2018a,Nagler2018b}. For this reason, we use the \texttt{R} package \texttt{kde1d} \citep{nagler2025package}, which computes kernel estimators for unconstrained as well as support-constrained data. In the unconstrained case (when the support is $\R$), this yields a Gaussian-kernel estimator with the plug-in bandwidth selector of \citet{SheatherJones1991}, which we denote by EKCDF. For constrained supports, the package automatically applies a log-transformation in the presence of a single boundary \citep{GeenensWang2018} or a probit transformation in the presence of two boundaries \citep{Geenens2014}, and we refer to the resulting modified estimator as EKCDFM.

 We consider the following distributions: uniform $\mathcal U[a,b]$, normal $\mathcal N(\mu,\sigma^2)$, log-normal $\mathrm{Log}\mathcal N(\mu,\sigma^2)$, Weibull $\mathcal W(a,\sigma)$, gamma $\Gamma(a,\sigma)$, Gompertz $\mathcal G(a,\sigma)$, beta $\mathcal B(a,b)$, Student's $t_\nu$, and binomial $\mathrm{Bin}(k,\pi)$. We also consider equally weighted mixtures of two distributions $D_1$ and $D_2$, denoted by $\mathcal M(D_1,D_2)$. The parametrisations are standard, except for the Gompertz distribution, whose CDF is $
1-\exp\{a(1-e^{\sigma x})\}.$

We generate Monte Carlo samples of size $n\in\{25,50,100\}$ using 1000 simulation runs. To assess global accuracy of a general estimator $\hat F$, we consider the $L^p$-errors $\|\hat F-F\|_p$ for $p=1,2,\infty$, and report their simulated average values relative to the ECDF, namely
$
L^p_{\%}(\hat F)
=
\frac{\mathbb E\|\hat F-F\|_p}{\mathbb E\|\mathbb F_n-F\|_p}\times 100\%.$
The corresponding results are reported in the Supplementary Material, namely in Table~\ref{tab:sim_results_real}, Table~\ref{tab:sim_results_positive}, Table~\ref{tab:sim_results_bounded}, and Table~\ref{tab:sim_results_mixture}, for distributions supported on $\mathbb R$, on $[0,\infty)$, on compact intervals, and for mixtures, respectively. Since such global criteria do not reveal where the estimation error is located, we also consider the standardised pointwise mean squared error (MSE ratio) at the quantiles $F^{-1}(p)$,
\[
\frac{\mathbb E\bigl[(\hat F(F^{-1}(p))-p)^2\bigr]}
     {\mathbb E\bigl[(\mathbb F_n(F^{-1}(p))-p)^2\bigr]}.
\]
The complete set of MSE-ratio plots is reported in the Supplementary Material, while Figures~\ref{fig:mse_main_distribution1}--\ref{fig:mse_main_distribution2} display representative cases discussed below. 

Code reproducing the simulation study is available at \texttt{github.com/tedescolor/EHCDF}.

\subsection{$L^p$-norm results}




 We start with distributions supported on $\mathbb R$ (Table~\ref{tab:sim_results_real}). For all the distributions where the classic EKCDF performs well, namely those for which regular smoothing improves estimation, the EHCDF also improves on the ECDF by 5--15\%, especially for lower values of $\gamma$. The EKCDF outperforms the EHCDF when the kernel specification is suited to the underlying model, as in the Gaussian case. The improvement in terms of sup-norm is more pronounced. This is a consequence of the continuity of the EKCDF, while the EHCDF is still a step function. When the tails become heavier, however, the ranking may reverse (see for instance the $t_2$ distribution for $n\geq 50$). 


The results for distributions supported on $[0,\infty)$ are illustrated in Table~\ref{tab:sim_results_positive}. Also in this case, the EHCDF typically improves on the ECDF. The performances of EKCDF and EKCDFM can be quite different. The EHCDF often lies between the two kernel variants in terms of error, or even outperform both of them, at least with respect to some metric (see for example the $\Gamma(0.5,1)$ case). 
The EKCDFM does not necessarily improve on the EKCDF, and its performance depends on how suitable the support transformation is for the underlying distribution. As discussed by \citet{wand1991transformations}, transformations are helpful only insofar as they make a global-bandwidth kernel estimator more appropriate for the transformed density. In particular, if the transformation does not sufficiently regularise the density, especially near the boundary, the transformed estimator may perform worse. This is illustrated, for example, by the Weibull family. When the shape parameter is smaller than $1$, the density diverges at zero, so the EKCDFM tends to outperform the EKCDF. When the shape parameter exceeds $1$, the density is less irregular and becomes closer to the class of densities for which a global-bandwidth estimator is appropriate. In such cases, the classic EKCDF provides better results. Overall, the results suggest that the EHCDF may be a reliable alternative to kernel methods, as it is less sensitive to such support issues, improving over the ECDF across all metrics and all values of $m$ considered.

We now consider distributions supported on compact intervals, see Table~\ref{tab:sim_results_bounded}. The results confirm that the EHCDF improves on the ECDF, while EKCDF and EKCDFM can behave differently (for example in the $\mathcal B(0.25,1)$ or $\mathcal B(4,1)$ cases). The EHCDF often outperforms one of the two alternative methods.

To discover the limitations of the EKCDF and the EHCDF, in Table~\ref{tab:sim_results_mixture} we analyse ad-hoc mixtures, such as mixtures of two well-separated normal and uniform distributions. In such cases, both the EHCDF and kernel methods may assign mass to regions where the true distribution has little or no mass, which typically leads to a larger estimation error compared to the ECDF. In such cases, the EHCDF can outperform the EKCDF, especially for larger values of $m$ (see for instance the $\mathcal M(\mathcal N(-5,1),\mathcal N(5,1))$ and $\mathcal M(\mathcal N(-5,1),\mathcal W(0.5,1))$ cases).


 \subsection{MSE results}
 The pointwise MSE plots complement the global $L^p$ criteria by showing where the estimation error is concentrated.
 The results in the Supplementary Material and in Figures~\ref{fig:mse_main_distribution1} and \ref{fig:mse_main_distribution2} show that the improvement of the EHCDF over the ECDF is mainly located in the body of the distribution, while estimation is generally less precise in the tails, especially when these are heavier or when probability mass concentrates near the boundaries. This is quite natural: the EHCDF replaces extreme observations with less extreme averages, making the estimator less sensitive to heavy tails and producing error when many observations lie close to the boundary. For example, with regard to tail heaviness, tail estimation for the $t_\nu$ and $\mathcal W(a,1)$ distributions improves as $\nu$ and $a$ increase, respectively. Differently, when the density diverges at the boundary, estimation deteriorates in that region; see, for instance, the left-tail estimation for $\mathcal B(a,1)$ or $\mathcal W(a,1)$ when $a$ approaches 0.
With regard to kernel estimators, we note that they can yield a uniformly smaller error in some cases (for example, for the normal distribution), but also a substantially larger error in some regions in other cases. The difference between EKCDF and EKCDFM is particularly noticeable in the Weibull case, where one method performs well and the other performs poorly depending on the shape parameter (see also the gamma case). Since it is not known \textit{a priori} which version should be chosen, this may be viewed as a limitation in practice. By contrast, the performance of the EHCDF is more stable and appears to be less sensitive to the underlying distribution.

\begin{figure}[p]
\centering
\includegraphics[height=0.28\textheight]{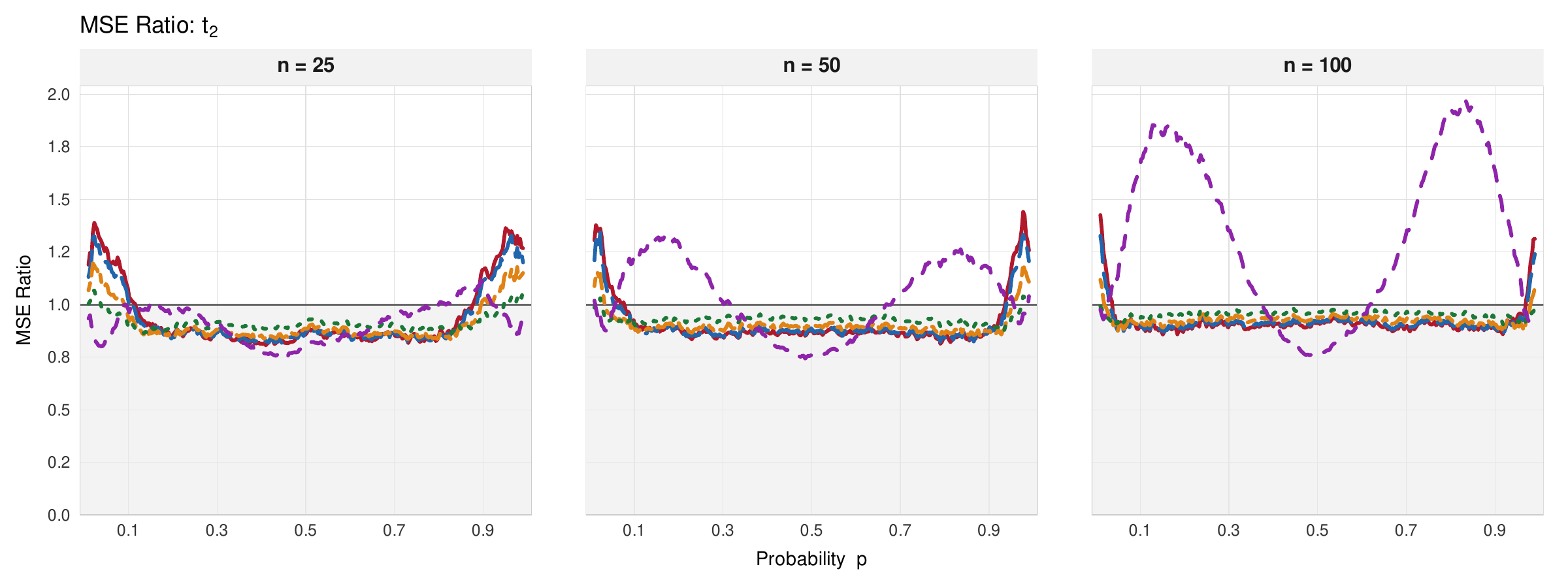}
\includegraphics[height=0.28\textheight]{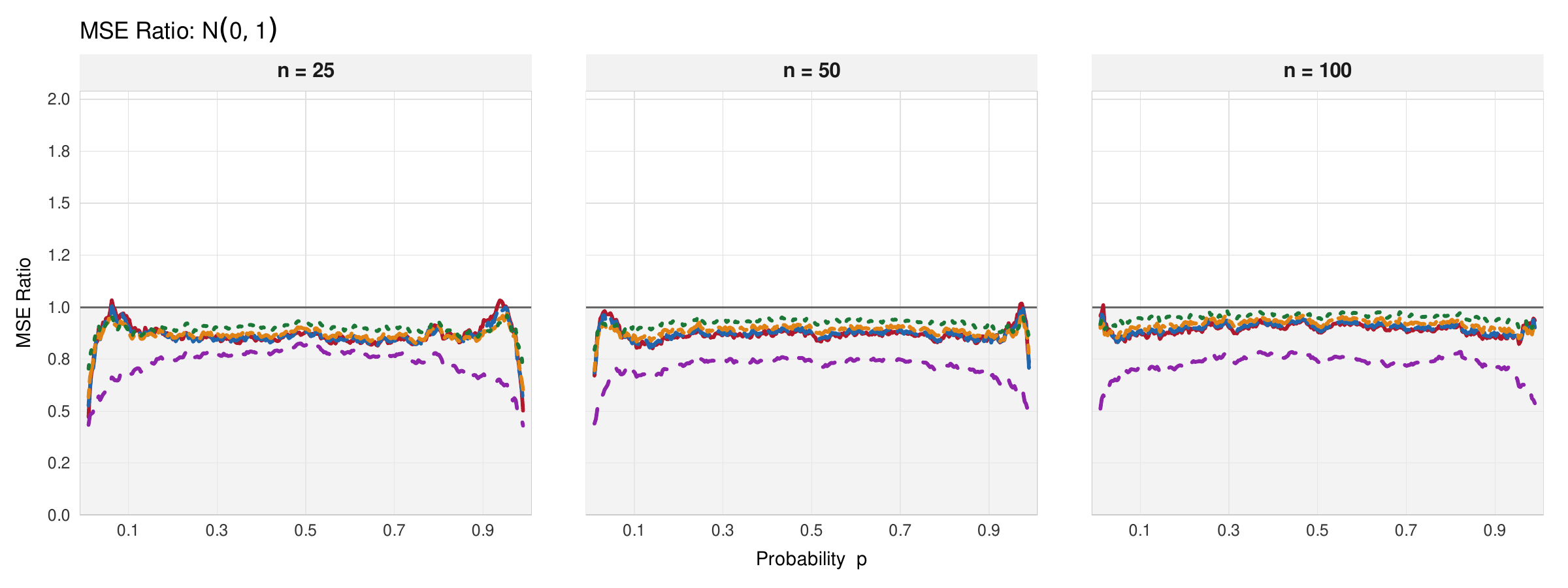}
\includegraphics[height=0.28\textheight]{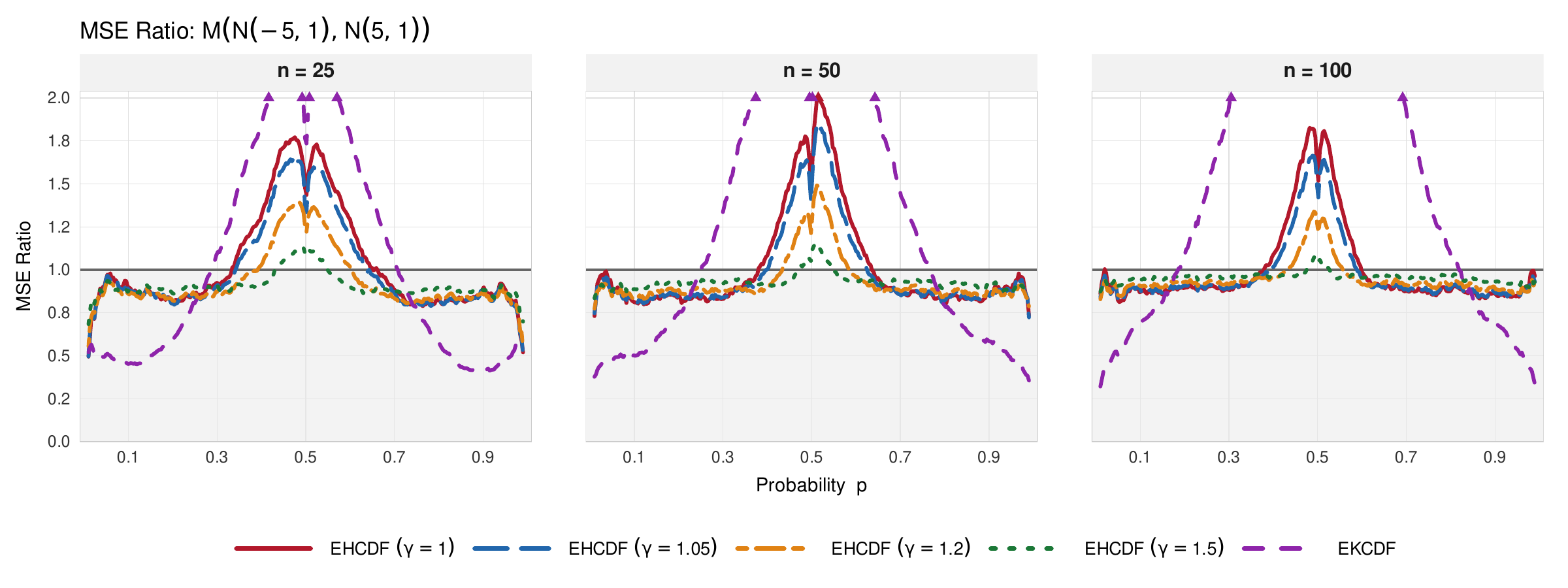}
\caption{\label{fig:mse_main_distribution1}
MSE ratio for the distributions $t_2$, $\mathcal{N}(0,1)$ and $\mathcal{M}(\mathcal{N}(-5,1),\mathcal{N}(5,1))$ (top to bottom). Different lines correspond to different CDF estimators and the dotted horizontal line corresponds to 1.}
\end{figure}

\begin{figure}[p]
\centering
\includegraphics[height=0.28\textheight]{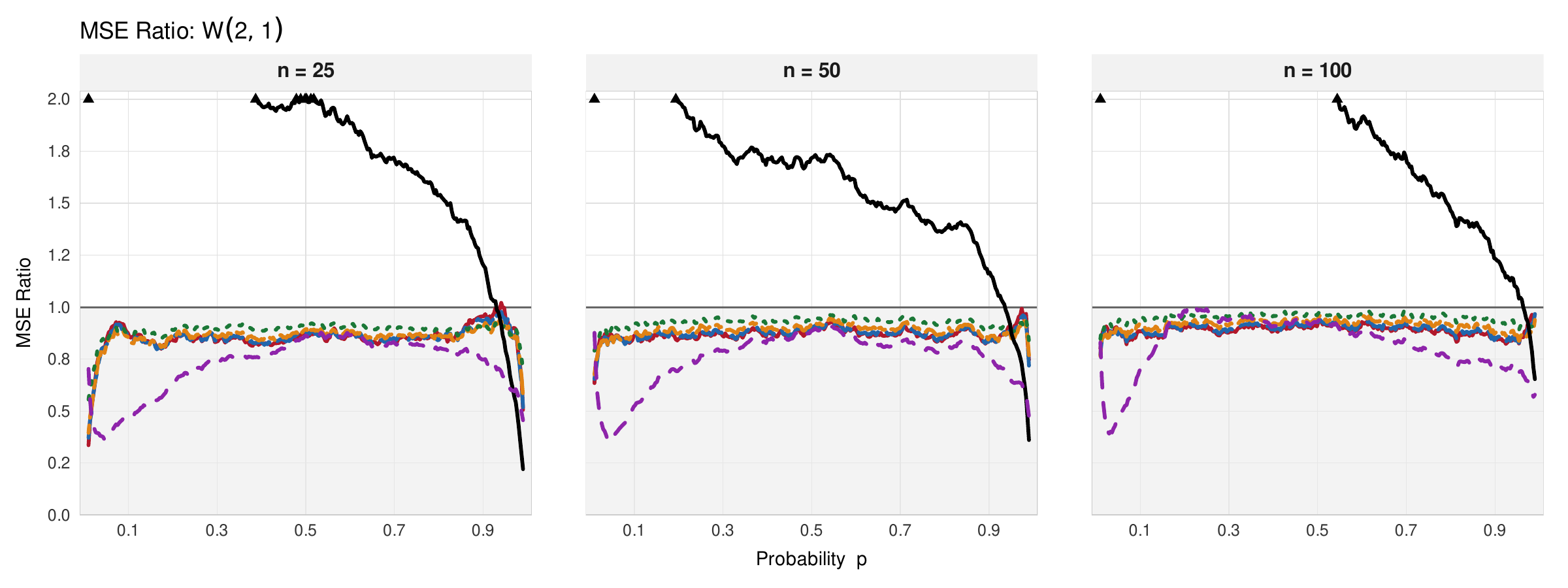}
\includegraphics[height=0.28\textheight]{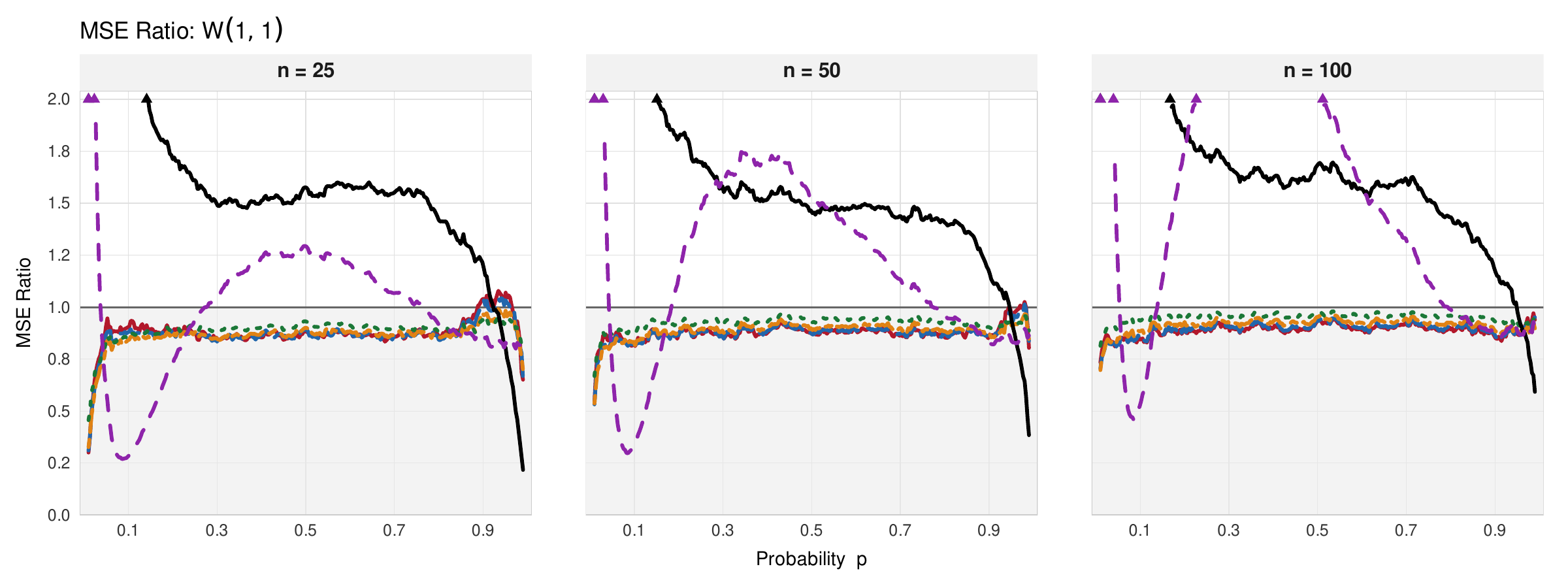}
\includegraphics[height=0.28\textheight]{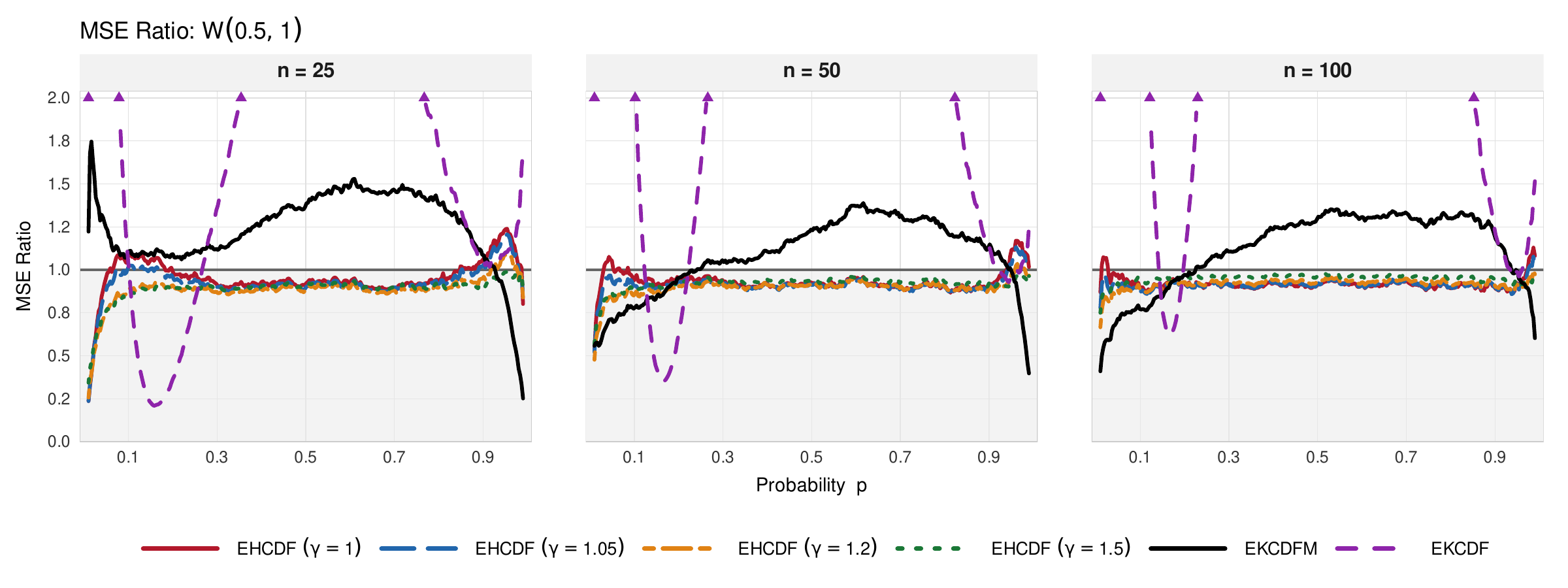}
\caption{\label{fig:mse_main_distribution2}
MSE ratio for the distributions $\mathcal{W}(2,1)$, $\mathcal{W}(1,1)$ and $\mathcal{W}(0.5,1)$ (top to bottom). Different lines correspond to different CDF estimators and the dotted horizontal line corresponds to 1.}
\end{figure}

\section*{Acknowledgments}
L.T. acknowledges the support of the INdAM group “Gruppo Nazionale per il Calcolo Scientifico” (GNCS), of which he is a member. The authors used AI tools to assist with language editing and improvement of presentation.

\bibliographystyle{abbrvnat}
\bibliography{reference}

@article{beare2025necessary,
  title={A necessary and sufficient condition for convergence in distribution of the quantile process in ${L}^{1} (0, 1) $},
  author={Beare, Brendan K and Kaji, Tetsuya},
  journal={Bernoulli (forthcoming)},
  year={2026}
}

@article{arnold1975,
	title={Characterization of distributions by sets of moments of order statistics},
	author={Arnold, Barry C and Meeden, Glen},
	journal={The Annals of Statistics},
	pages={754--758},
	year={1975},
	publisher={JSTOR}
}

@article{okolewski2024,
	title={Finite sequences representing expected order statistics},
	author={Okolewski, A and Papadatos, N},
	journal={Sankhya A},
	volume={86},
	number={2},
	pages={755--774},
	year={2024},
	publisher={Springer}
}

@article{borisov2016,
	title={The rate of convergence in {H}oeffding’s theorem and some applications},
	author={Borisov, IS},
	journal={Statistics \& Probability Letters},
	volume={109},
	pages={99--105},
	year={2016},
	publisher={Elsevier}
}

@article{borisov2014,
	title={A note on a result by {W}. {H}oeffding},
	author={Borisov, IS},
	journal={Statistics \& Probability Letters},
	volume={87},
	pages={7--11},
	year={2014},
	publisher={Elsevier}
}

@article{delbarrio,
  title={Asymptotics for ${L}^2$ functionals of the empirical quantile process, with applications to tests of fit based on weighted {W}asserstein distances},
  author={Del Barrio, Eustasio and Gin{\'e}, Evarist and Utzet, Frederic},
  journal={Bernoulli},
  volume={11},
  number={1},
  pages={131--189},
  year={2005},
  publisher={Bernoulli Society for Mathematical Statistics and Probability}
}

@article{Hosking1990,
    author  = {Hosking, J. R. M.},
    title   = {L-moments: Analysis and Estimation of Distributions Using Linear Combinations of Order Statistics},
    journal = {Journal of the Royal Statistical Society: Series B (Methodological)},
    volume  = {52},
    number  = {1},
    pages   = {105--124},
    year    = {1990}
}

@article{lando2025new,
	title={A new class of tests for convex-ordered families based on expected order statistics},
	author={Lando, Tommaso and Benjrada, Mohammed Es-Salih},
	journal={Electronic Journal of Statistics},
	volume={19},
	number={1},
	pages={2780--2802},
	year={2025},
	publisher={The Institute of Mathematical Statistics and the Bernoulli Society}
}

@incollection{mason2013,
  title={Uniform in bandwidth limit laws for kernel distribution function estimators},
  author={Mason, David M and Swanepoel, Jan WH},
  booktitle={From Probability to Statistics and Back: High-Dimensional Models and Processes--A Festschrift in Honor of Jon A. Wellner},
  volume={9},
  pages={241--254},
  year={2013},
  publisher={Institute of Mathematical Statistics}
}

@article{mason2011,
  title={A general result on the uniform in bandwidth consistency of kernel-type function estimators},
  author={Mason, David M and Swanepoel, Jan WH},
  journal={Test},
  volume={20},
  number={1},
  pages={72--94},
  year={2011},
  publisher={Springer}
}

@article{stupfler,
  title={On the weak convergence of the kernel density estimator in the uniform topology},
  author={Stupfler, Gilles},
 journal={Electronic Communications in Probability},
 volume={21},
pages={1-13},
  year={2016}
}

@article{einmahl2005uniform,
  title={Uniform in bandwidth consistency of kernel-type function estimators},
  author={Einmahl, Uwe and Mason, David M},
   journal={The Annals of Statistics},
   volume={33},
  number={3},
  pages={1380-1403},
  year={2005}
}

@article{nagler2025package,
  title={Package ‘kde1d’},
  author={Nagler, Thomas and Vatter, Thibault and Nagler, Maintainer Thomas},
  year={2025}
}

@article{Geenens2014,
  author  = {Geenens, G.},
  title   = {Probit transformation for kernel density estimation on the unit interval},
  journal = {Journal of the American Statistical Association},
  year    = {2014},
  volume  = {109},
  number  = {505},
  pages   = {346--358},
  doi     = {10.1080/01621459.2013.872718}
}

@article{GeenensWang2018,
  author  = {Geenens, G. and Wang, C.},
  title   = {Local-likelihood transformation kernel density estimation for positive random variables},
  journal = {Journal of Computational and Graphical Statistics},
  year    = {2018},
  volume  = {27},
  number  = {3},
  pages   = {620--633},
  doi     = {10.1080/10618600.2017.1390465}
}

@article{Nagler2018a,
  author  = {Nagler, T.},
  title   = {A generic approach to nonparametric function estimation with mixed data},
  journal = {Statistics \& Probability Letters},
  year    = {2018},
  volume  = {137},
  pages   = {326--330},
  doi     = {10.1016/j.spl.2018.02.030}
}

@article{Nagler2018b,
  author  = {Nagler, T.},
  title   = {Asymptotic analysis of the jittering kernel density estimator},
  journal = {Mathematical Methods of Statistics},
  year    = {2018},
  volume  = {27},
  number  = {3},
  pages   = {177--196},
  doi     = {10.3103/S1066530718030040}
}

@article{SheatherJones1991,
  author  = {Sheather, S. J. and Jones, M. C.},
  title   = {A reliable data-based bandwidth selection method for kernel density estimation},
  journal = {Journal of the Royal Statistical Society: Series B (Methodological)},
  year    = {1991},
  volume  = {53},
  number  = {3},
  pages   = {683--690}
}

@article{wand1991transformations,
  title={Transformations in density estimation},
  author={Wand, Matthew P and Marron, James Stephen and Ruppert, David},
  journal={Journal of the American Statistical Association},
  volume={86},
  number={414},
  pages={343--353},
  year={1991},
  publisher={Taylor \& Francis}
}

@Book{vw,
  author    = {Van der Vaart, Aad and Wellner, Jon},
  publisher = {Springer Science \& Business Media},
  title     = {Weak convergence and empirical processes: with applications to statistics},
  year      = {1996},
}

@Book{serfling,
  author    = {Serfling, Robert J},
  publisher = {John Wiley \& Sons},
  title     = {Approximation theorems of mathematical statistics},
  year      = {2009},
}

@Article{zwet1980,
  author  = {Van Zwet, Willem R},
  journal = {The Annals of Probability},
  title   = {A strong law for linear functions of order statistics},
  year    = {1980},
  number  = {8},
  pages   = {986--990},
  volume  = {5},
}

@Article{hoeffding1953,
  author    = {Hoeffding, Wassily},
  journal   = {The Annals of Mathematical Statistics},
  title     = {On the distribution of the expected values of the order statistics},
  year      = {1953},
  pages     = {93--100},
  publisher = {JSTOR},
}
\newpage
\appendix
\setcounter{figure}{0}
\setcounter{table}{0}
\renewcommand{\thefigure}{S\arabic{figure}}
\renewcommand{\thetable}{S\arabic{table}}
\section{Proofs}
We start by recalling the following property of the beta density, which will be often used in the proofs.
\begin{equation}\label{eq:partition_of_unity}
  \frac{1}{m}\sum_{j=1}^{m} f_{B_{j:m}}(u) = 1, \qquad \forall u\in [0,1].
\end{equation}
In other words, the family $\{m^{-1}f_{B_{j:m}}\}_{j=1}^m$ forms a partition of unity.
\subsection{Proof of Theorem~\ref{lemma:Im_contraction}}\label{proof:thm:Hm_L1_contraction}
	\begin{proof}
	Let $\mathcal H_m(F)=F_m$ and $\mathcal H_m(G)=G_m$. For every $x\in\mathbb R$,
	\[
	F_m(x)-G_m(x)=\frac1m\sum_{j=1}^m\Big(\1\{\mu_{j:m}(F)\le x\}-\1\{\mu_{j:m}(G)\le x\}\Big).
	\]
	By convexity of $t\mapsto |t|^p$, Jensen's inequality implies that
	\begin{align*}
		\|F_m-G_m\|_p^p
		&=
		\int_{\mathbb R}\left|
		\frac1m\sum_{j=1}^m\Big(\1\{\mu_{j:m}(F)\le x\}-\1\{\mu_{j:m}(G)\le x\}\Big)
		\right|^p dx \\
		&\le
		\frac1m\sum_{j=1}^m
		\int_{\mathbb R}
		\left|
		\1\{\mu_{j:m}(F)\le x\}-\1\{\mu_{j:m}(G)\le x\}
		\right|^p dx \\
		&=
		\frac1m\sum_{j=1}^m
		\int_{\mathbb R}
		\left|
		\1\{\mu_{j:m}(F)\le x\}-\1\{\mu_{j:m}(G)\le x\}
		\right| dx \\
		&=
		\frac1m\sum_{j=1}^m
		|\mu_{j:m}(F)-\mu_{j:m}(G)|.
	\end{align*}
	Now,
	\[
	|\mu_{j:m}(F)-\mu_{j:m}(G)|
	=
	\left|
	\int_0^1 (F^{-1}-G^{-1})\,dF_{B_{j:m}}
	\right|
	\le
	\int_0^1 |F^{-1}-G^{-1}|\,dF_{B_{j:m}}.
	\]
	Therefore, using (\ref{eq:partition_of_unity}),
	\begin{align*}
		\frac1m\sum_{j=1}^m |\mu_{j:m}(F)-\mu_{j:m}(G)|
		&\le
		\frac1m\sum_{j=1}^m
		\int_0^1 |F^{-1}(u)-G^{-1}(u)|\,dF_{B_{j:m}}(u) \\
		&=
		\int_0^1 |F^{-1}(u)-G^{-1}(u)|
		\left(\frac1m\sum_{j=1}^m f_{B_{j:m}}(u)\right)du \\
		&=
		\int_0^1 |F^{-1}(u)-G^{-1}(u)|\,du \\
		&=
		\|F-G\|_1.
	\end{align*}
This proves the claim.
	
\hfill\end{proof}

\subsection{Proof of Lemma~\ref{lemma:Im_contraction}}\label{proof:Im_contraction}
\begin{proof}
	Linearity of $\mathcal{I}_m$ is immediate from its definition. For the norm bound, we compute
	\[
	\|\mathcal{I}_m(h)\|_p^p
	=
	\frac{1}{m}\sum_{j=1}^m
	\left|
	\int_0^1 h(s)\, dF_{B_{j:m}}(s)
	\right|^p.
	\]
	Jensen's inequality gives
	\[
	\left|
	\int_0^1 h(s)\, dF_{B_{j:m}}(s)
	\right|^p
	\le
	\int_0^1 |h(s)|^p\, dF_{B_{j:m}}(s).
	\]
	Hence,
	\[
	\|\mathcal{I}_m(h)\|_p^p
	\le
	\frac{1}{m}\sum_{j=1}^m \int_0^1 |h(s)|^p\, f_{B_{j:m}}(s)\, ds=\int_0^1 |h(s)|^p
	\left( \frac{1}{m}\sum_{j=1}^m f_{B_{j:m}}(s) \right) ds.
	\]
	Using again equation (\ref{eq:partition_of_unity}), i.e., $
	\frac{1}{m}\sum_{j=1}^m f_{B_{j:m}}(s) = 1,$ for every $s \in (0,1),$ and taking the $p$-th root,
	we conclude
	\[
	\|\mathcal{I}_m(h)\|_p
	\le
	\bigg(\int_0^1 |h(s)|^p\, ds\bigg)^{1/p}
	=
	\|h\|_p.
	\]
\hfill\end{proof}

\subsection{Proof of Theorem~\ref{teo:Im_convergence}}\label{proof:Im_convergence}
\begin{proof}
	By the triangle inequality,
	$$
	\|\mathcal{I}_m(h_n)-h\|_p
	\le
	\|\mathcal{I}_m(h_n)-\mathcal{I}_m(h)\|_p
	+
	\|\mathcal{I}_m(h)-h\|_p.
	$$
	For the first term, we apply Lemma~\ref{lemma:Im_contraction} and linearity of $\mathcal{I}_m$ to obtain:
	$$
	\|\mathcal{I}_m(h_n)-\mathcal{I}_m(h)\|_p
	=
	\|\mathcal{I}_m(h_n-h)\|_p
	\le
	\|h_n-h\|_p \to 0
	\qquad\text{as }n\to\infty.
	$$
	We now prove that $ \|\mathcal{I}_m(h)-h\|_p\to 0$ as $m\to\infty.$ Write
	$$
	\|\mathcal{I}_m(h)-h\|_p^p
	=
	\sum_{j=1}^m
	\int_{\frac{j-1}{m}}^{\frac{j}{m}}
	\left|
	\int_0^1 h(s)\,f_{B_{j:m}}(s)\,ds - h(u)
	\right|^p du.
	$$
	Since $\int_0^1 f_{B_{j:m}}(s)\,ds=1$, we may write
	$$
	h(u)=h(u)\int_0^1 f_{B_{j:m}}(s)\,ds,
	$$
	and therefore
	$$
	\int_0^1 h(s)\,f_{B_{j:m}}(s)\,ds-h(u)
	=
	\int_0^1 (h(s)-h(u))\,f_{B_{j:m}}(s)\,ds.
	$$
	By Jensen's inequality,
	$$
	\left|
	\int_0^1 (h(s)-h(u))\,f_{B_{j:m}}(s)\,ds
	\right|^p
	\le
	\int_0^1 |h(s)-h(u)|^p\,f_{B_{j:m}}(s)\,ds.
	$$
	Hence
	$$
	\|\mathcal{I}_m(h)-h\|_p^p
	\le
	\sum_{j=1}^m
	\int_{\frac{j-1}{m}}^{\frac{j}{m}}
	\int_0^1 |h(s)-h(u)|^p\,f_{B_{j:m}}(s)\,ds\,du.
	$$
	Now define
	$$
	K_m(s,u)
	:=
	\sum_{j=1}^m
	f_{B_{j:m}}(s)\,
	\1\{u\in\left(\tfrac{j-1}{m},\tfrac{j}{m}\right]\},
	$$
    the integral kernel associated with the operator $\mathcal I_m$, i.e.,
$
\mathcal I_m(h)(u)=\int_0^1 h(s)\,K_m(s,u)\,ds.
$
	Then
	$$
	\|\mathcal{I}_m(h)-h\|_p^p
	\le
	\int_0^1\int_0^1 |h(s)-h(u)|^p\,K_m(s,u)\,ds\,du.
	$$
	The kernel $K_m$ satisfies the two marginal identities
	$$
	\int_0^1 K_m(s,u)\,du = 1
	\qquad\forall s\in(0,1),
	$$
	and
	$$
	\int_0^1 K_m(s,u)\,ds = 1
	\qquad\forall u\in(0,1).
	$$
	Indeed, for fixed $s$,
	$$
	\int_0^1 K_m(s,u)\,du
	=
	\sum_{j=1}^m f_{B_{j:m}}(s)\int_0^1
	\1\{u\in\left(\tfrac{j-1}{m},\tfrac{j}{m}\right]\}\,du
	=
	\frac{1}{m}\sum_{j=1}^m f_{B_{j:m}}(s)
	=1,
	$$
	while for fixed $u$ there is exactly one $j$ such that
	$
	u\in\left(\frac{j-1}{m},\frac{j}{m}\right],
	$
	so that
	$
	K_m(s,u)=f_{B_{j:m}}(s),
	$
	and therefore
	$$
	\int_0^1 K_m(s,u)\,ds=\int_0^1 f_{B_{j:m}}(s)\,ds=1.
	$$
	Fix $\varepsilon>0$. Since $C[0,1]$ is dense in $L^p(0,1)$, there exists a bounded continuous function $h_\varepsilon$ such that
	$
	\|h-h_\varepsilon\|_p\le \varepsilon.
	$
	Write
	$$
	h(s)-h(u)
	=
	\bigl(h(s)-h_\varepsilon(s)\bigr)
	+
	\bigl(h_\varepsilon(s)-h_\varepsilon(u)\bigr)
	+
	\bigl(h_\varepsilon(u)-h(u)\bigr).
	$$
	Using Jensen's inequality again, that is, $|\sum_{j=1}^k a_j|^p\leq k^{p-1}(|\sum_{j=1}^k|a_j|^p) $, for $k=3$
	we obtain
	$$
	|h(s)-h(u)|^p
	\le
	3^{p-1}\Bigl(
	|h(s)-h_\varepsilon(s)|^p
	+
	|h_\varepsilon(s)-h_\varepsilon(u)|^p
	+
	|h_\varepsilon(u)-h(u)|^p
	\Bigr).
	$$
	Integrating against $K_m(s,u)\,ds\,du$ yields
	$$
	\int_0^1\int_0^1 |h(s)-h(u)|^p\,K_m(s,u)\,ds\,du
	\le
	3^{p-1}(A_m+B_m+C_m),
	$$
	where
	$$
	A_m
	=
	\int_0^1\int_0^1 |h(s)-h_\varepsilon(s)|^p\,K_m(s,u)\,ds\,du,
	$$
	$$
	B_m
	=
	\int_0^1\int_0^1 |h_\varepsilon(s)-h_\varepsilon(u)|^p\,K_m(s,u)\,ds\,du,
	$$
	$$
	C_m
	=
	\int_0^1\int_0^1 |h_\varepsilon(u)-h(u)|^p\,K_m(s,u)\,ds\,du.
	$$
We now use the fact that $K_m$ has both marginals equal to $1$ to show that the error arising from approximating $h$ by $h_\varepsilon$ is controlled by $\|h-h_\varepsilon\|_p^p$. In fact, since the integrand in $A_m$ depends only on $s$,
	$$
	A_m
	=
	\int_0^1 |h(s)-h_\varepsilon(s)|^p
	\left(\int_0^1 K_m(s,u)\,du\right) ds
	=
	\int_0^1 |h(s)-h_\varepsilon(s)|^p\,ds
	=
	\|h-h_\varepsilon\|_p^p.
	$$
	Similarly,
	$
	C_m
	=
	\|h-h_\varepsilon\|_p^p.
	$
	Hence,
	$
	A_m+C_m = 2\|h-h_\varepsilon\|_p^p \le 2\varepsilon^p.
	$
	
	It remains to deal with $B_m$. Since $h_\varepsilon$ is (uniformly) continuous on $[0,1]$, for every $\delta>0$, there exists $\eta>0$ such that
	$$
	|s-u|\le \eta
	\quad\Longrightarrow\quad
	|h_\varepsilon(s)-h_\varepsilon(u)|\le \delta.
	$$
	Therefore
	$$
	B_m
	\le
	\delta^p
	+
	(2\|h_\varepsilon\|_\infty)^p
	R_m(\eta),
	$$
	where
	$$
	R_m(\eta)
	:=
	\int_0^1\int_0^1
	\1\{|s-u|>\eta\}\,K_m(s,u)\,ds\,du.
	$$
	We now show that $\lim_{m\to\infty}R_m(\eta)= 0$. For $u\in\left(\frac{j-1}{m},\frac{j}{m}\right]$, one has
	$
	K_m(s,u)=f_{B_{j:m}}(s).
	$
	Recall that $f_{B_{j:m}}$ is the density of $U_{j:m}$, where
	$
	\mathbb E[U_{j:m}] = \frac{j}{m+1}.
	$
	Moreover, if
	$
	u\in\left(\frac{j-1}{m},\frac{j}{m}\right],
	$
	then
	$
	\left|u-\frac{j}{m+1}\right|\le \frac{1}{m}.
	$
	Hence, if $|s-u|>\eta$, the reverse triangle inequality yields
	$$
	\left|s-\frac{j}{m+1}\right|
	\ge
	|s-u|-\left|u-\frac{j}{m+1}\right|
	>
	\eta-\frac{1}{m}.
	$$
	Therefore, for $m$ large enough so that $\frac1m<\eta/2$,
	$$
	|s-u|>\eta
	\quad\Longrightarrow\quad
	\left|s-\frac{j}{m+1}\right|>\frac{\eta}{2}.
	$$
	The above implication, in addition to Chebyshev's inequality, gives
	$$
	\int_0^1 \1\{|s-u|>\eta\}\,f_{B_{j:m}}(s)\,ds
	\le
	P\left(\left|U_{j:m}-\frac{j}{m+1}\right|>\frac{\eta}{2}\right)\le
	\frac{\mathrm{Var}(U_{j:m})}{(\eta/2)^2}.
	$$
	Since
	$$
	\mathrm{Var}(U_{j:m})
	=
	\frac{j(m-j+1)}{(m+1)^2(m+2)}
	\le
	\frac{1}{4(m+2)},
	$$
	we obtain
	$$
	P\left(\left|U_{j:m}-\frac{j}{m+1}\right|>\frac{\eta}{2}\right)
	\le
	\frac{1}{\eta^2(m+2)}.
	$$
	This bound is uniform in $j$. Therefore, for $m\to\infty,$
	$$
	R_m(\eta)
	\le
	\sum_{j=1}^m
	\int_{\frac{j-1}{m}}^{\frac{j}{m}}
	\frac{1}{\eta^2(m+2)}\,du
	=
	\frac{1}{\eta^2(m+2)}
	\sum_{j=1}^m \frac{1}{m}
	=
	\frac{1}{\eta^2(m+2)}
	\to 0.
	$$
	Hence
	$
	\limsup_{m\to\infty} B_m \le \delta^p.
	$
	Since $\delta>0$ is arbitrary, it follows that $B_m\to 0$. Thus
	$
	\limsup_{m\to\infty}\|\mathcal{I}_m(h)-h\|_p^p
	\le
	3^{p-1}\cdot 2\varepsilon^p.
	$
	Letting $\varepsilon\to 0$, we conclude that
	$
	\|\mathcal{I}_m(h)-h\|_p\to 0.
	$
	Combining this with the first term proves the theorem.
\hfill\end{proof}

\subsection{Proof of Theorem~\ref{thm:Hoeffding_Lfunctionals}}\label{proof:Hoeffding_Lfunctionals}
\begin{proof}
	The quantile function $G_m^{-1}$ is the step function
	\[
	G_m^{-1}(u)=\mu_{j:m}(G)\quad\text{for }u\in\Bigl(\frac{j-1}{m},\frac{j}{m}\Bigr],\qquad j=1,\dots,m.
	\]
	Therefore,
	\[
	T_w(G_m)=\int_0^1 G_m^{-1}(u)\,w(u)\,du
	=\sum_{j=1}^m \mu_{j:m}(G)\int_{(j-1)/m}^{j/m} w(u)\,du
	=\sum_{j=1}^m c_j(w)\,\mu_{j:m}(G).
	\]
	Using the quantile representation of expected order statistics, we obtain
	\begin{align*}
		T_w(G_m)
		=\sum_{j=1}^m c_j(w)\int_0^1 G^{-1}(u)\,f_{B_{j:m}}(u)\,du\\
		=\int_0^1 G^{-1}(u)\Bigl(\sum_{j=1}^m c_j(w)\,f_{B_{j:m}}(u)\Bigr)\,du
		=\int_0^1 G^{-1}(u)\,(P_m [w])(u)\,du=T_{P_m[w]}(G).
	\end{align*}
	This concludes the proof.
\hfill\end{proof}

\subsection{Proof of Lemma~\ref{lemma:HCDF_CDF_convergence_rate}}\label{proof:HCDF_CDF_convergence_rate}
\begin{proof}
	Recall that uniform order statistics $ U_{j:m}$ have CDF $F_{B_{j:m}}$.
	Using $\|F_m - F\|_1 = \| Q_m- Q\|_1$, where $Q_m =\mathcal{I}_m(Q)$, we want to show that
	\[
	 \| Q_m- Q\|_1 = \sum_{j=1}^m \int_{(j-1)/m}^{j/m}
	\bigl| \mathbb{E}[Q(U_{j:m})] - Q(x) \bigr| \, dx
	= O\left( m^{-1/2} \right)
	\quad \text{as } m \to \infty.
	\]
	We start from the identity
	\[
	\mathbb{E}[Q(U_{j:m})]
	= \int_0^1 Q(u) dF_{B_{j:m}}(u).
	\]
	Thus, for any \( x \in (0,1) \),
	\[
	\mathbb{E}[Q(U_{j:m})] - Q(x)
	= \int_0^1 \bigl(Q(u)-Q(x)\bigr)\, dF_{B_{j:m}}(u).
	\]
	Note that
	\[
	|Q(u) - Q(x)|
	= \int_0^1 \1\{\min\{u,x\}<t\leq \max\{u,x\}\} dQ(t).
	\]
	Substituting this into the previous integral and applying Fubini's theorem yields
	\[
	\bigl|\mathbb{E}[Q(U_{j:m})]-Q(x)\bigr|\leq \E | Q(U_{j:m})-Q(x)|
	= \int_0^1  H_{j,m}(t;x)\, dQ(t),
	\]
	where
	\[
	H_{j,m}(t;x)
	:= \int_0^1 \1\{\min\{u,x\}<t\leq \max\{u,x\}\}\, dF_{B_{j:m}}(u)\,
	= P\bigl((U_{j:m}-t)(x-t)\le 0\bigr).
	\]
	Intuitively, \( H_{j,m}(t;x) \) measures the probability that \( U_{j:m} \)
	lies on the opposite side of \( t \) than \( x \) does.
	Integrating the previous inequality over \( x \in [(j-1)/m, j/m] \) and summing over \( j \) yields
	\begin{align*}
		\sum_{j=1}^m \int_{\tfrac{j-1}m}^{\tfrac{j}{m}} \! |\mathbb{E}[Q(U_{j:m})]-Q(x)|\,dx
		&\le \int_0^1 W_m(t)dQ(t)
	\end{align*}
	where
	\[
	W_m(t):= \sum_{j=1}^m \int_{\tfrac{j-1}m}^{\tfrac{j}{m}} H_{j,m}(t;x)\, dx
	= \sum_{j=1}^m \int_{\tfrac{j-1}m}^{\tfrac{j}{m}} P\bigl((U_{j:m}-t)(x-t)\le 0\bigr)\, dx.
	\]
	Hence, the problem reduces to finding a bound for \( W_m(t) \).
Write $r=\lfloor mt\rfloor$ and \(\lambda=mt-r\in[0,1)\), so that
$
t=\frac{r+\lambda}{m}\in\left[\frac rm,\frac{r+1}{m}\right).
$
Hence, the intervals \(\bigl((j-1)/m,j/m\bigr]\) split into three groups: those entirely to the left of \(t\), for which the event \((U_{j:m}-t)(x-t)\le 0\) reduces to \(U_{j:m}\ge t\); the unique interval containing \(t\); and those entirely to the right of \(t\), for which \((U_{j:m}-t)(x-t)\le 0\) reduces to \(U_{j:m}\le t\). The middle interval must then be split at \(t\), according to whether \(x\le t\) or \(x\ge t\), which yields
\begin{align*}
   W_m(t)
=
\frac1m\sum_{j=1}^{r} P(U_{j:m}\ge t)
+
\frac{\lambda}{m}P(U_{r+1:m}\ge t)\\
+
\frac{1-\lambda}{m}P(U_{r+1:m}\le t)
+
\frac1m\sum_{j=r+2}^{m} P(U_{j:m}\le t). 
\end{align*}
Recall that
	$$P(U_{j:m}\leq t)=\sum_{k=j}^m\binom{m}{k}t^k(1-t)^{m-k},$$
	so, given a binomial random variable $B$ with $m$ trials and success probability $t$,
	\[
	P(U_{j:m} \le t) = P(B \ge j),
	\qquad
	P(U_{j:m} > t) = P(B \le j-1).
	\]
Using this binomial representation
\begin{align*}
   W_m(t)
=
\frac1m\sum_{j=1}^{r} P(B\le j-1)
+
\frac{\lambda}{m}P(B\le r)\\
+
\frac{1-\lambda}{m}P(B\ge r+1)
+
\frac1m\sum_{j=r+2}^{m} P(B\ge j). 
\end{align*}
We now claim that
\[
W_m(t)=\frac1m\,\mathbb E|B-mt|=\frac 1m\sum_{b=0}^m |b-mt|P(B=b) .
\]
Indeed, we may write
\[
W_m(t)=\frac1m\sum_{b=0}^m c_b\,P(B=b),
\]
where
\[
c_b
=
\sum_{j=1}^{r} \1\{b\le j-1\}
+\lambda \1\{b\le r\}
+(1-\lambda)\1\{b\ge r+1\}
+\sum_{j=r+2}^{m}\1\{b\ge j\}.
\]
It is readily seen that 
$c_b=r-b+\lambda$ if $b\leq r$, and $c_b=b-r-\lambda$ for $b\geq r+1,$
hence \(c_b=|b-(r+\lambda)|=|b-mt|\).

Now, by Jensen's inequality,
\[
W_m(t)
\le
\frac1m\sqrt{\mathbb E(B-mt)^2}
=
\frac1m\sqrt{\mathrm{Var}(B)}
=
\sqrt{\frac{t(1-t)}{m}}.
\]
Thus
\[
W_m(t)=O\!\left(m^{-1/2}\sqrt{t(1-t)}\right).
\]
Recalling the assumption
$
\int_0^1 \sqrt{u(1-u)}\,dQ(u)<\infty,
$
we conclude that
\begin{align*}
   \sum_{j=1}^m \int_{(j-1)/m}^{j/m}
\bigl|\mathbb E[Q(U_{j:m})]-Q(x)\bigr|\,dx
\le
\int_0^1 W_m(t)\,dQ(t)\\
\le
\frac{1}{\sqrt m}\int_0^1 \sqrt{t(1-t)}\,dQ(t)
=
O(m^{-1/2}). 
\end{align*}

\hfill\end{proof}

\subsection{Proof of Lemma~\ref{lemma:convergence_EHCDF_HCDF}}\label{proof:convergence_EHCDF_HCDF}
\begin{proof}
	Let $m$ be fixed. By the triangle inequality,
	\begin{align*}
		||\F_{n,m}-F_{m}||_p\\
		=\| \frac1m\sum_{j=1}^m (\1\{\hat{\mu}_{j:m}\leq \cdot\}-\1\{{\mu}_{j:m}\leq \cdot\}\|_p\\
		\leq\frac1{m} \sum_{j=1}^m\bigg(\int_\R|\1\{\mu_{j:m}\leq x\}- \1\{\hat{\mu}_{j:m}\leq x\}|^p dx\bigg)^{\frac1p}.
	\end{align*}
		However, for $n\to\infty$,
		\begin{align*}
			\int_\R|\1\{\mu_{j:m}\leq x\}- \1\{\hat{\mu}_{j:m}\leq x\}|^pdx\\
			=\int_\R\1\{\min(\mu_{j:m},\hat{\mu}_{j:m})\leq x\leq \max(\mu_{j:m},\hat{\mu}_{j:m})\}dx=|\hat{\mu}_{j:m}-{\mu}_{j:m}|\to_{a.s.}0,	\end{align*}
		therefore $||\F_{n,m}-F_m||_p\to_{a.s.}0,$ for $n\to\infty$.
		
		About the second assertion, consider
		$$||\F_{n,m}-F_{m}||_\infty=\sup_x|\frac1{m} \sum_{j=1}^m(\1\{\mu_{j:m}\leq x\}- \1\{\hat{\mu}_{j:m}\leq x\})|.$$  Define the event $E_n=\{\omega:||\F_{n,m}(x,\omega)-F_{m}(x)||_\infty>1/m\}$. This event occurs if, for some $x_0$, $\sum_j|\1\{\hat{\mu}_{j:m}\leq x_0\} -\1\{{\mu}_{j:m}\leq x_0\}|\geq2,$ or equivalently, if we have (at least) two consecutive jump points of $F_{m}$, say $\mu_{k:m}$ and $\mu_{k+1:m}$, which are smaller than $\hat{\mu}_{k:m}$, so that $\mu_{k+1:m}<\hat{\mu}_{k:m}$, or vice-versa, if $\hat{\mu}_{k+1:m}<{\mu}_{k:m}$. Indeed, if $\mu_{k+1:m}<\hat{\mu}_{k:m}$ or $\hat{\mu}_{k+1:m}<{\mu}_{k:m}$, we have $F_m(\mu_{k+1:m})-\F_{n,m}(\mu_{k+1:m})\geq 2/m$. The two cases are similar so we can focus on the first one without loss of generality. We can re-write $E_n$ as
		$$E_n=\{\exists k:\hat{\mu}_{k:m}(\omega)> \mu_{k+1:m}\}=\{\exists k:\hat{\mu}_{k:m}(\omega)- \mu_{k+1:m}>0\}$$	
		By construction, $\mu_{k+1:m}-\mu_{k:m}=\delta>0.$ For almost every $\omega,$ there exists $n_0$ such that, for every $n>n_0,$ $|\hat{\mu}_{k:m}(\omega)- \mu_{k:m}|<\delta/2$. So, for $n>n_0$ we have
		$$\hat{\mu}_{k:m}(\omega)<\mu_{k:m}+\delta/2=\mu_{k+1:m}-\delta/2 $$
		with probability 1. This implies that, for $n>n_0$, $E_n$ is not verified for almost every $\omega$, therefore
		$ ||\F_{n,m}(x)-F_{m}(x)||_\infty\leq1/m$ with probability 1.
	\hfill\end{proof}

	\subsection{Proof of Lemma~\ref{lemma:p_convergence}}\label{proof:p_convergence}
	\begin{proof}
		To prove the first assertion, Lemma~\ref{lemma hoeffding empirical} implies that $\lim_{m\to\infty}||\F_{n,m}(\omega)-\F_n(\omega)||_p=0$ for every outcome $\omega$ in the sample space. Now, we have
		\begin{align*}
			\lim_{m\to\infty}||\F_{n,m}(\omega)-F||_p\leq\\ \lim_{m\to\infty} \big(||\F_{n,m}(\omega)-\F_n(\omega)||_p+||\F_n(\omega)-F||_p\big)=||\F_n(\omega)-F||_p.
		\end{align*}
		Since the above inequality holds for every outcome $\omega,$ we can write $		\lim_{m\to\infty}||\F_{n,m}-F||_p\leq||\F_n-F||_p.$ The latter random variable tends to 0 with probability 1 because $X$ is integrable.
		
		We now focus on the second assertion. Let $p\in[1,\infty)$. By Lemma~\ref{lemma:deterministic_L1_convergence} and since $\E|X|<\infty$, we obtain
		\begin{align*}
			\lim_{m\to\infty}\lim_{n\to\infty}||\F_{n,m}-F||_p\\
			\leq \lim_{m\to\infty}\lim_{n\to\infty} \big(||\F_{n,m}-F_m||_p+||F_m-F||_p\big)=_{a.s.}\lim_{m\to\infty}||F_m-F||_p=0
		\end{align*}
		
		Finally, if $p=\infty$, by Lemma~\ref{lemma:convergence_EHCDF_HCDF} and Polya's theorem, we obtain almost surely that
		\begin{align*}
			\lim_{m\to\infty}\lim_{n\to\infty}||\F_{n,m}-F||_\infty\\
			\leq \lim_{m\to\infty}\lim_{n\to\infty} \big(||\F_{n,m}-F_m||_\infty+||F_m-F||_\infty\big)\leq\lim_{m\to\infty}(1/m+||F_m-F||_\infty)=0
		\end{align*}
	\hfill\end{proof}

	\subsection{Proof of Theorem~\ref{theo:double_limit}}\label{proof:double_limit}
	\begin{proof} Consider the inequality
		$\|\F_{n,m}-F\|_p \le \|\F_{n,m}-F_m\|_p+\|F_m-F\|_p$.
		For the first term, Theorem~\ref{lemma:Im_contraction} implies $\|\F_{n,m}-F_m\|_p^p
		\leq \|\F_n-F\|_1.$
		Hence $\|\F_{n,m}-F_m\|_p \le \|\F_n-F\|_1^{1/p}$.
		For the second term, note that $||F_m-F||_p^p\leq ||F_m-F||_1$. Finally, we have
		$\|\F_{n,m}-F\|_p \leq \|\F_n-F\|_1^{1/p}+\|F_m-F\|_1^{1/p}$. The first term converges a.s. to 0 since $\E|X|<\infty$. The second term converges to 0 by Lemma~\ref{lemma:deterministic_L1_convergence}. 
		
	\hfill\end{proof}

\subsection{Proof of Theorem~\ref{theorem:weak}}\label{proof:theorem:weak}
\begin{proof}
Under Property~\ref{prop:Q}, we have $\mathbb Q_n \rightsquigarrow \mathbb Q$ in $L^1(0,1)$. Since $L^1(0,1)$ is a separable Banach space, Skorokhod's representation theorem yields random elements $\mathbb Q_n'$ and $\mathbb Q'$ such that $\mathbb Q_n' \stackrel d= \mathbb Q_n$, $\mathbb Q' \stackrel d= \mathbb Q$, and $\mathbb Q_n' \to_{a.s.} \mathbb Q'$ \ in $L^1(0,1)$. Defining $\mathbb Q_{n,m}' := \mathcal I_m(\mathbb Q_n')$, Theorem~\ref{teo:Im_convergence} implies that $\mathbb Q_{n,m}' \to_{a.s.} \mathbb Q'$ \ in $L^1(0,1)$ as $n,m\to\infty$. Since $\mathbb Q_{n,m}' \stackrel d= \mathbb Q_{n,m}$ and $\mathbb Q' \stackrel d= \mathbb Q$, it follows that $\mathbb Q_{n,m} \rightsquigarrow \mathbb Q$ in $L^1(0,1)$ as $n,m\to\infty$.

The same argument applies in $L^2(0,1)$ under Property~\ref{prop:Q2} and in $L^p(0,1)$ whenever $\mathbb Q_n\rightsquigarrow\mathbb Q$ in $L^p(0,1)$.
\hfill\end{proof}

\subsection{Proof of Theorem~\ref{theo:asymptotic_fixed_m}}\label{proof:asymptotic_fixed_m}
\begin{proof}
	Recall that 
	$$\F_{n,m}(x)-F_m(x)
	=\frac1m\sum_{j=1}^m\Big(\1\{\hat\mu_{j:m}\le x\}-\1\{\mu_{j:m}\le x\}\Big).$$
	Note that, a.e.,$$\Big|\1\{\hat\mu_{j:m}\le x\}-\1\{\mu_{j:m}\le x\}\Big|
	=
	\1\Big\{\min(\mu_{j:m},\hat\mu_{j:m})<x\le \max(\mu_{j:m},\hat\mu_{j:m})\Big\}.$$
	Hence, in general,
	\begin{align*}
		\big|\F_{n,m}(x)-F_m(x)\big|\\
		=
		\frac1m\left|\sum_{j=1}^m\big(\1\{\hat\mu_{j:m}\le x\}-\1\{\mu_{j:m}\le x\}\big)\right|\\
		\le
		\frac1m\sum_{j=1}^m
		\1\Big\{\min(\mu_{j:m},\hat\mu_{j:m})<x\le \max(\mu_{j:m},\hat\mu_{j:m})\Big\}.
	\end{align*}
	Let $\delta=\min_j (\mu_{j+1:m}-\mu_{j:m})$, and assume that the following event holds:
	$$E_n=\{\omega:\max_j |\hat{\mu}_{j:m}-{\mu}_{j:m}|<\delta/2\}.$$
	Under $E_n,$ the two functions $\F_{n,m}$ and $F_m$ can differ just by $1/m$, that is $|\F_{n,m}(x)-F_m(x)|\in\{0,1/m\}$ for every $x$. In particular, for a fixed $x$, at most one of the indicators in the sum above is 1. In this case, the inequality above boils down to \begin{align*}
		\big|\F_{n,m}(x)-F_m(x)\big|
		=
		\frac1m\sum_{j=1}^m
		\1\Big\{\min(\mu_{j:m},\hat\mu_{j:m})<x\le \max(\mu_{j:m},\hat\mu_{j:m})\Big\}.
	\end{align*}
	Consequently, the $L^p$ norm of the difference reduces to a sum of lengths of intervals, that is, 
	\begin{align*}
		\|F_{n,m}-F_m\|_p^p
		=
		\int_{\mathbb R}\big|F_{n,m}(x)-F_m(x)\big|^p\,dx\\
		=\frac1{m^p}\sum_{j=1}^m\int \1\Big\{\min(\mu_{j:m},\hat\mu_{j:m})<x\le \max(\mu_{j:m},\hat\mu_{j:m})\Big\}dx\\
		=\frac{1}{m^p}\sum_{j=1}^m |\hat\mu_{j:m}-\mu_{j:m}|.
	\end{align*}
	Hence, under $E_n$, we know that $\sqrt n\| \F_{n,m}-F_m\|_p^p\to_d \frac{1}{m^p}\sum_{j=1}^m |Z_j|$. It is easy to see that $P(E_n)\to1,$ therefore, the assertion is proved.
\hfill\end{proof}

\subsection{Proof of Proposition~\ref{theo:asymptotic_bounded_support}}\label{proof:asymptotic_bounded_support}
\begin{proof}Note that, for $1>t>u>0$,
	$$Q(t)-Q(u)=\int_u^t Q'(y)dy>\frac1b(t-u).$$
	Letting $U_{j+1:m}=u$ and $U_{j:m}=t$ and taking expectations, we obtain
	$$\mu_{j+1:m}-\mu_{j:m}>\frac1b\E (U_{j+1:m}- U_{j:m})=\frac1{b(m+1)}.$$
	Letting $\delta_m=\min_j (\mu_{j+1:m}-\mu_{j:m})$ as in the proof of Theorem~\ref{theo:asymptotic_fixed_m}, we obtain $\delta_m>\tfrac1{b(m+1)}$.
	Consider the following bound
	$$|\hat{\mu}_{j:m}-\mu_{j:m}|\leq \|Q_n-Q\|_\infty\int_0^1 dF_{B_{j:m}}=\|Q_n-Q\|_\infty,$$where the latter term converges a.s. to zero because the support is bounded. Denote $D_n=\|\F_n-F\|_\infty$, so that $Q(p-D_n)\leq Q(p)\leq Q(p+D_n)$ for every $p$ such that $p-D_n\geq0$ and $p+D_n\leq1.$ By assumption, $1/b<Q'<1/a,$ where $Q'=1/f\circ Q.$ This means that $Q$ is Lipschitz continuous, which implies that $$Q(p+D_n)-Q(p-D_n)<\frac{2D_n}a,$$
	therefore, $\|Q_n-Q\|_\infty<\frac{2}{a}\|\F_n-F\|_\infty$, where the latter quantity is $O_p(n^{-1/2}).$ We conclude that $|\hat{\mu}_{j:m}-\mu_{j:m}|=O_p(n^{-1/2}).$
	Consider the event $E_n=\{\omega:\max_j |\hat{\mu}_{j:m}-{\mu}_{j:m}|<\delta_m/2\}$ as in the proof of Theorem~\ref{theo:asymptotic_fixed_m}. If $m=o(\sqrt n),$ $P(E_n)\to1, $ therefore we have the same result as in Theorem~\ref{theo:asymptotic_fixed_m}. Moreover, in this case $m$ diverges to infinity, so that we can apply similar steps as in Theorem~\ref{theo:asymptotic} and derive the limit.
\hfill\end{proof}

\newpage

\section{Simulation results}\label{appendix:simulations}


{\footnotesize
\centering
\setlength{\tabcolsep}{2pt}
\begin{singlespace}

\end{singlespace}
}

\begin{figure}[p]\label{plot1}
\centering
\includegraphics[height=0.28\textheight]{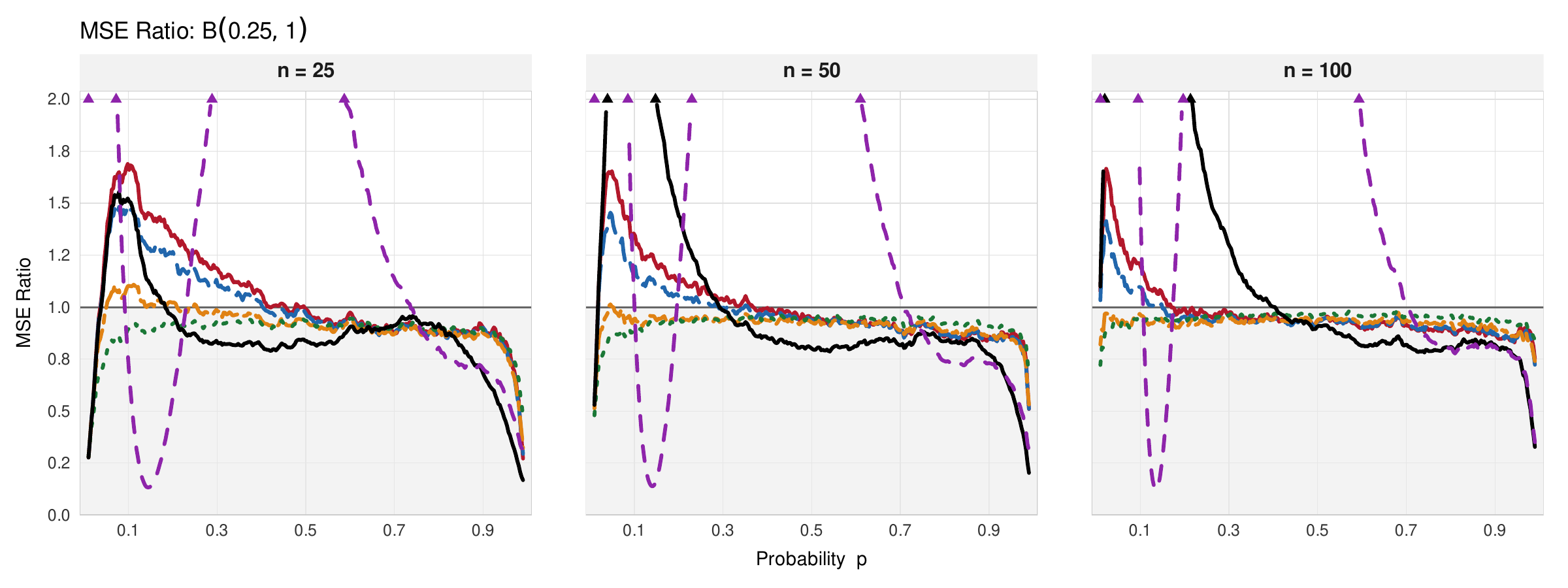}
\includegraphics[height=0.28\textheight]{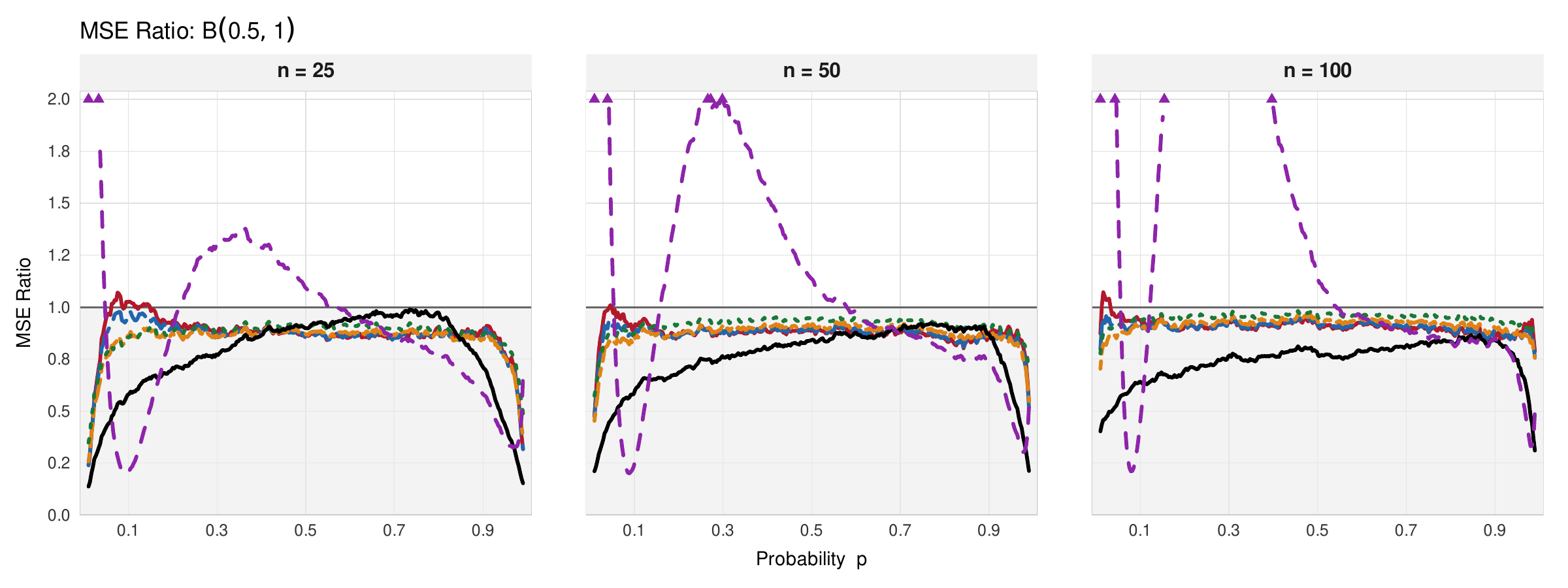}
\includegraphics[height=0.28\textheight]{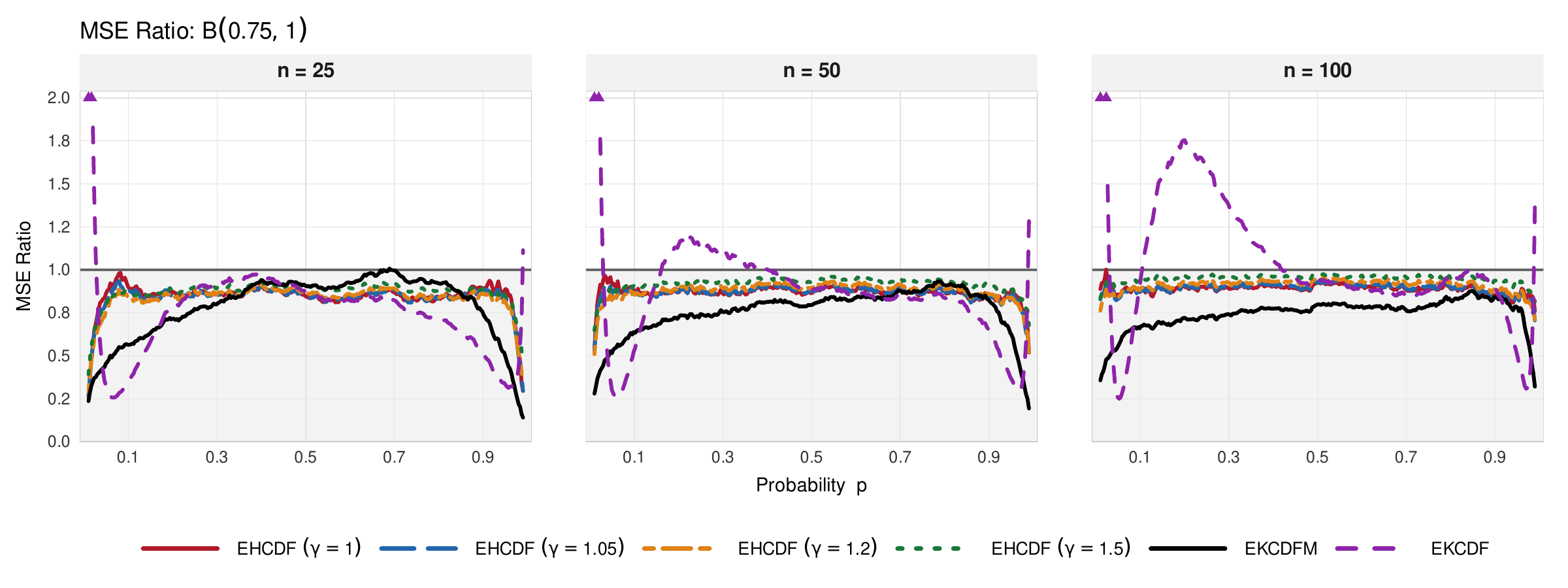}

\caption{
MSE ratio for $\mathcal B(0.25,1)$, $\mathcal B(0.5,1)$ and $\mathcal B(0.75,1)$ (top to bottom). Different lines correspond to different CDF estimators and the dotted horizontal line corresponds to 1.
}
\end{figure}

\begin{figure}[p]
\centering
\includegraphics[height=0.28\textheight]{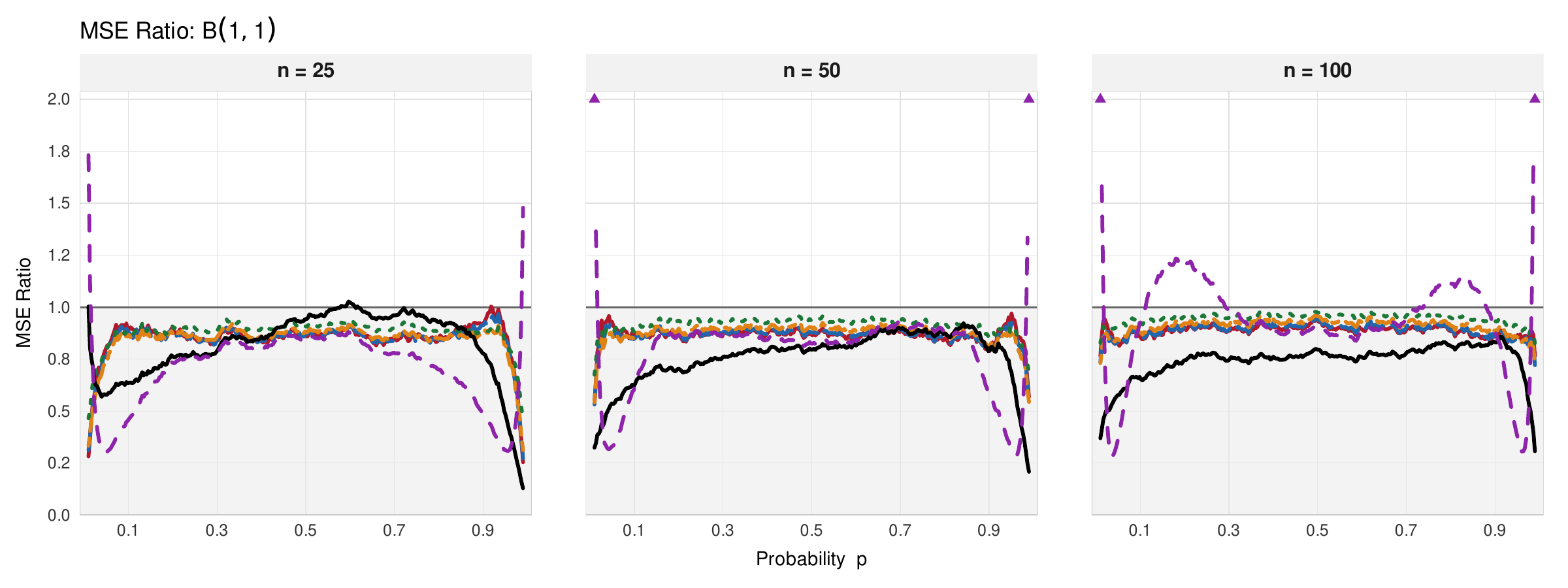}
\includegraphics[height=0.28\textheight]{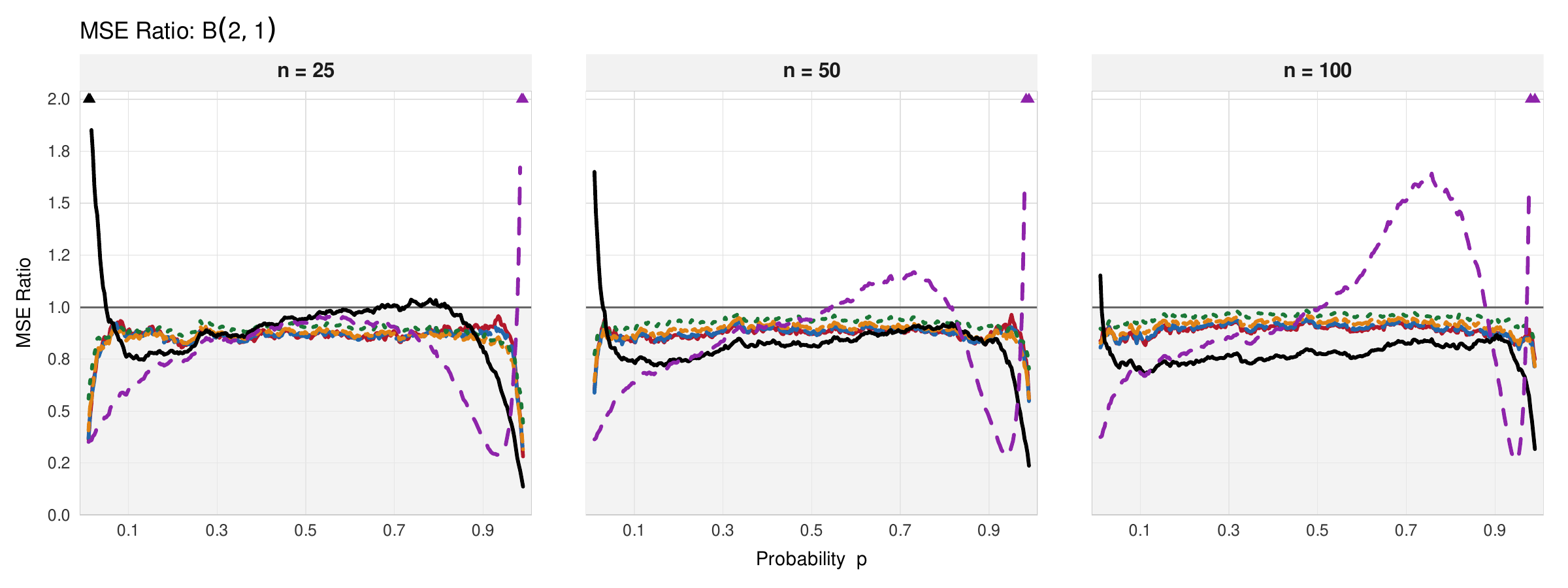}
\includegraphics[height=0.28\textheight]{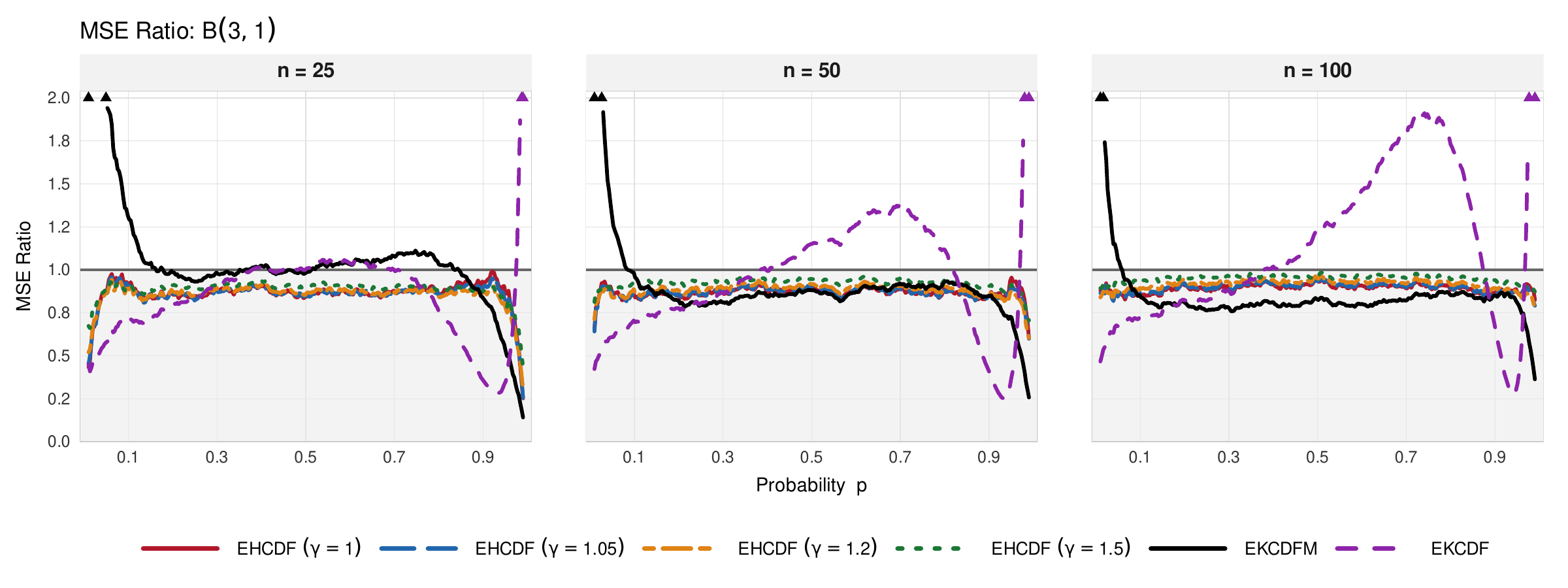}

\caption{
MSE ratio for $\mathcal B(1,1)$, $\mathcal B(2,1)$ and $\mathcal B(3,1)$ (top to bottom). Different lines correspond to different CDF estimators and the dotted horizontal line corresponds to 1.
}
\end{figure}

\begin{figure}[p]
\centering
\includegraphics[height=0.28\textheight]{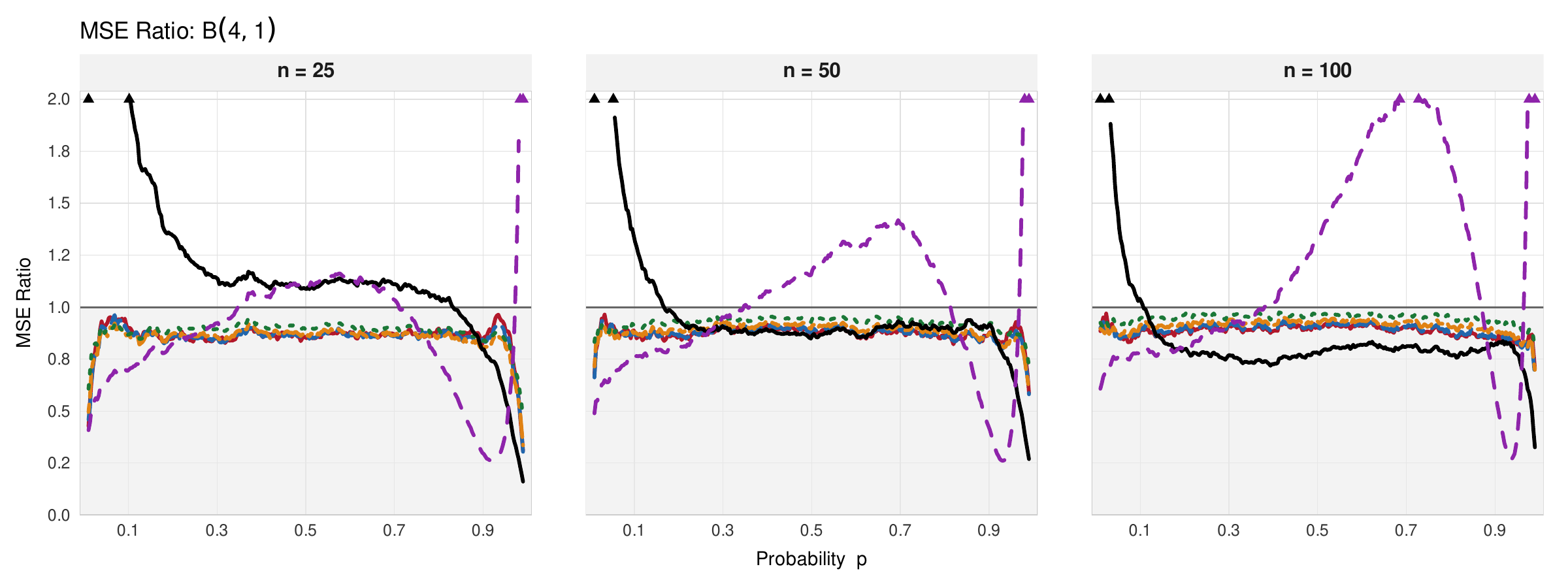}
\includegraphics[height=0.28\textheight]{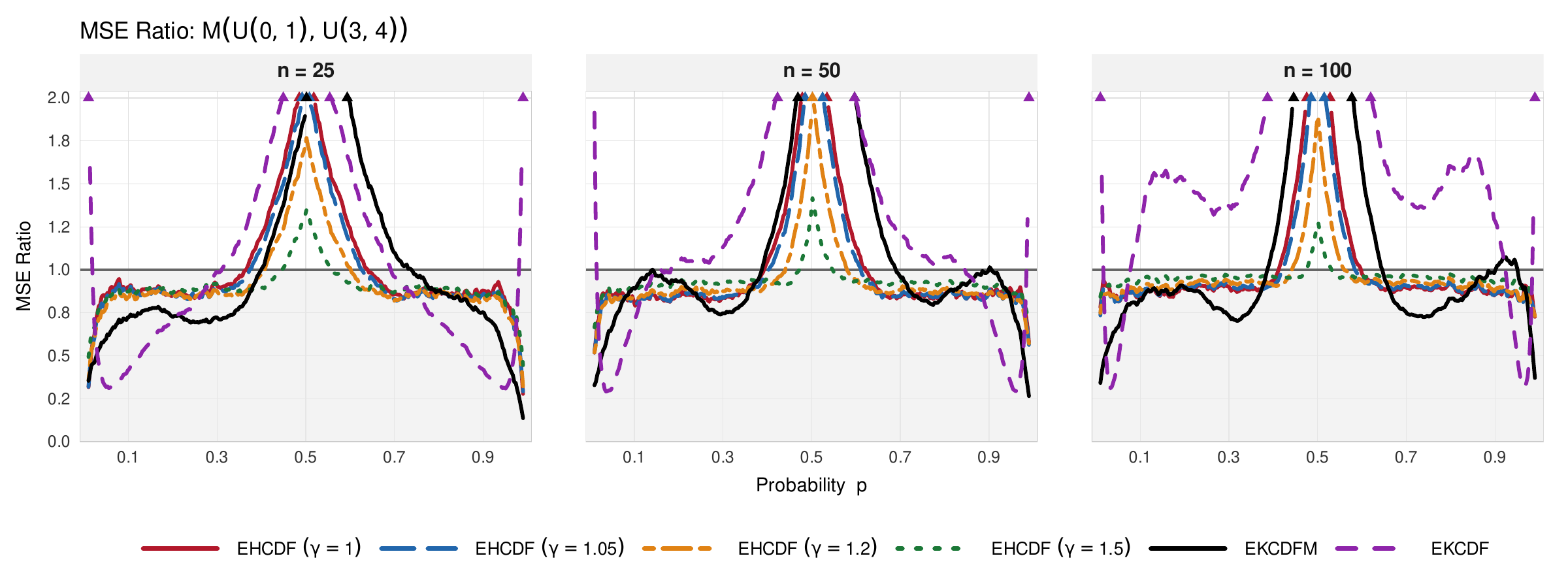}

\caption{
MSE ratio for $\mathcal B(4,1)$ and $\mathcal M(\mathcal U(0,1),\mathcal U(3,4))$. Different lines correspond to different CDF estimators and the dotted horizontal line corresponds to 1.
}
\end{figure}
\newpage
\begin{figure}[p]
\centering
\includegraphics[height=0.28\textheight]{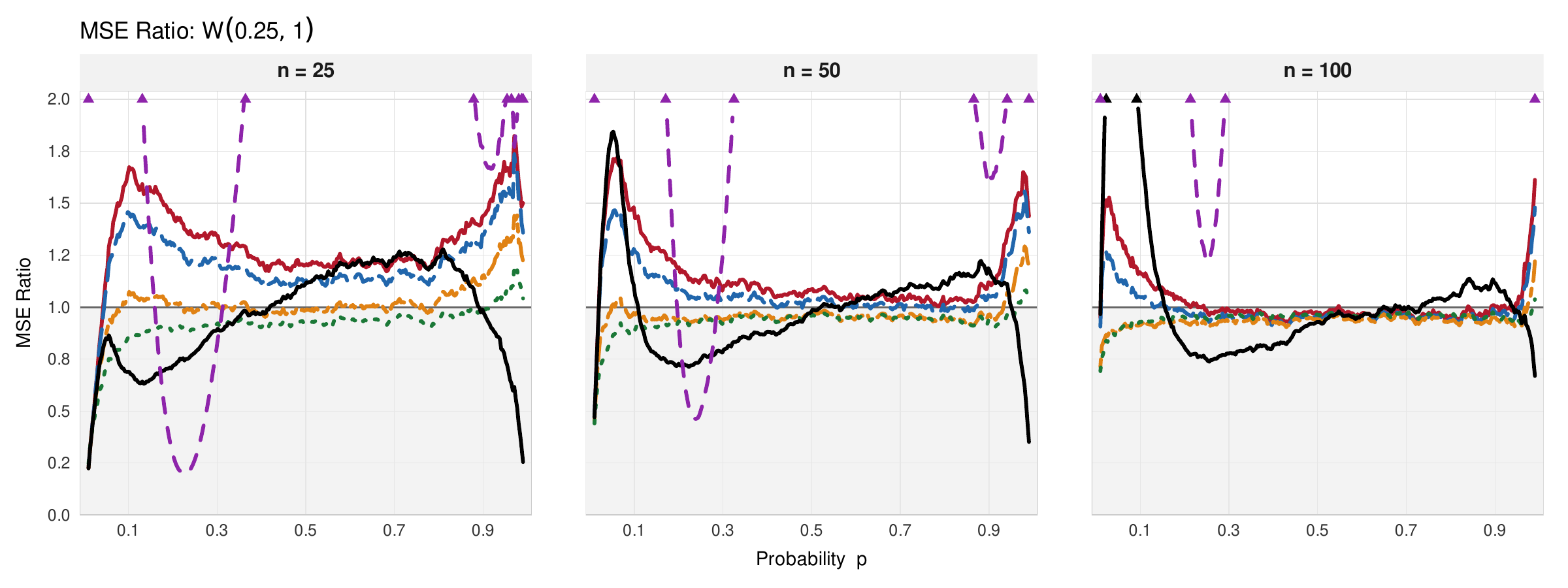}
\includegraphics[height=0.28\textheight]{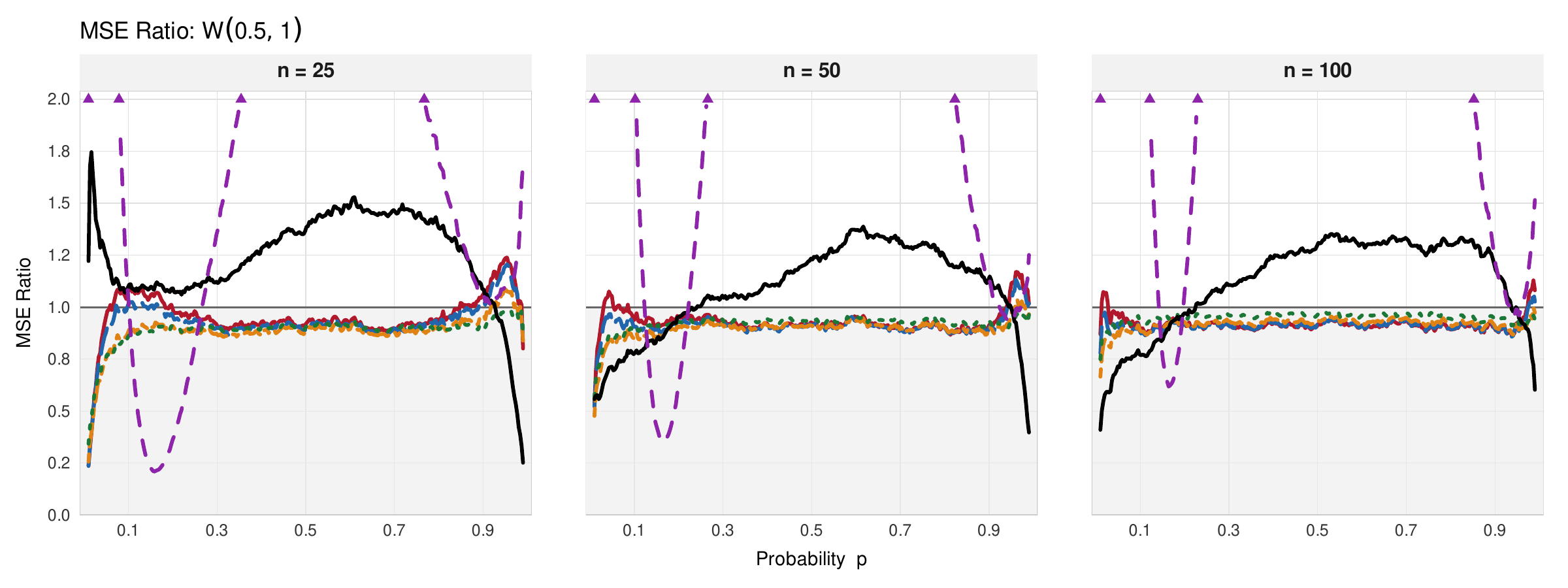}
\includegraphics[height=0.28\textheight]{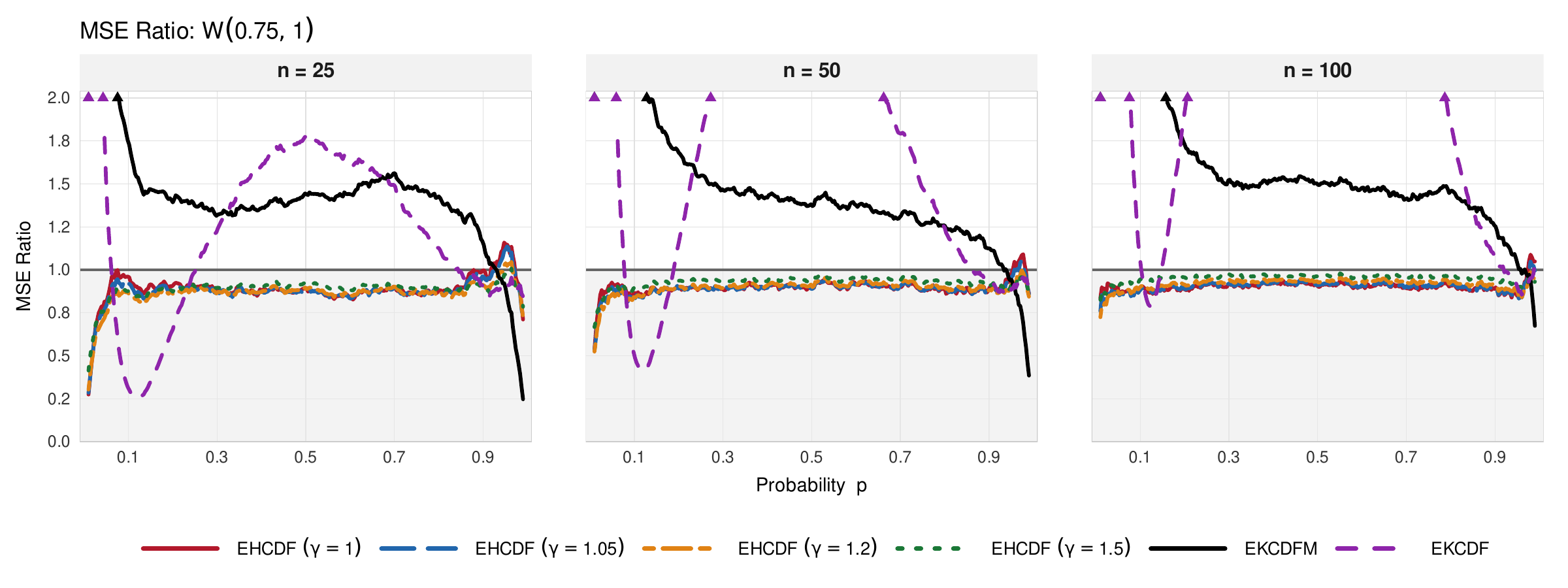}

\caption{
MSE ratio for $\mathcal W(1,0.25)$, $\mathcal W(1,0.5)$ and $\mathcal W(1,0.75)$ (top to bottom). Different lines correspond to different CDF estimators and the dotted horizontal line corresponds to 1.
}
\end{figure}

\begin{figure}[p]
\centering
\includegraphics[height=0.28\textheight]{figures/MSEratio_Weib1_nolegend.pdf}
\includegraphics[height=0.28\textheight]{figures/MSEratio_Weib2_nolegend.pdf}
\includegraphics[height=0.28\textheight]{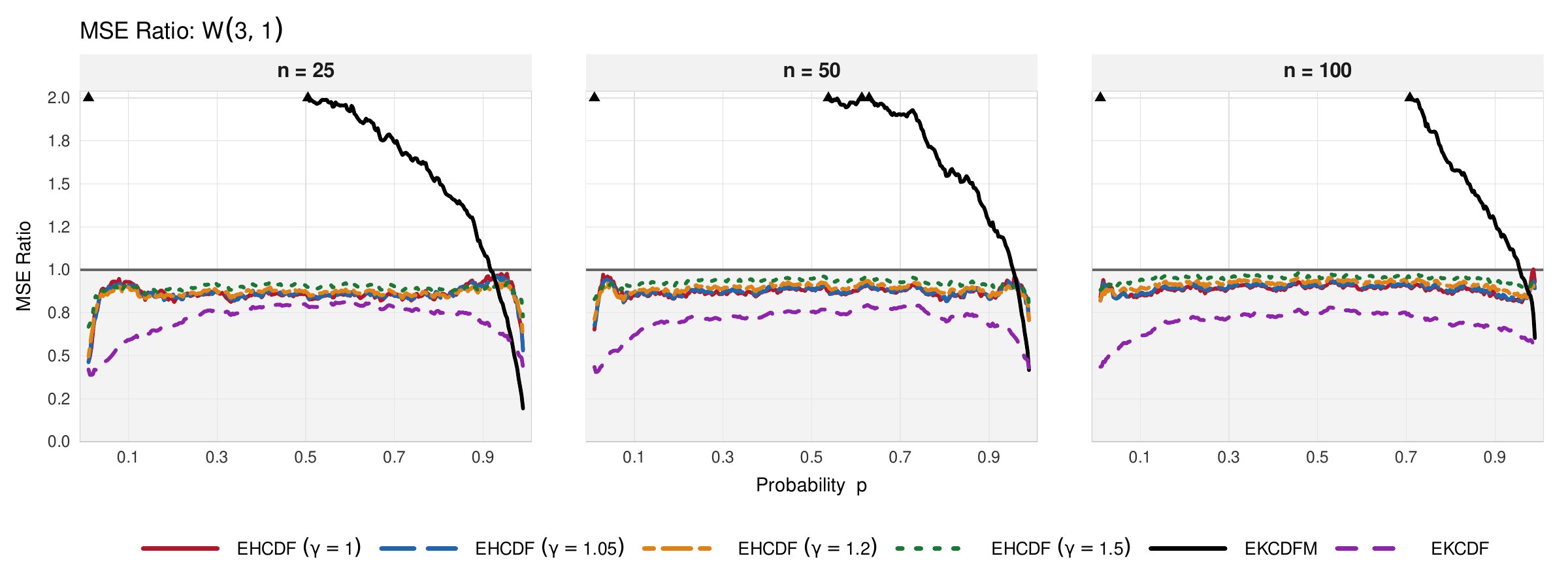}

\caption{
MSE ratio for $\mathcal W(1,1)$, $\mathcal W(1,2)$ and $\mathcal W(1,3)$ (top to bottom). Different lines correspond to different CDF estimators and the dotted horizontal line corresponds to 1.
}
\end{figure}

\begin{figure}[p]
\centering
\includegraphics[height=0.28\textheight]{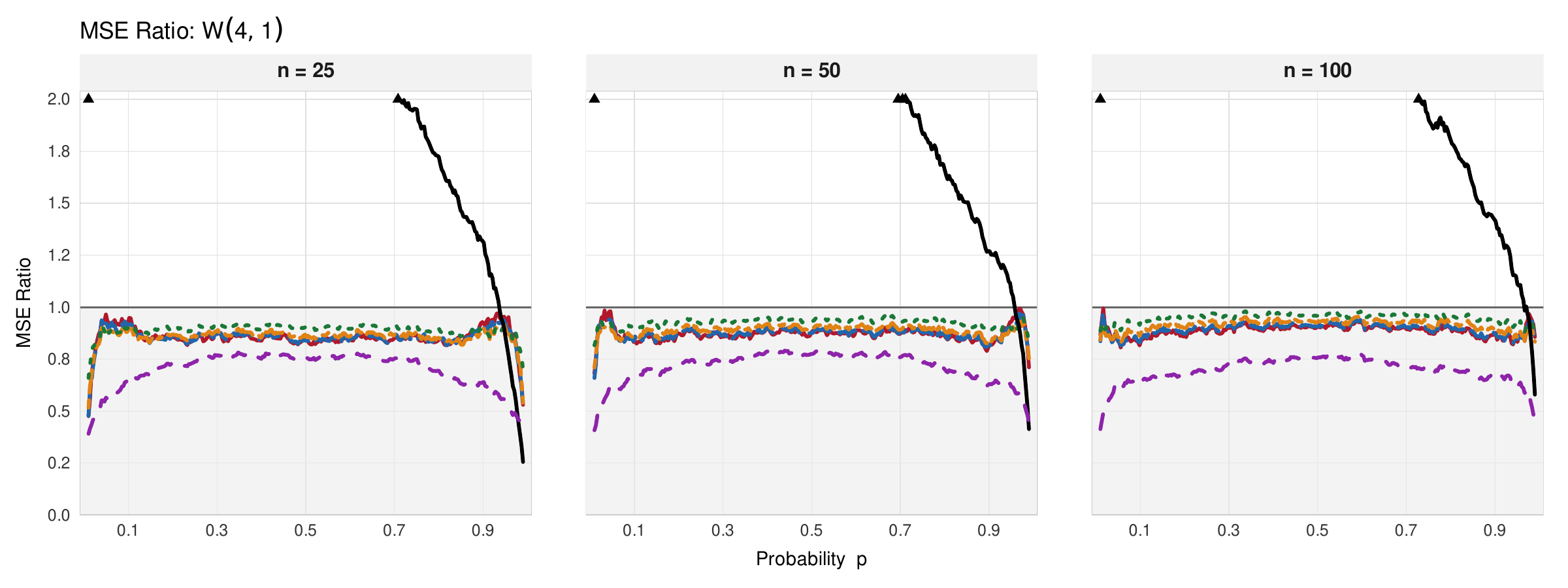}
\includegraphics[height=0.28\textheight]{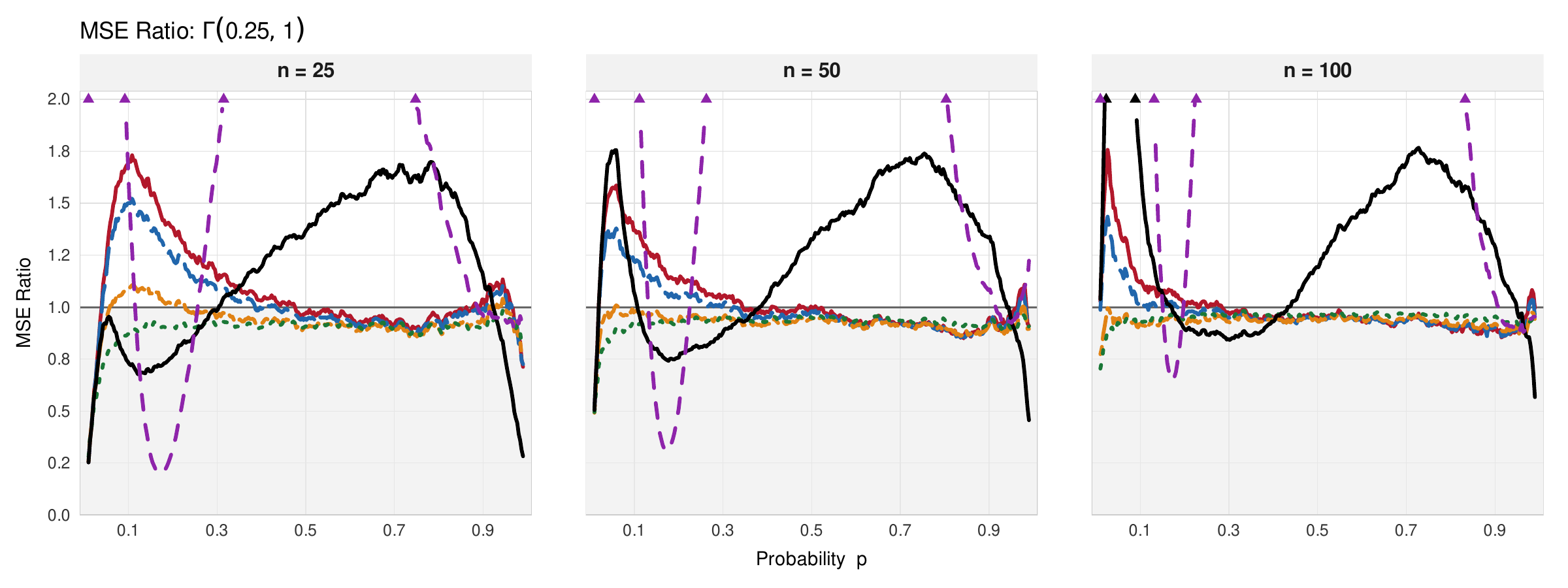}
\includegraphics[height=0.28\textheight]{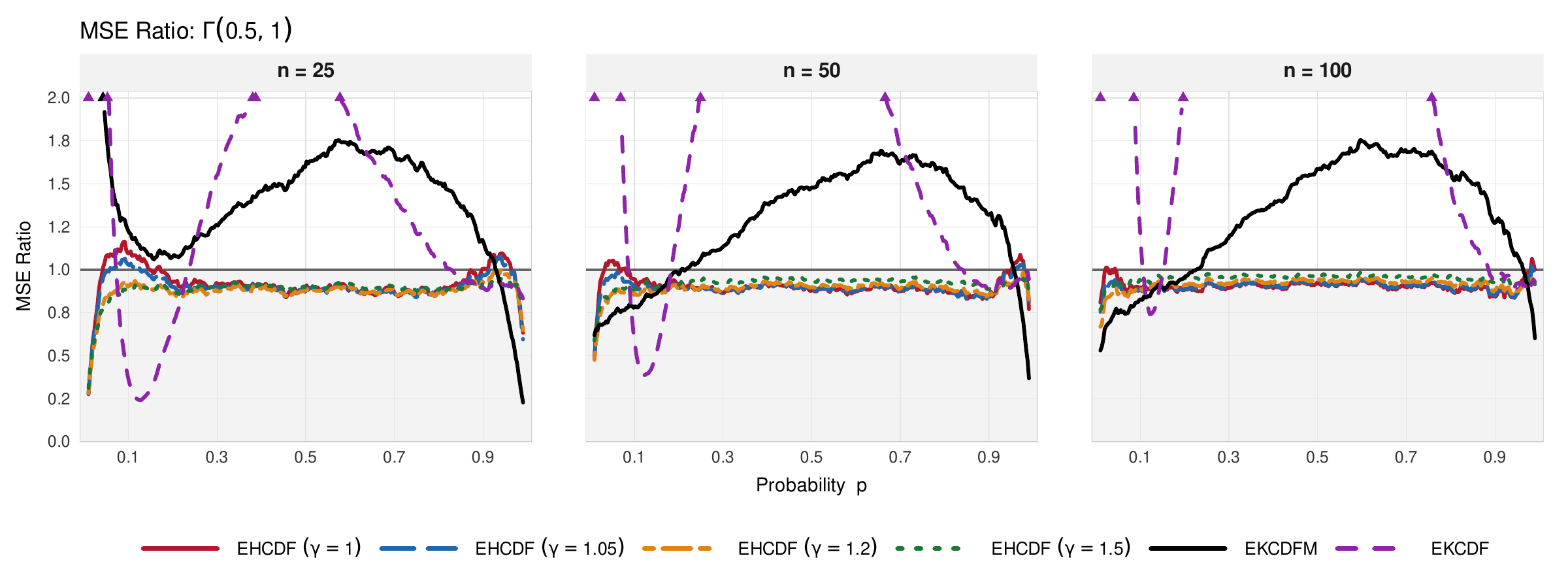}

\caption{
MSE ratio for $\mathcal W(1,4)$, $\Gamma(0.25,1)$ and $\Gamma(0.5,1)$ (top to bottom). Different lines correspond to different CDF estimators and the dotted horizontal line corresponds to 1.
}
\end{figure}

\begin{figure}[p]
\centering
\includegraphics[height=0.28\textheight]{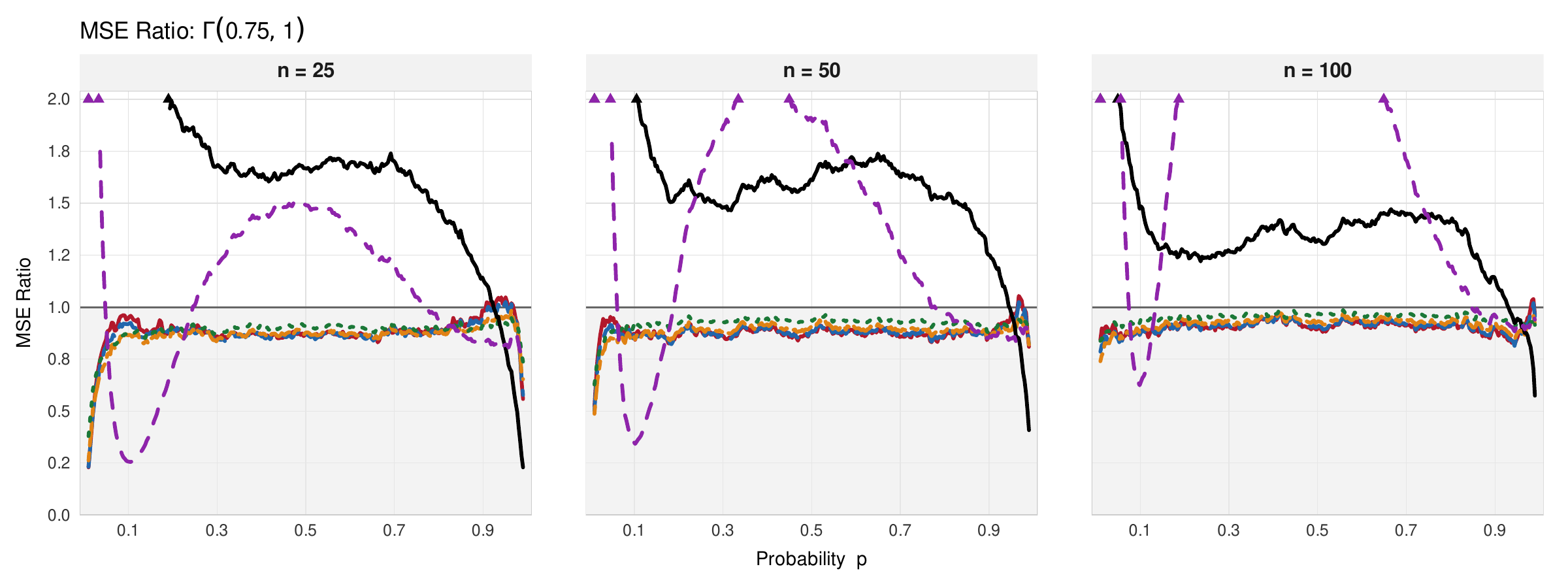}
\includegraphics[height=0.28\textheight]{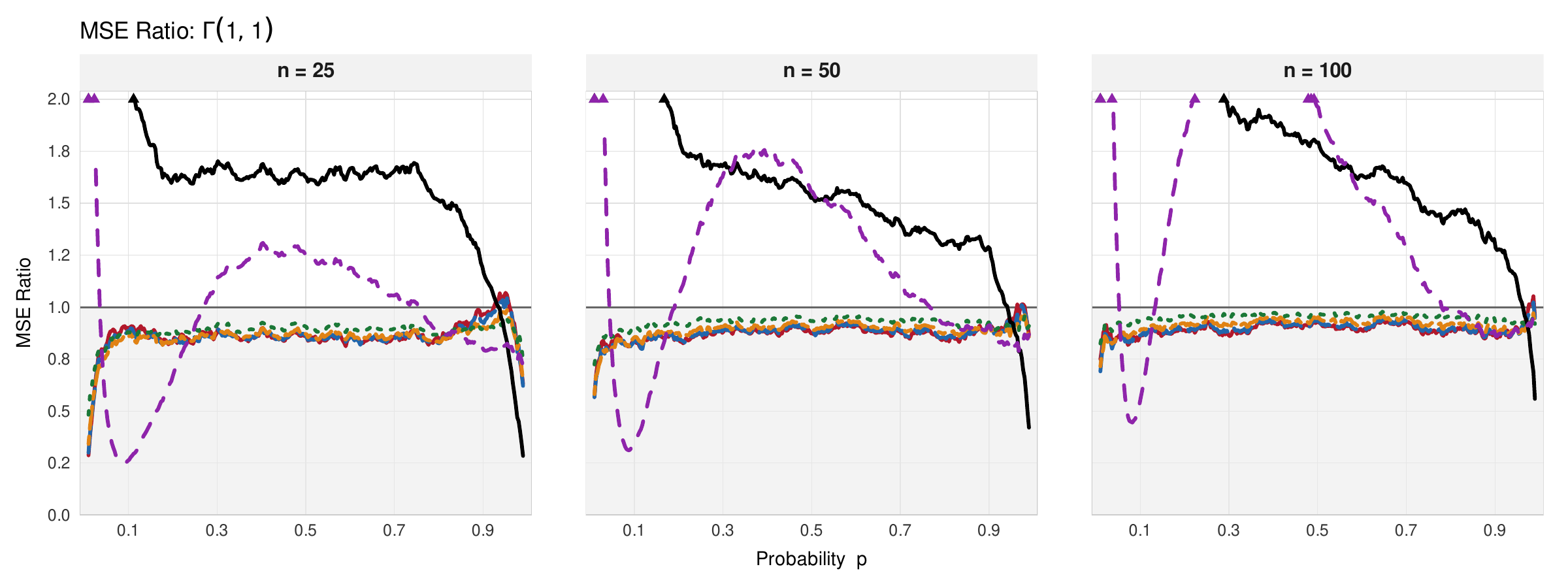}
\includegraphics[height=0.28\textheight]{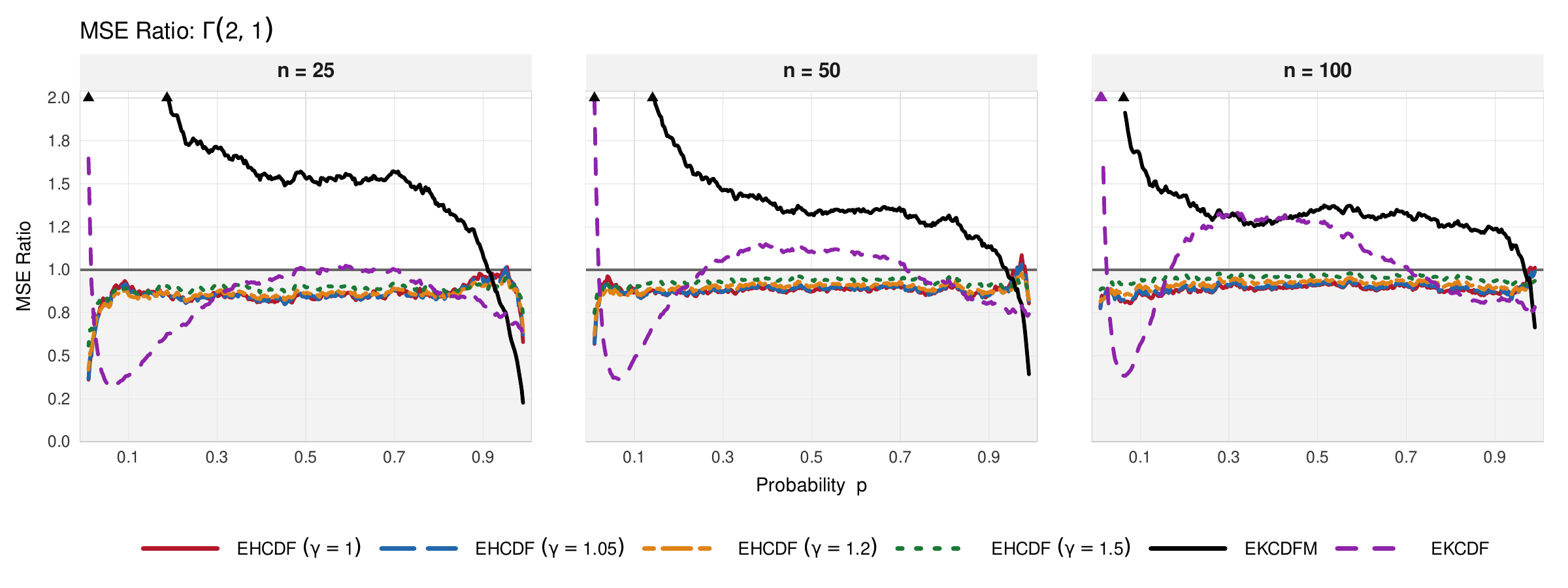}

\caption{
MSE ratio for $\Gamma(0.75,1)$, $\Gamma(1,1)$ and $\Gamma(2,1)$ (top to bottom). Different lines correspond to different CDF estimators and the dotted horizontal line corresponds to 1.
}
\end{figure}

\begin{figure}[p]
\centering
\includegraphics[height=0.28\textheight]{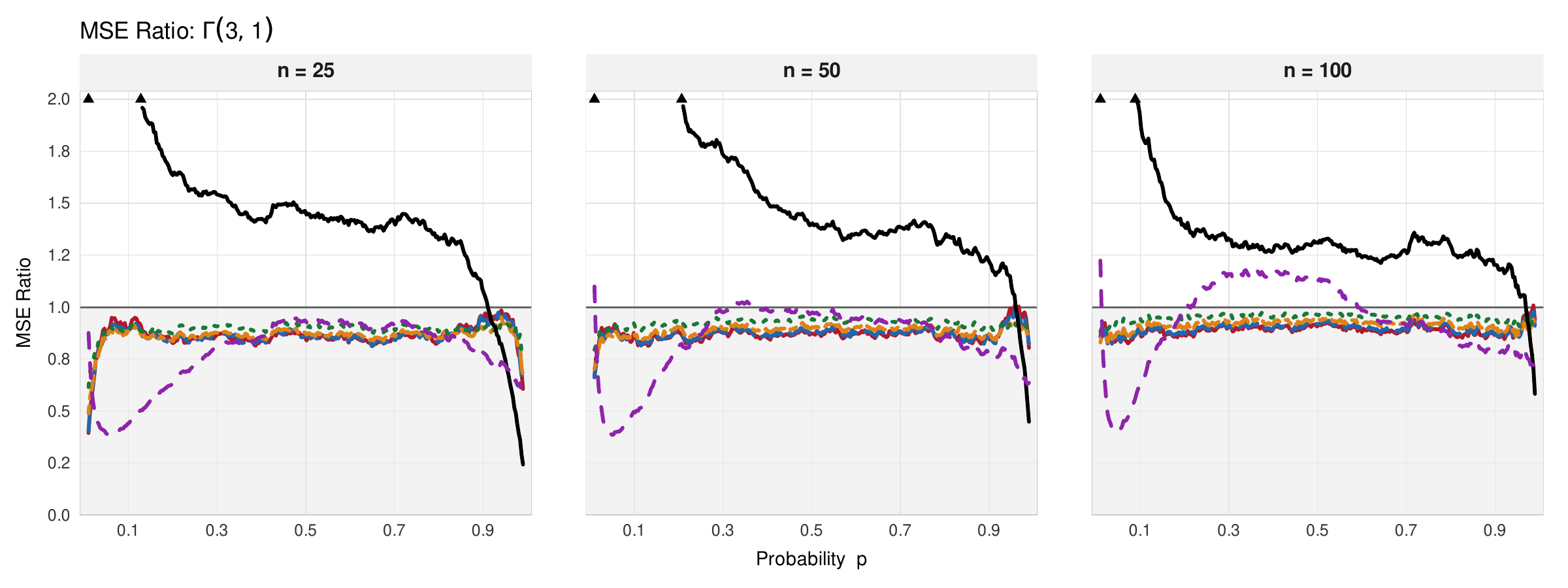}
\includegraphics[height=0.28\textheight]{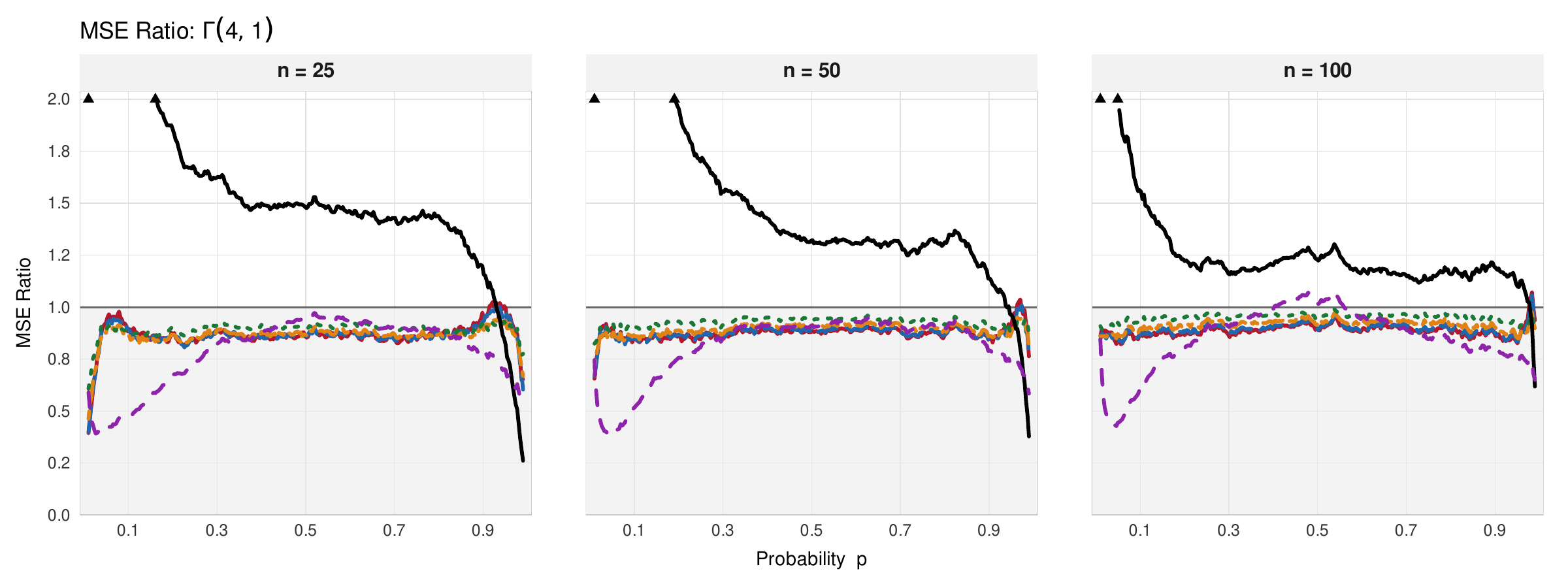}
\includegraphics[height=0.28\textheight]{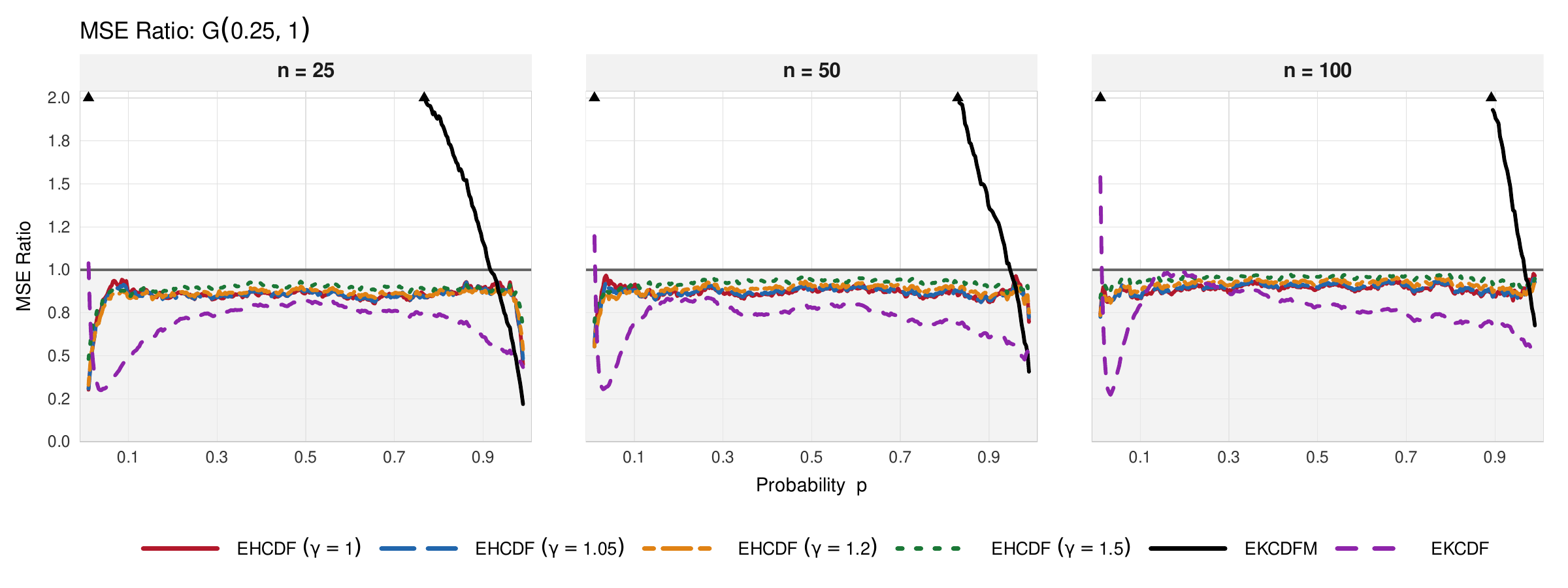}

\caption{
MSE ratio for $\Gamma(3,1)$, $\Gamma(4,1)$ and $\mathcal G(0.25,1)$ (top to bottom). Different lines correspond to different CDF estimators and the dotted horizontal line corresponds to 1.
}
\end{figure}

\begin{figure}[p]
\centering
\includegraphics[height=0.28\textheight]{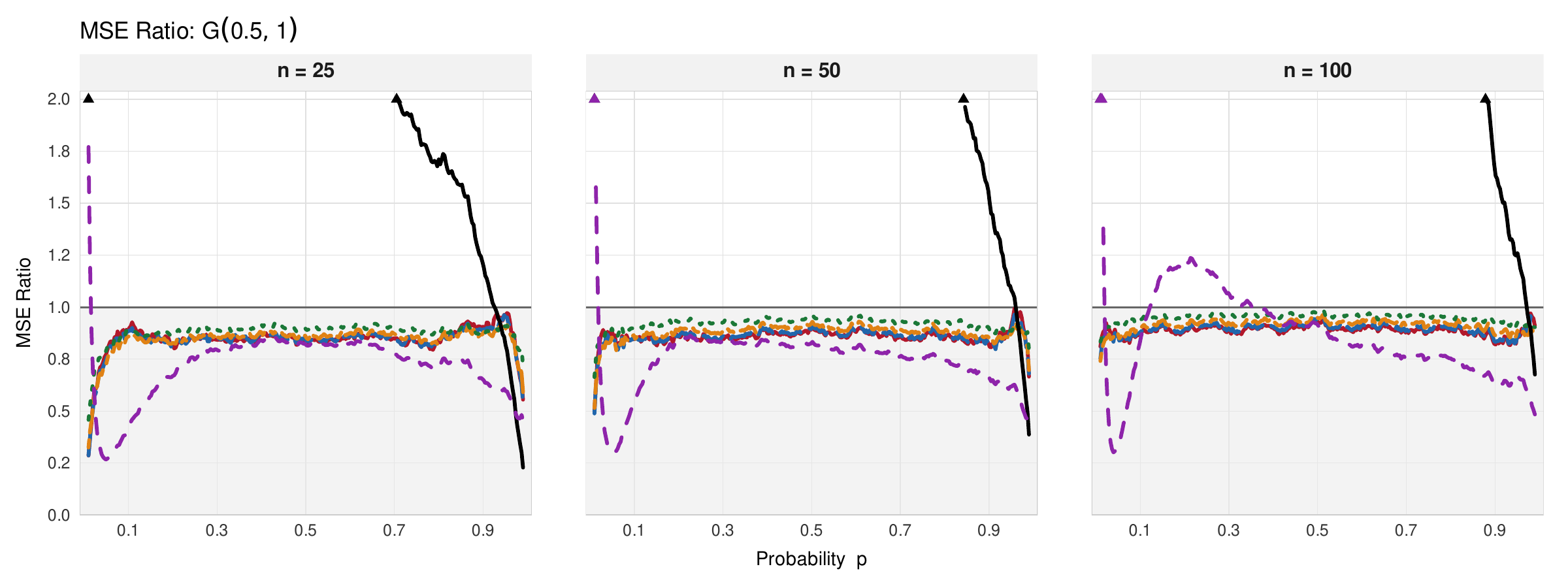}
\includegraphics[height=0.28\textheight]{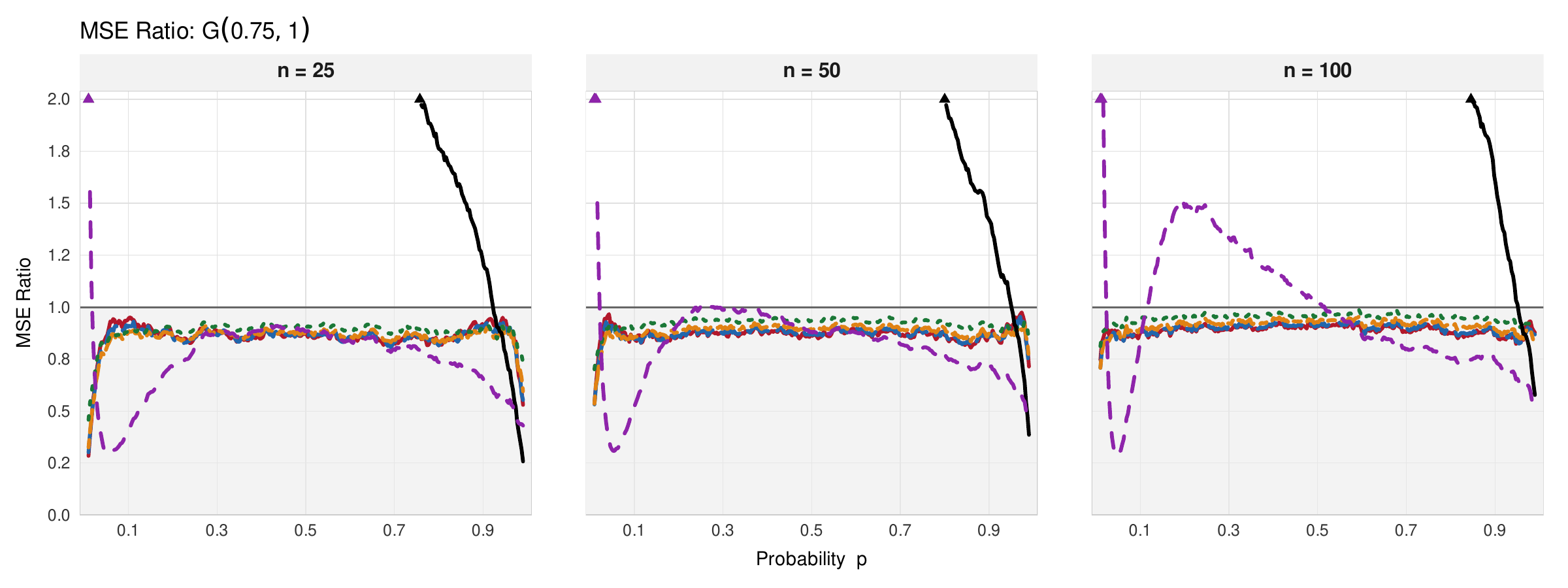}
\includegraphics[height=0.28\textheight]{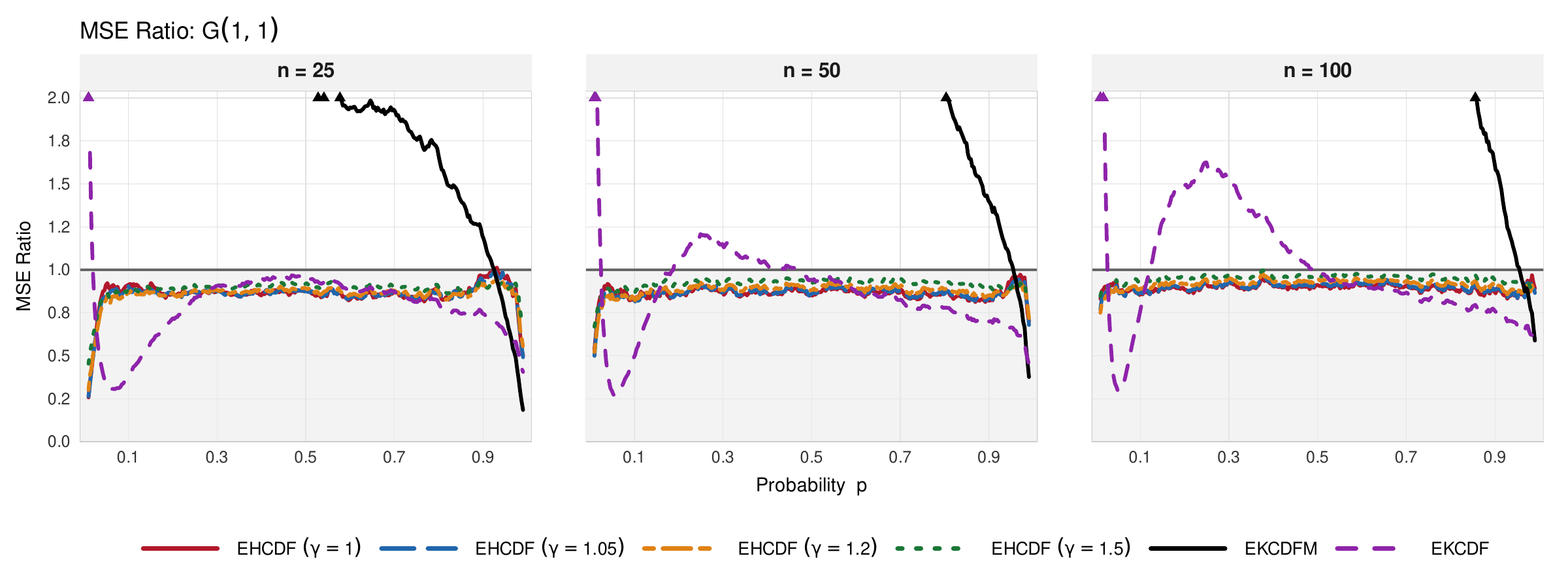}

\caption{
MSE ratio for $\mathcal G(0.5,1)$, $\mathcal G(0.75,1)$ and $\mathcal G(1,1)$ (top to bottom). Different lines correspond to different CDF estimators and the dotted horizontal line corresponds to 1.
}
\end{figure}

\begin{figure}[p]
\centering
\includegraphics[height=0.28\textheight]{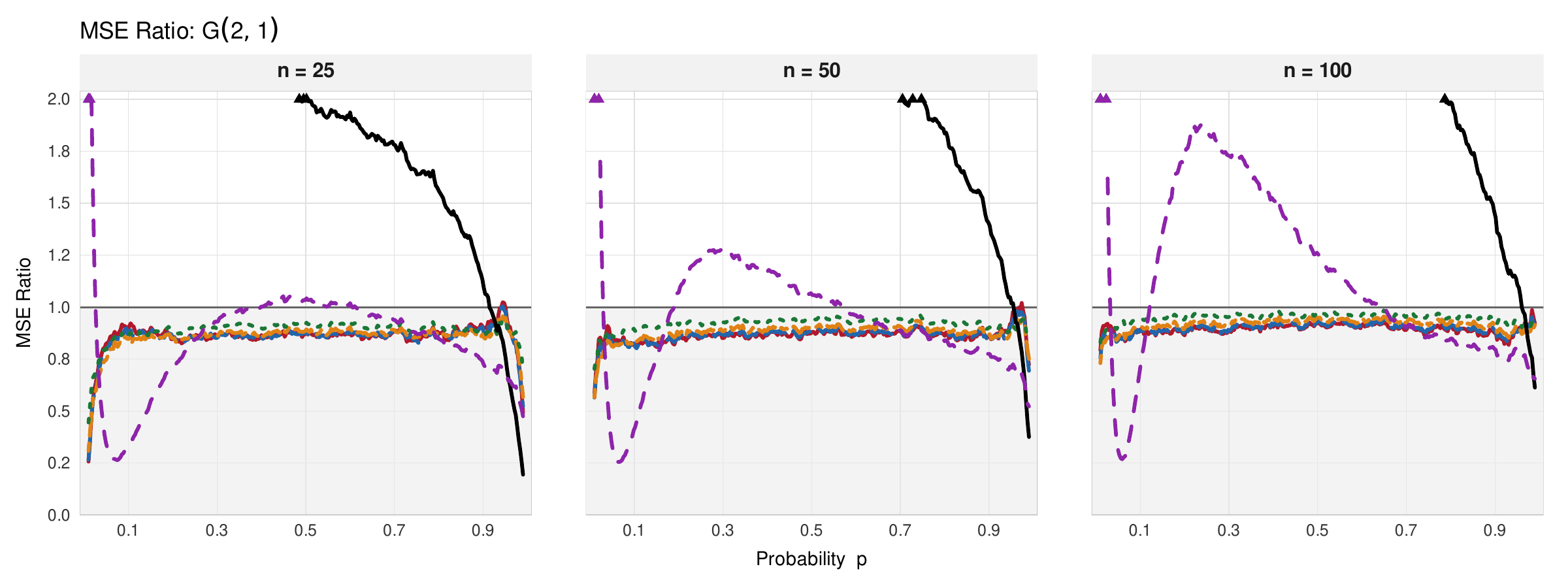}
\includegraphics[height=0.28\textheight]{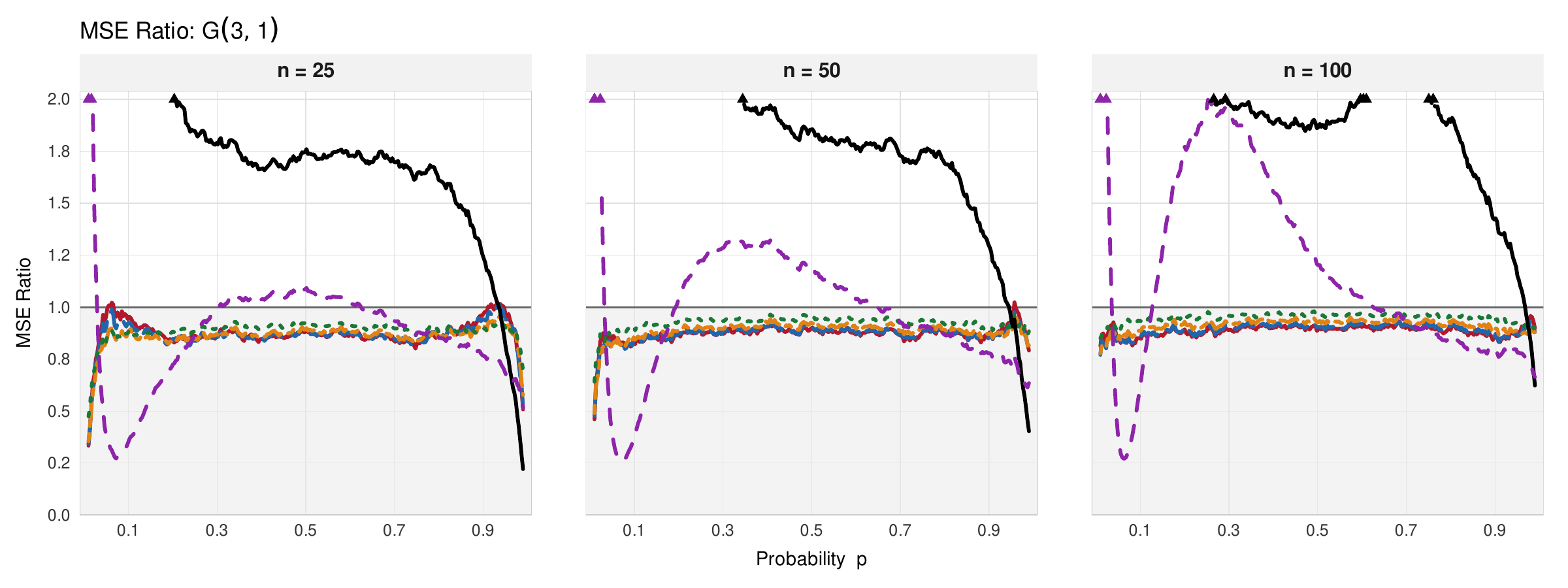}
\includegraphics[height=0.28\textheight]{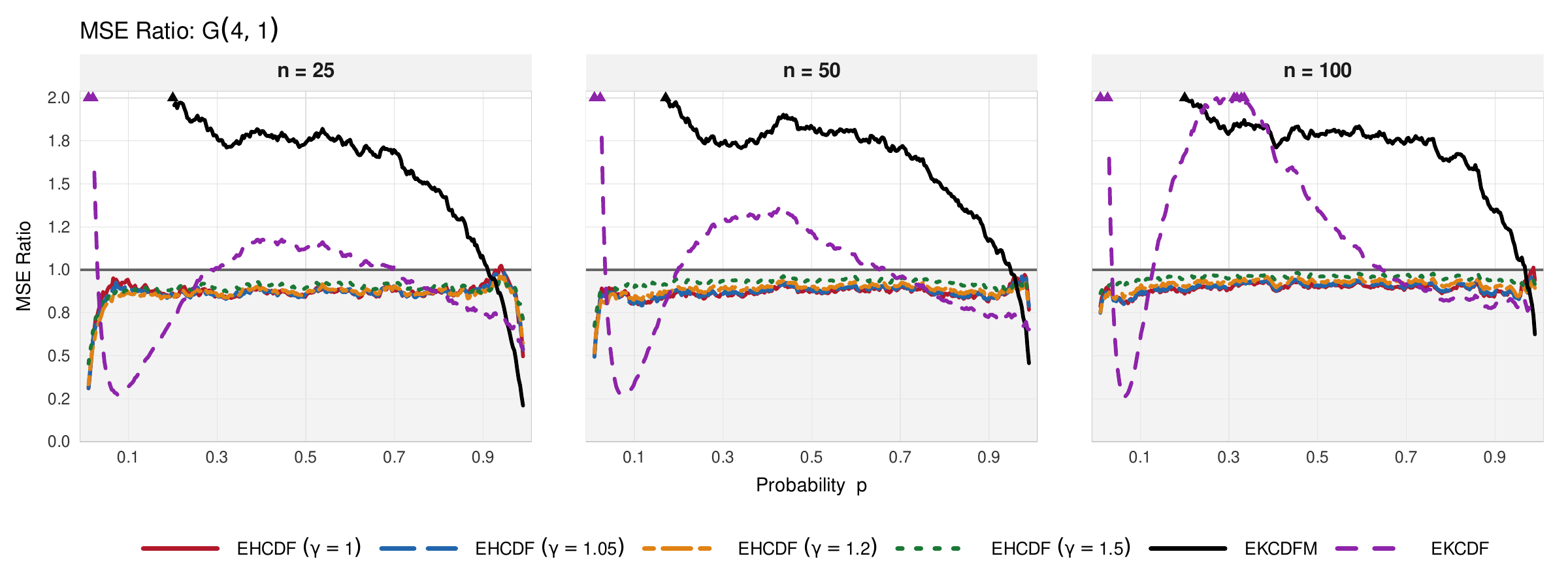}

\caption{
MSE ratio for $\mathcal G(2,1)$, $\mathcal G(3,1)$ and $\mathcal G(4,1)$ (top to bottom). Different lines correspond to different CDF estimators and the dotted horizontal line corresponds to 1.
}
\end{figure}

\begin{figure}[p]
\centering

\includegraphics[height=0.28\textheight]{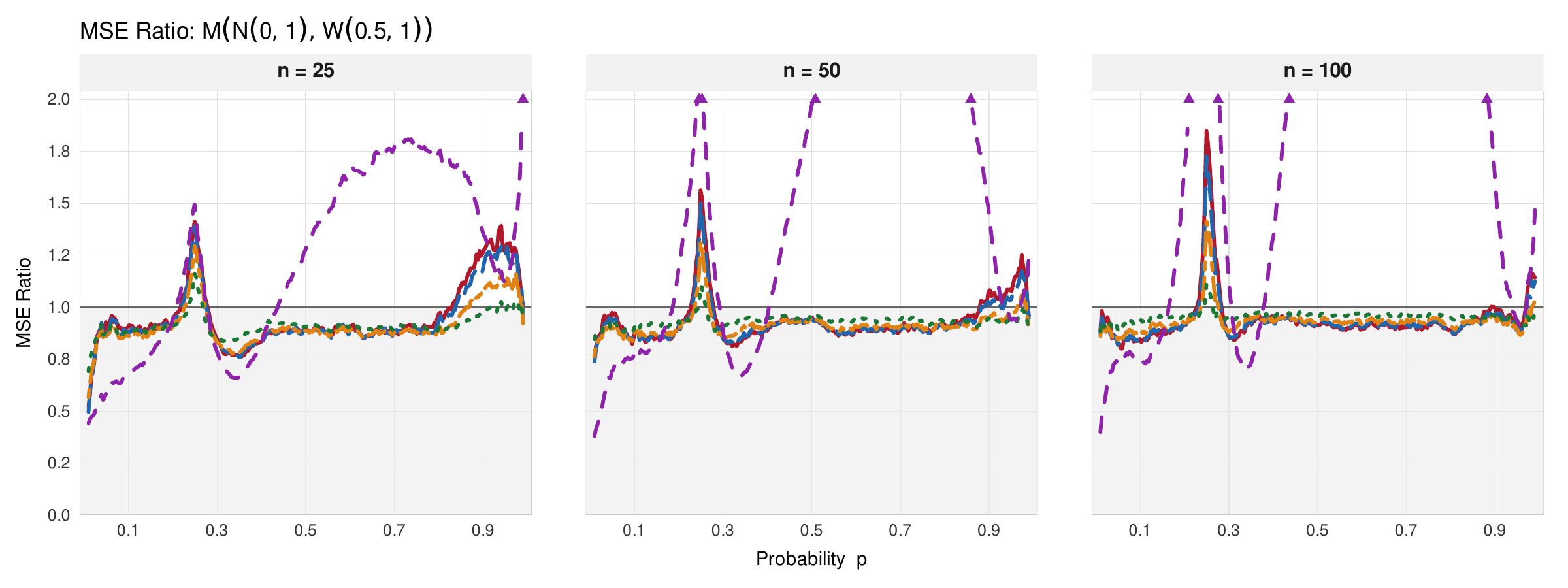}
\includegraphics[height=0.28\textheight]{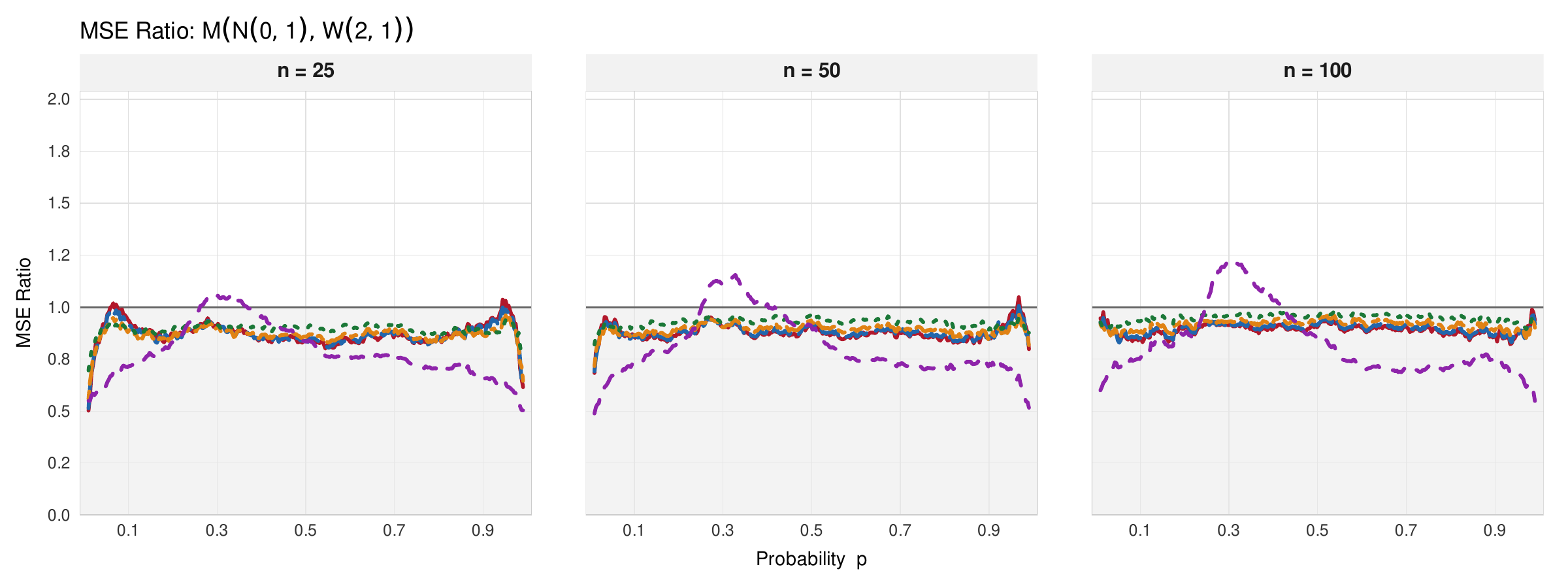}
\includegraphics[height=0.28\textheight]{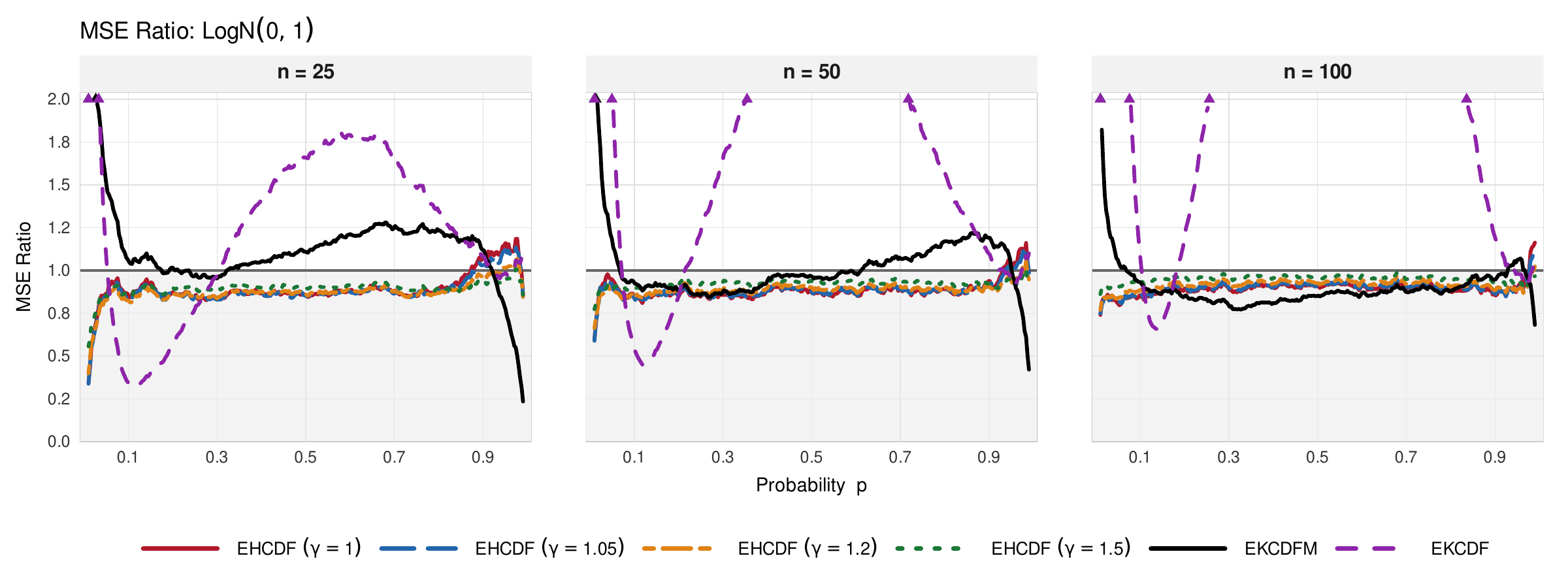}
\caption{
MSE ratio for $\mathcal M(\mathcal N(0,1),\mathcal W(1,2))$, $\mathcal M(\mathcal N(0,1),\mathcal W(1,0.5))$ and $\mathrm{LogN}(0,1)$ (top to bottom). Different lines correspond to different CDF estimators and the dotted horizontal line corresponds to 1.
}
\end{figure}

\begin{figure}[p]
\centering
\includegraphics[height=0.28\textheight]{figures/MSEratio_Norm01_nolegend.pdf}
\includegraphics[height=0.28\textheight]{figures/MSEratio_t2_nolegend.pdf}
\includegraphics[height=0.28\textheight]{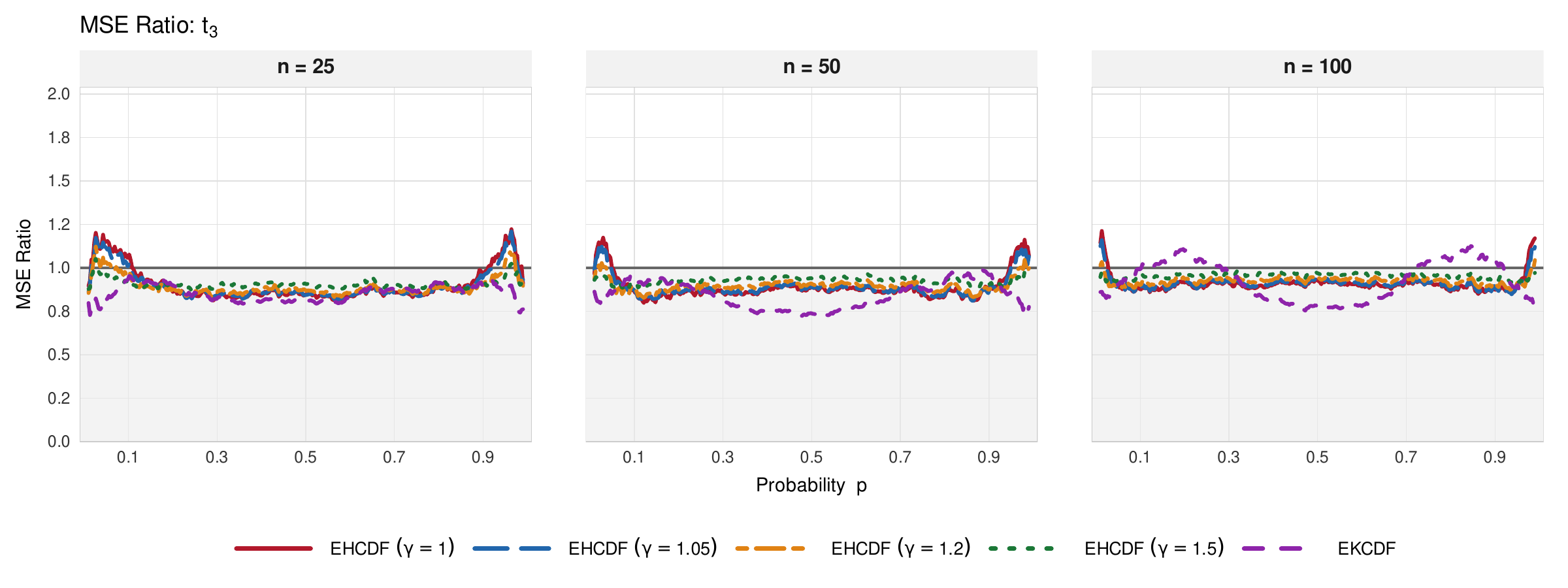}

\caption{
MSE ratio for $\mathcal N(0,1)$ and Student's $t$ distributions with 2 and 3 degrees of freedom (top to bottom). Different lines correspond to different CDF estimators and the dotted horizontal line corresponds to 1.
}
\end{figure}

\begin{figure}[p]\label{finalplot}
\centering
\includegraphics[height=0.28\textheight]{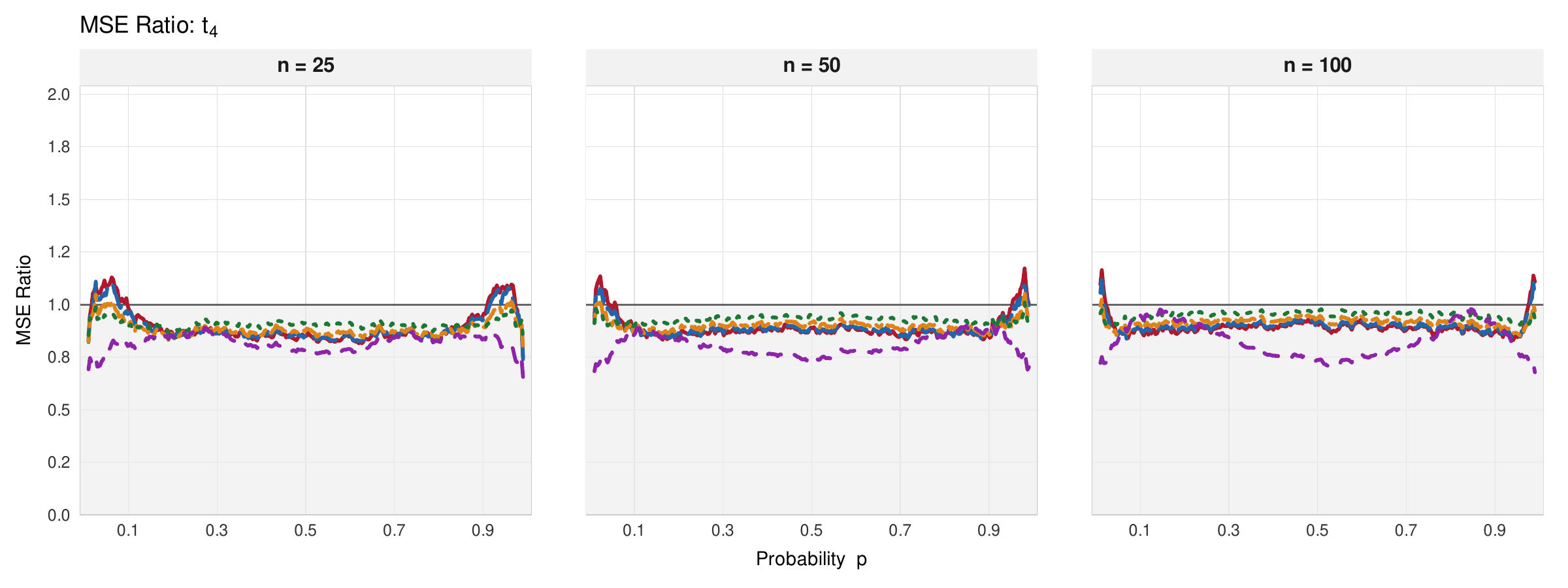}
\includegraphics[height=0.28\textheight]{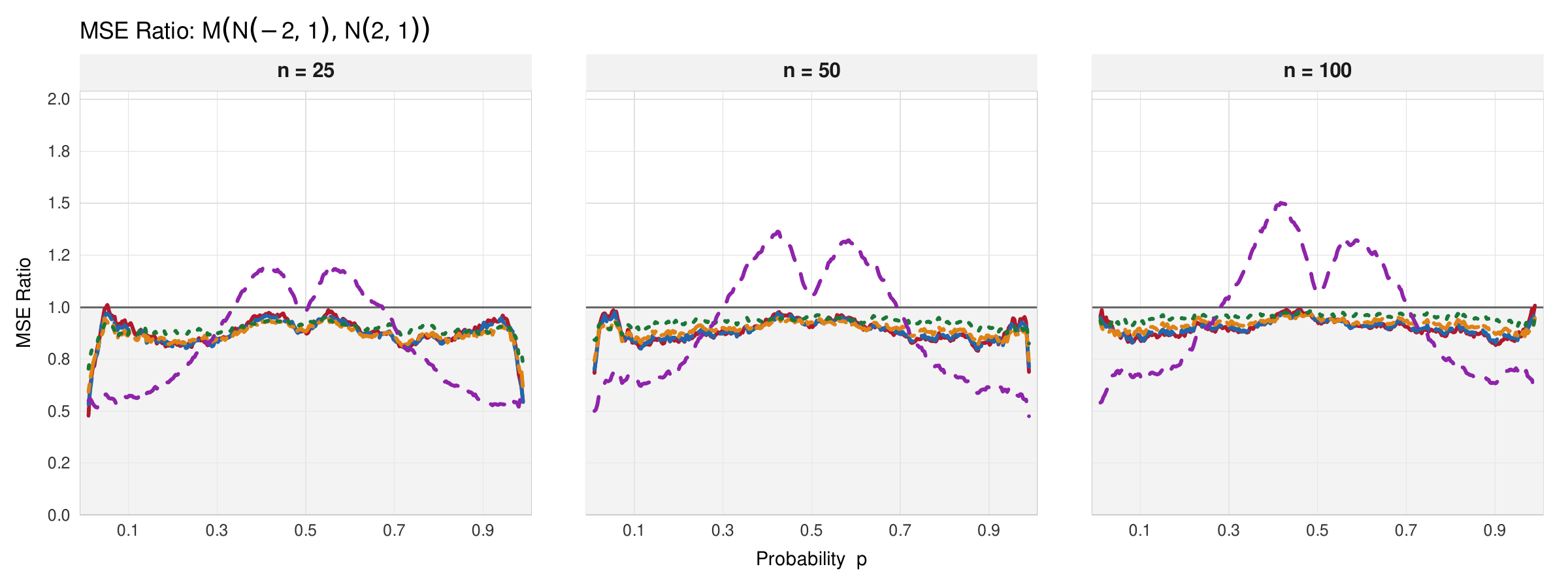}
\includegraphics[height=0.28\textheight]{figures/MSEratio_Bimodal_5_5_legend.pdf}

\caption{
MSE ratio for Student's $t$ (4 d.f.) and bimodal mixtures $\mathcal M(\mathcal N(-2,1),\mathcal N(2,1))$, $\mathcal M(\mathcal N(-5,1),\mathcal N(5,1))$ (top to bottom). Different lines correspond to different CDF estimators and the dotted horizontal line corresponds to 1.
}
\end{figure}

\begin{figure}[p]
\centering
\includegraphics[height=0.28\textheight]{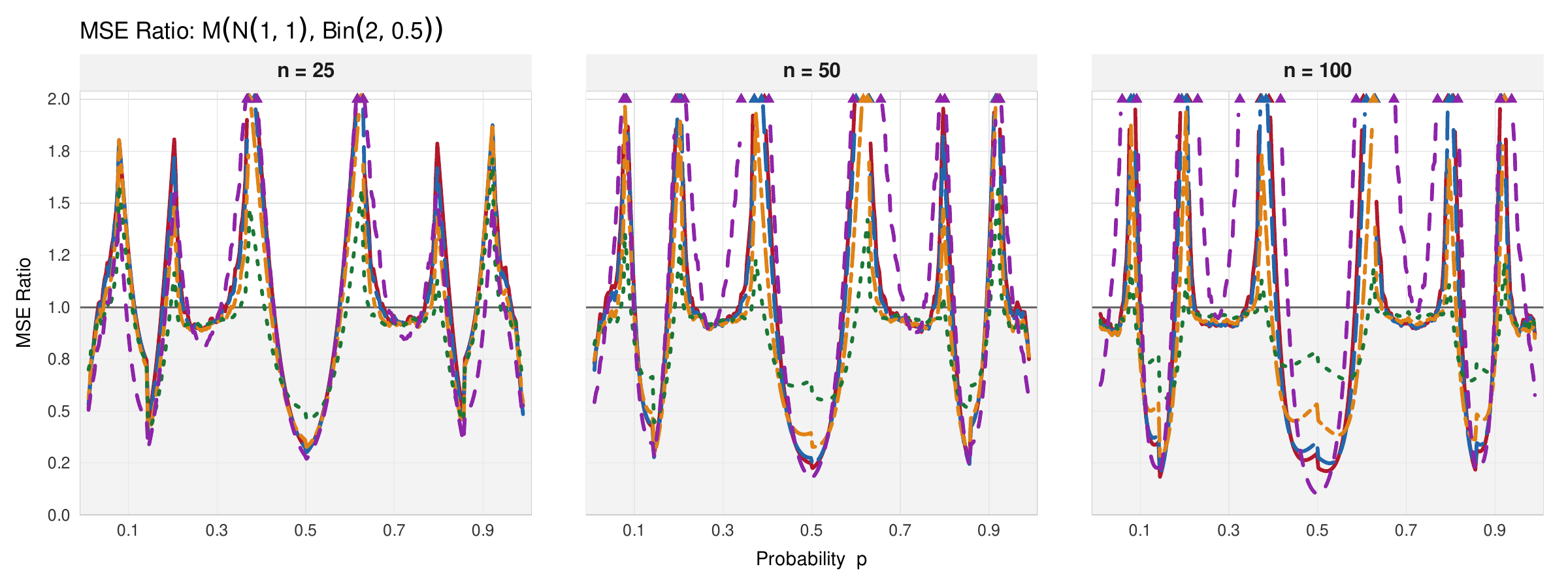}
\includegraphics[height=0.28\textheight]{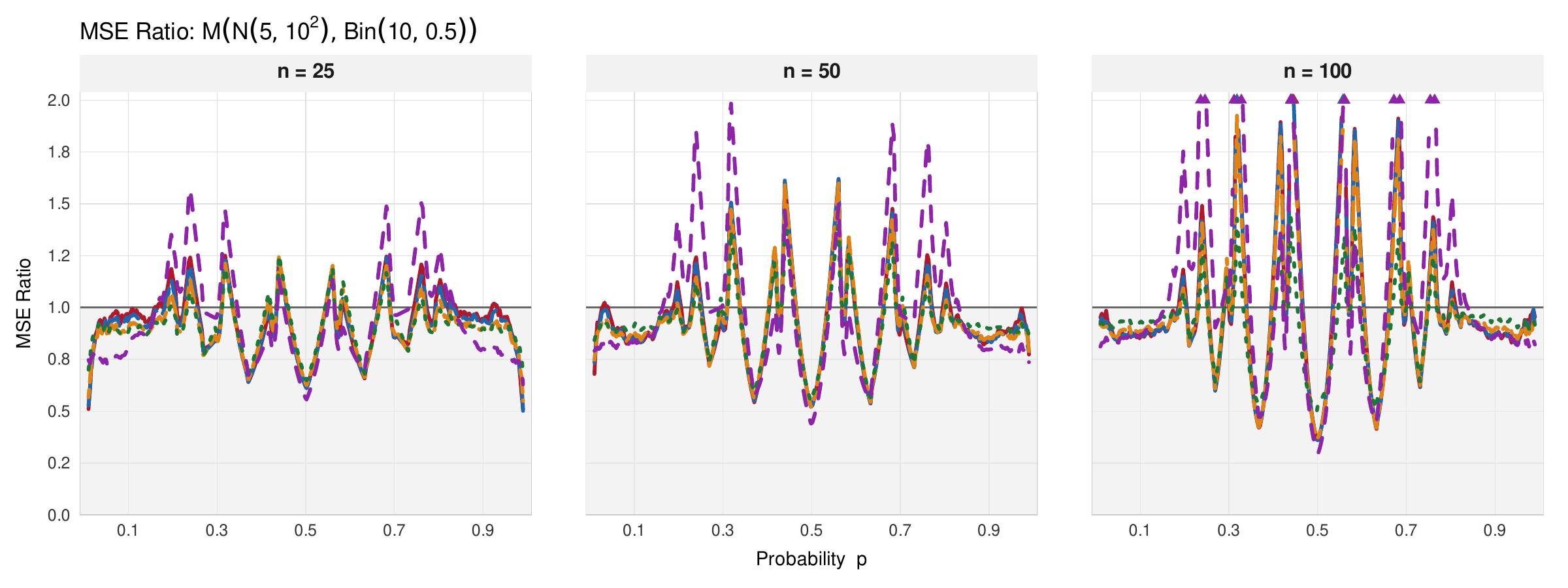}
\includegraphics[height=0.28\textheight]{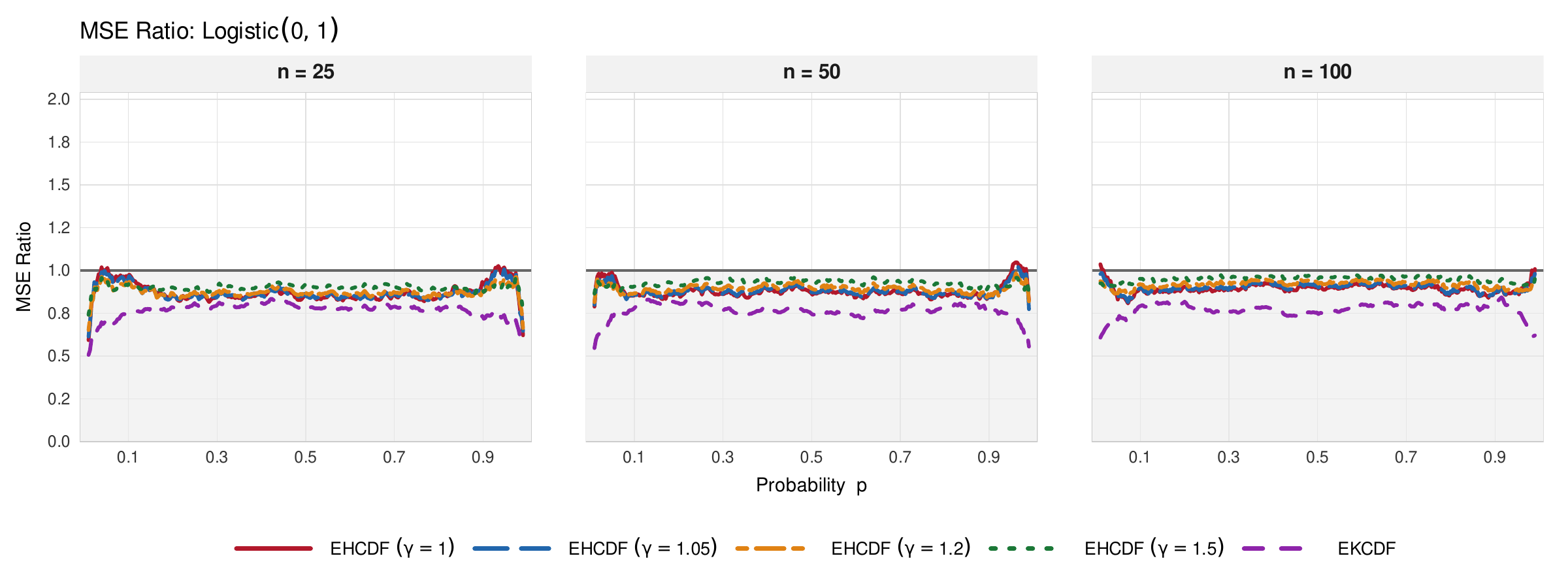}

\caption{
MSE ratio for $\mathcal M(\mathcal N(1,1),\mathrm{Bin}(2,0.5))$, $\mathcal M(\mathcal N(5,10^2),\mathrm{Bin}(10,0.5))$ and $\text{Logistic}(0,1)$ (top to bottom; last plot shows legend). Different lines correspond to different CDF estimators and the dotted horizontal line corresponds to 1.
}
\end{figure}

\end{document}